\DeclareMathOperator{\tr}{tr}
\DeclareMathOperator{\sign}{sign}
\DeclareMathOperator{\p}{p}
\DeclareMathOperator{\E}{E}
\DeclareMathOperator{\diag}{diag}
\newcommand{\ket}[1]{\left| #1 \right\rangle}
\newcommand{\bra}[1]{\left\langle #1 \right|}
\newcommand{\ketbra}[2]{\left| #1 \right\rangle \left\langle #2 \right|}
\newcommand{\ketbraa}[1]{\left| #1 \right\rangle \left\langle #1 \right|}
\newcommand{\kett}[2]{\left| #1 #2 \right\rangle}
\newcommand{\braa}[2]{\left\langle #1 #2\right|}
\newtheorem{coro}{Corollary}
\newtheorem{notacao}{Notation}
\newtheorem{prop}{Proposition}[section]
\newtheorem{exemplo}{Example}
\newtheorem{definicao}{Definition}[chapter]
\newtheorem{teo}{Theorem}
\newtheorem{lema}{Lemma}
\newtheorem{proto}{Protocol}
\begin{document}

\onehalfspacing

\thispagestyle{empty}
\begin{titlepage}

\begin{center}
{\Large
Universidade Federal do Rio Grande do Sul \\[2mm]
Instituto de Matem\'atica}
\end{center}

\vfill

\begin{center}

\renewcommand{\thefootnote}{\fnsymbol{footnote}}
\setcounter{footnote}{0}
{ \LARGE {\bf Local models and hidden nonlocality in Quantum Theory} }
\vspace{1cm} \\
{ \Large
Leonardo Guerini de Souza}

\end{center}

\vfill

\hfill
\parbox{.55\textwidth}{{\sl Master's thesis written under the supervision of Alexandre Tavares Baraviera,
presented to Programa de P\'os-Gradua\c{c}\~ao em Matem\'atica - UFRGS
as a partial requirement for the title of Master in Mathematics.} \/ }
\vfill
\begin{center}
{\it Porto Alegre, March of 2014. \/}
\end{center}
\end{titlepage}
\renewcommand{\thefootnote}{\arabic{footnote}}

\thispagestyle{empty}
\par Master's thesis submitted by Leonardo Guerini de Souza\footnote{Supported by Conselho Nacional de Desenvolvimento Cient\'ifico e Tecnol\'ogico (CNPq).}\footnote{E-mail adress: guerini.leonardo@gmail.com} to Programa de P\'os-Gradua\c{c}\~ao em Matem\'atica of Universidade Federal do Rio Grande do Sul, as a partial requirement for the title of Master in Mathematics.
\vspace{16pt}
\par Date: April 10, 2014.
\vspace{16pt}
\par Supervisor:

Dr. Alexandre Tavares Baraviera

\vspace{16pt}
\par Examiners:

Dr. Carlos Felipe Lardizabal Rodrigues (IM-UFRGS)

Dr. Marcelo de Oliveira Terra Cunha (DMAT-UFMG)

Dra. Sandra Denise Prado (IF-UFRGS)

\thispagestyle{empty}
\noindent
{\Large \bf Abstract}

\indent

\noindent

This Master's thesis has two central subjects: the simulation of correlations generated by local measurements on entangled quantum states by local hidden-variables models and the revelation of hidden nonlocality. We present and detail the Werner's local model and the hidden nonlocality of some Werner states of dimension $d\geq5$, the Gisin-Degorre's local model for a Werner state of dimension $d=2$ and the local model of Hirsch \emph{et al.} for mixtures of the singlet state and noise, all of them for projective measurements. Finally, we introduce the local model for POVMs of Hirsch \emph{et al.} for a state constructed upon the singlet with noise, that still violates the CHSH inequality after local filters are applied, hence presenting the so-called genuine hidden nonlocality.

\setcounter{page}{1}
\pagestyle{plain}

\tableofcontents

\newpage

\addcontentsline{toc}{chapter}{Introduction}
\chapter*{Introduction}

Quantum Theory is an intrinsically probabilistic theory, that is, when we perform a measurement on a quantum system, we can only tell the probabilities associated to each possible outcome, and not the outcome itself. In the situation where the system is composed by two distinct parties, we can perform a local measurement on each party, generating a distribution of joint probabilities, referent to each outcome obtained. Generally, joint probabilities (originated by quantum measurements or not) are not independent, i.e., cannot by expressed by the product of the probability of each party. However, considering the existence of an extra information, which we call local hidden variables, in certain cases such factorization becomes possible. When we are restricted to correlations generated by measurements on quantum states, the previous sentence can be rewritten as the following: some quantum states admit a local hidden variables model, or, in short, a local model. Those are said to be local states. The construction of a local model is a hard task, and even 25 years after the first of them, created by Reinhard Werner \cite{Wer}, the number of states for which were constructed a local model is still small.

The common characteristic shared by correlations that do not admit a local factorization even when local hidden variables are considered is called nonlocality. Another way to characterize this property is through Bell inequalities \cite{Bell64}. A Bell inequality is a relation that should be satisfied by all local correlations, particularly by all correlations generated by measurements on local states. While the absence of a local model for a given state does not imply its nonlocality, the violation of a single Bell inequality is enough to attest it.

Separable quantum states are naturally local. Maybe just as interesting as the use of local hidden variables to simulate correlations of entangled quantum states is the fact that some local states do violate a Bell inequality when subjected to a sequence of local measurements \cite{Pop}. In other words, even if there is a local model for a given state, maybe there are local measurements such that one of the possible after-measurement states violates a Bell inequality. Such intermediate measurements are called filters, and a state that allows this possibility is said to present hidden nonlocality. Therefore, the action of the filters is to reveal the state's nonlocality.

The main objective of this dissertation is to present some examples of local models and some cases of hidden nonlocality. In Chapters 1 and 2 are presented definitions and basic results from the topics around the subject, as well as a short overview of Quantum Theory. This is the starting point to discuss the EPR experiment \cite{EPR}, which motivated the appearance of Bell inequalities.

In Chapter 3 we present the CHSH inequality \cite{CHSH}, which is the only Bell inequality employed during the thesis, as well as the Horodecki Criterion, which gives us the largest CHSH violation that a given state provides \cite{HHH}. We also formally define the role of the local hidden variables \cite{BCPSW} and the construction of a local model.

Chapter 4 is turned to Werner's local model for projective measurements \cite{Wer}, the first local model to arise, which influenced basically all the subsequent works, being one of the most important papers in the area\footnote{It was in this paper where first appears the definition of entanglement for mixed states, for example, property until then defined only for pure states.}. We motivate the study of Werner states, which are the ones used for its construction, and present the hidden variables and response functions to be used in each system.

In Chapter 5 we present the work of Sandu Popescu \cite{Pop}, showing that some of the states for which we constructed the Werner's local model present hidden nonlocality: applying a simple choice of local filters, we are able to obtain a state that violates maximally the CHSH inequality, when the local dimension is $d\geq5$.

Chapter 6 brings the Gisin-Degorre's local model \cite{GG,DLR}, capable of simulating the EPR experiment if we allow classical communication between the parties. However, even without communication or any other additional resource, we are able to simulate a Werner state of local dimension $d=2$, for projective measurements.

Finally, Chapter 7 is based on the work of Flavien Hirsch, Marco T\'ulio Quintino, Joseph Bowles and Nicolas Brunner \cite{HQBB}. We start by showing the existence of hidden nonlocality in a state of local dimension $d=2$ and then present an example of the so-called genuine hidden nonlocality: we provide a state which has a local model for POVMs and, nevertheless, violates the CHSH inequality after the application of local filters.

\chapter{Preliminaries}

In this chapter we present the standard notation while dealing with Quantum Theory, introduced by Paul Dirac, and a miscellaneous list of basic results and definitions that will be used throughout the thesis.

\section{Dirac's notation}

Dirac's notation is a mnemonic notation that is very useful in the handling with Quantum Theory. The elements of a vector space $V$ are denoted by $\ket v$ and the elements of the dual $V^*$ are denoted by $\bra v$. The symbols $\ket{\cdot}$ and $\bra{\cdot}$ already defines the object; the letters only serve as labels. Consequently, the canonical inner product of vectors $u$ and $v$ is simply written as $\bra u \ket v\equiv \bra u v\rangle$.

Along the thesis, sometimes we will want to speak in the adjoint operator $T^{\dagger}$ of a given operator $T$ (the proper definitions will be given in the next section). While in the traditional notation we write
\begin{equation*}
  \langle v, Tu\rangle= \langle T^{\dagger}v,u \rangle,
\end{equation*}
in Dirac's notation both sides of the above expression are written the same,
\begin{equation*}
  \bra v (T \ket u)=(\bra v T)\ket u=\bra v T \ket u.
\end{equation*}
This happens because the dual element of $T^{\dagger}\ket v$ is exactly $\bra vT$. Sometimes, to emphasizing this step, we will make use of the abuse of notation
\begin{equation*}
  \bra v Tu\rangle:=\bra v T \ket u=:\bra{T^{\dagger}v}u\rangle.
\end{equation*}

\section{Basic definitions and results}

\subsection{Linear algebra}

\begin{definicao} A Hilbert space is a pair $(V, \langle\cdot,\cdot\rangle)$, where $V$ is a vector space and $\langle\cdot,\cdot\rangle$ is an inner product that induces a distance function for which $V$ is complete, that is, every Cauchy sequence is convergent.

Except if mentioned otherwise, the inner product that we will be considering is
\begin{equation*}
  \langle\ket v ,\ket w \rangle = \bra v | w\rangle.
\end{equation*}
\end{definicao}

\begin{definicao} Let $V,W$ be vector spaces over the same field. The tensor product of them, denoted by $V\otimes W$, is the space generated by vectors of the form $\ket v \otimes \ket w$, where $\ket v \in V$ and $\ket w \in W$, which obey to the following relations:
\begin{enumerate}
  \item $(\lambda\ket v)\otimes \ket w=\ket v \otimes (\lambda \ket w)=\lambda(\ket v \otimes \ket w)$;
  \item $(\ket v +\ket {v'})\otimes\ket w= \ket v \otimes \ket w + \ket {v'}\otimes \ket w$;
  \item $\ket v \otimes (\ket w + \ket {w'})=\ket v \otimes \ket w + \ket v \otimes \ket {w'}$;
  \item If $(V, \langle\cdot,\cdot\rangle_V)$ and $(W, \langle\cdot,\cdot\rangle_W)$ are Hilbert spaces, then $(V\otimes W, \langle\cdot,\cdot\rangle)$ is a Hilbert space, where
      \begin{equation*}
        \langle (\ket v\otimes \ket{w}),(\ket {v'} \otimes \ket{w'})\rangle_{V\otimes W}=\bra v |v'\rangle \bra {w} |w'\rangle,
      \end{equation*}
      with $\ket{v'}\in V, \ket{w'}\in W$.
\end{enumerate}
\end{definicao}

\begin{notacao} We will use the simplified notation $\ket v \otimes \ket w \equiv \ket v \ket w \equiv \kett vw $ for vectors of a tensor product of two spaces.
\end{notacao}

\begin{definicao} Let $A\in\mathcal{L(H)}=\{T:\mathcal{H}\rightarrow \mathcal{H}; \ T\ is\ linear\}$, where $\mathcal{H}$ is a finite dimensional Hilbert space, and $[A]=(a_{ij})$ be a matrix representation of $A$. The trace of $A$ is the sum of the elements of the main diagonal of $[A]$, that is,
\begin{equation*}
\tr(A)=\sum_i{a_{ii}}.
\end{equation*}
\end{definicao}

\begin{prop}\label{kamus} Given a d-dimensional Hilbert space $\mathcal{H}$, let $A,B \in \mathcal{L(H)}$ and $V:\mathcal{H}\otimes \mathcal{H}\rightarrow\mathcal{H}\otimes \mathcal{H}$ be the ``flip" linear operator, defined by $V(\ket {ab})=\ket {ba}$. Then $\tr(V A\otimes B)=\tr(AB)$.
\end{prop}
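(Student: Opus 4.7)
The plan is to compute $\tr(V A\otimes B)$ directly in an orthonormal basis and recognize the result as $\tr(AB)$. Fix an orthonormal basis $\{\ket{i}\}_{i=1}^{d}$ of $\mathcal{H}$, so that $\{\ket{ij}\}_{i,j}$ is an orthonormal basis of $\mathcal{H}\otimes\mathcal{H}$, and write $A=(a_{ij})$, $B=(b_{ij})$ in this basis, i.e.\ $A\ket{i}=\sum_{k}a_{ki}\ket{k}$ and similarly for $B$. The trace on $\mathcal{H}\otimes\mathcal{H}$ is then $\tr(X)=\sum_{i,j}\bra{ij}X\ket{ij}$.

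First I would expand the action of the operator. Since
\begin{equation*}
(A\otimes B)\ket{ij}=\sum_{k,l}a_{ki}b_{lj}\ket{kl},
\end{equation*}
and $V\ket{kl}=\ket{lk}$ by the definition of the flip, one obtains
\begin{equation*}
V(A\otimes B)\ket{ij}=\sum_{k,l}a_{ki}b_{lj}\ket{lk}.
\end{equation*}
Taking the inner product with $\ket{ij}$ collapses the double sum via $\langle ij|lk\rangle=\delta_{il}\delta_{jk}$, giving
\begin{equation*}
\bra{ij}V(A\otimes B)\ket{ij}=a_{ji}b_{ij}.
\end{equation*}

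Summing over $i$ and $j$ yields
\begin{equation*}
\tr\bigl(V(A\otimes B)\bigr)=\sum_{i,j}a_{ji}b_{ij}=\sum_{j}\Bigl(\sum_{i}a_{ji}b_{ij}\Bigr)=\sum_{j}(AB)_{jj}=\tr(AB),
\end{equation*}
which is the claim. The proof is purely computational; there is no real obstacle beyond bookkeeping the indices, the only subtle point being to apply $V$ \emph{after} expanding $(A\otimes B)\ket{ij}$ so that its action on basis vectors is unambiguous. Basis-independence of the final equality is automatic, since both $\tr(V\,A\otimes B)$ and $\tr(AB)$ are intrinsically defined.
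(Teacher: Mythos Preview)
Your proof is correct and follows essentially the same approach as the paper: both fix an orthonormal basis, expand $V(A\otimes B)$ in matrix entries, and identify the resulting double sum $\sum_{i,j}a_{ji}b_{ij}$ with $\tr(AB)$. The only cosmetic difference is that the paper writes $A=\sum_{i,j}a_{ij}\ketbra{i}{j}$ and manipulates outer products, whereas you track the action on basis kets directly; the index bookkeeping and the conclusion are identical.
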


\begin{proof} Fix an orthonormal base $\mathcal{B}=\{\ket1, ... , \ket d\}$ for $\mathcal{H}$. Note that we can write the matrix representation $A=(a_{ij})_{i,j=1}^d$ as
\begin{equation*}
A=\sum_{i,j=1}^d{a_{ij}\ketbra{i}{j}}.
\end{equation*}
Indeed, $\ketbra{i}{j}$ is the linear operator that vanishes for every vector of $\mathcal{B}$ except $\ket j$, hence all the entries of its matrix representation are null except for the the one in the $j$-th column and $i$-th line. Similarly for $B$,
\begin{equation*}
B=\sum_{k,l=1}^d{b_{kl}\ketbra{k}{l}}.
\end{equation*}

Therefore,
\begin{eqnarray*}
% \nonumber to remove numbering (before each equation)
  \tr(AB) &=& \tr\left(\left[\sum_{i,j=1}^d{a_{ij}\ketbra{i}{j}}\right]\left[\sum_{k,l=1}^d{b_{kl} \ketbra{k}{l}}\right]\right) \\
   &=& \tr\left(\sum_{i,j,l}^d{a_{ij}b_{jl}\ketbra{i}{l}}\right) \\
   &=& \sum_{i,j,l}^d{a_{ij}b_{jl}\tr\left(\ketbra{i}{l}\right)} \\
   &=& \sum_{i,j}^d{a_{ij}b_{ji}}.
\end{eqnarray*}

In the same line, the tensor product becomes
\begin{equation}\label{miro}
  A\otimes B=\sum_{i,j,k,l}{a_{ij}b_{kl}\ketbra{ki}{lj}}.
\end{equation}

As for the flip operator, since it can be seen as a permutation over the elements of the base of $\mathcal{H}\otimes \mathcal{H}$, we can describe the action of its matrix representation on another matrix as a simply permutation of the matrix lines. We can express this fact as
\begin{equation}\label{mu}
V \cdot \ketbra{ij}{kl}=(V\ket{ij})\bra{kl}=\ketbra{ji}{kl}.
\end{equation}

Now we are ready to compute $\tr( V A\otimes B)$. Using Eqs. (\ref{miro}) and (\ref{mu}), we have
\begin{eqnarray*}
% \nonumber to remove numbering (before each equation)
  \tr(V A\otimes B) &=& \tr\left(\sum_{i,j,k,l}{a_{ij}b_{kl}V(\ketbra{ki}{lj})}\right) \\
   &=& \sum_{i,j,k,l}{a_{ij}b_{kl}\tr(\ketbra{ik}{lj})} \\
   &=& \sum_{i,j}{a_{ij}b_{ji}},
\end{eqnarray*}
thus equal to $\tr(AB)$.
\end{proof}

\begin{notacao} We will denote a $n \times n$ diagonal matrix $A=(a_{ij})$ (i.e., a matrix $A$ for which $a_{ij}=0$ for all $i\neq j$) by $A= \diag(a_{11},...,a_{nn})$.
\end{notacao}

\begin{prop}\label{probpureprojec} Given a Hilbert space $\mathcal{H}$, an unit vector $\ket{\psi}\in \mathcal{H}$ and a linear operator $A$ acting on $\mathcal{H}$, we have
\begin{equation*}
  \tr(A\ket{\psi}\bra{\psi})=\bra{\psi}A\ket{\psi}.
\end{equation*}
\end{prop}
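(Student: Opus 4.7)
The plan is to compute the left-hand side directly from the definition of the trace by choosing a convenient orthonormal basis, and then observe that only one term in the resulting sum survives.

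First I would extend the unit vector $\ket{\psi}$ to an orthonormal basis $\mathcal{B}=\{\ket{e_1}=\ket{\psi},\ket{e_2},\ldots,\ket{e_d}\}$ of $\mathcal{H}$; this is possible by Gram--Schmidt. Since the trace of an operator equals the sum of the diagonal entries of its matrix representation in any orthonormal basis, I can write
\begin{equation*}
  \tr(A\ket{\psi}\bra{\psi})=\sum_{i=1}^{d}\bra{e_i}A\ket{\psi}\bra{\psi}e_i\rangle.
\end{equation*}

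Next, by orthonormality, $\bra{\psi}e_i\rangle=\bra{e_1}e_i\rangle=\delta_{1i}$, so only the $i=1$ term of the sum is nonzero, yielding
\begin{equation*}
  \tr(A\ket{\psi}\bra{\psi})=\bra{e_1}A\ket{\psi}=\bra{\psi}A\ket{\psi},
\end{equation*}
which is the desired identity. There is essentially no obstacle here: the only subtle point is the invocation that the trace is basis-independent, a fact that would normally be established alongside the definition of $\tr$. An alternative one-line proof would use the cyclic property of the trace, $\tr(A\ket{\psi}\bra{\psi})=\tr(\bra{\psi}A\ket{\psi})$, together with the fact that $\bra{\psi}A\ket{\psi}$ is a scalar and therefore equal to its own trace; but the basis-expansion route is in closer keeping with the computational style used in the proof of Proposition \ref{kamus} just above.
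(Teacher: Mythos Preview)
Your proof is correct and follows essentially the same approach as the paper: both extend $\ket{\psi}$ to an orthonormal basis with $\ket{e_1}=\ket{\psi}$ and compute the trace in that basis, observing that only the first diagonal entry contributes. The paper phrases this in terms of the matrix representation of $\ket{\psi}\bra{\psi}$ being $\diag(1,0,\ldots)$, while you write the trace directly as $\sum_i\bra{e_i}A\ket{\psi}\bra{\psi}e_i\rangle$ and use $\bra{\psi}e_i\rangle=\delta_{1i}$; these are the same computation.
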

\begin{proof} Consider the orthogonal basis $\{\ket{e_i}\}$ of $\mathcal{H}$, where $\ket{e_1}=\ket{\psi}$, and the matrix representation $[A]$ of $A$ in this basis. The first column of $[A]$ is the vector $A\ket{\psi}=A\ket{e_1}= \sum_i{a_i\ket{e_i}}$. Thus
\begin{equation*}
  \bra{\psi}A\ket{\psi}=\bra{e_1}A\ket{e_1}=\bra{e_1}(\sum_i{a_i\ket{e_i}})=a_1\bra{e_1}\ket{e_1}=a_1.
\end{equation*}

On the other hand, the matrix representation of the projector $\ket{\psi}\bra{\psi}$ in this basis is $\diag(1,0,...)$, so the product $A\ket{\psi}\bra{\psi}$ equals to the matrix which the first column is the vector $A\ket{\psi}$ and the rest of matrix elements equals to zero, resulting that $\tr(A\ket{\psi}\bra{\psi})=a_1$.
\end{proof}

\begin{definicao} An operator $T\in\mathcal{L(H_A\otimes H_B)}$ is said to be a product operator if $T\in\mathcal{L(H_A)\otimes L(H_B)}=\{T_A\otimes T_B\in\mathcal{L(H_A\otimes H_B)}; T_A\in\mathcal{L(H_A)}, T_B\in\mathcal{L(H_B)}\}$.
\end{definicao}

\begin{prop}\label{siegfrid} Every non-product operator can be written as a linear combination of product operators.
\end{prop}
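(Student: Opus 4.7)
The plan is to exhibit a basis for $\mathcal{L}(\mathcal{H}_A \otimes \mathcal{H}_B)$ consisting entirely of product operators; once this is done, every operator (product or not) is automatically a linear combination of product operators, which is exactly the claim.

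First, I would fix orthonormal bases $\{\ket{a_i}\}_{i=1}^{d_A}$ of $\mathcal{H}_A$ and $\{\ket{b_j}\}_{j=1}^{d_B}$ of $\mathcal{H}_B$, so that $\{\ket{a_i}\otimes\ket{b_j}\}_{i,j}$ is an orthonormal basis of $\mathcal{H}_A\otimes\mathcal{H}_B$. Then the $d_A^2 d_B^2$ rank-one operators $\ket{a_i b_j}\bra{a_k b_l}$ form a basis of $\mathcal{L}(\mathcal{H}_A\otimes\mathcal{H}_B)$, exactly as in the argument used in the proof of Proposition \ref{kamus} to expand $A$ and $B$. Consequently, any $T\in\mathcal{L}(\mathcal{H}_A\otimes\mathcal{H}_B)$ admits an expansion
\begin{equation*}
T=\sum_{i,j,k,l} t_{ijkl}\,\ket{a_i b_j}\bra{a_k b_l},
\end{equation*}
where $t_{ijkl}$ are the matrix entries of $T$ in this basis.

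The key observation is then the identification
\begin{equation*}
\ket{a_i b_j}\bra{a_k b_l}=\bigl(\ket{a_i}\bra{a_k}\bigr)\otimes\bigl(\ket{b_j}\bra{b_l}\bigr),
\end{equation*}
which follows directly from the defining relations of the tensor product and of the inner product on $\mathcal{H}_A\otimes\mathcal{H}_B$. Each such basis element is therefore a product operator, because $\ket{a_i}\bra{a_k}\in\mathcal{L}(\mathcal{H}_A)$ and $\ket{b_j}\bra{b_l}\in\mathcal{L}(\mathcal{H}_B)$. Substituting this into the expansion above exhibits $T$ as a linear combination of product operators, proving the proposition.

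I do not anticipate real obstacles here; the only substantive point is the identification above, which is a routine consequence of the definition of $\otimes$. One should however be mindful that the statement as written is slightly infelicitous: literally, a non-product operator is (by definition) not of the form $T_A\otimes T_B$, but the claim is really that every operator in $\mathcal{L}(\mathcal{H}_A\otimes\mathcal{H}_B)$ lies in the linear span of product operators, which is what the above argument establishes.
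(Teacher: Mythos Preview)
Your argument is correct and is essentially identical to the paper's own proof: the paper simply notes that if $\{\ket{a_i}\}$ and $\{\ket{b_j}\}$ are bases for $\mathcal{H}_A$ and $\mathcal{H}_B$, then $\{\ketbra{a_ib_j}{a_{i'}b_{j'}}\}=\{\ketbra{a_i}{a_{i'}}\otimes\ketbra{b_j}{b_{j'}}\}$ is a basis for $\mathcal{L}(\mathcal{H}_A\otimes\mathcal{H}_B)$, which is exactly your construction. Your added remark about the slightly infelicitous phrasing of the statement is accurate but not needed for the proof.
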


\begin{proof} The proposition follows from the fact that if $\{\ket{a_i}\}$ is a basis for $\mathcal{H}_A$ and $\{\ket{b_j}\}$ is a basis for $\mathcal{H}_B$, then $\{\ketbra{a_ib_j}{a_{i'}b_{j'}}\}=\{\ketbra{a_i}{a_{i'}}\otimes\ketbra{b_j}{b_{j'}}\}$ is a basis for $\mathcal{L(H_A\otimes H_B)}$.
\end{proof}

Since $\mathcal{L(H)}$ is also a vector space, the tensor product $\mathcal{L(H_A)}\otimes \mathcal{L(H_B)}$ is well defined. We can define the action of $T_A\otimes T_B: \mathcal{L(H_A \otimes H_B})\rightarrow \mathcal{L}(\mathcal{H_A \otimes H_B)}$ by
\begin{equation*}
  T_A \otimes T_B (M_A\otimes M_B)=(T_AM_A)\otimes(T_BM_B)
\end{equation*}
for product operators and extend to non-product operators by linearity.

\begin{definicao} Let $A\in\mathcal{L(H)}$ and $[A]$ the matrix representation of $A$ with respect to the base $\mathcal{B}$. The adjoint of $A$ is the operator $A^{\dagger}\in\mathcal{L(H)}$ whose matrix representation with respect to $\mathcal{B}$ is $[A]$ transposed and complex conjugated.
\end{definicao}

\begin{definicao} Let $T\in\mathcal{L(H)}$. $T$ is said to be
\begin{itemize}
  \item Hermitian if $T=T^{\dagger}$;
  \item normal if $TT^{\dagger}=T^{\dagger}T$;
  \item positive semi-definite if $\bra{\psi}T\ket{\psi}\geq0, \ \forall\ket{\psi}\in\mathcal{H}$;
  \item unitary if $T^{\dagger}=T^{-1}$.
\end{itemize}

\end{definicao}

The next two basic results about these classes of operators we will only enounce; one can find the proofs in \cite{elon}.

\begin{teo}
\begin{description}
  \item[(i)] Any positive semi-definite operator on a vector space is a Hermitian operator.
  \item[(ii)] Any Hermitian operator on a vector space is a normal operator.
  \item[(iii)] Any unitary operator on a vector space is a normal operator.
\end{description}
\end{teo}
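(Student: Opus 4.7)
The plan for \textbf{(i)} is to translate positive semi-definiteness into the equality $T = T^{\dagger}$ via a polarization argument. By hypothesis $\bra{\psi} T \ket{\psi} \geq 0$ for every $\ket{\psi} \in \mathcal{H}$, so in particular this quantity is real and hence equal to its complex conjugate. Using the defining property of the adjoint in the form $\overline{\bra{\psi} T \ket{\psi}} = \bra{\psi} T^{\dagger} \ket{\psi}$, one deduces $\bra{\psi} (T - T^{\dagger}) \ket{\psi} = 0$ for every $\ket{\psi}$. The remaining step is to promote this diagonal identity to $T - T^{\dagger} = 0$, and over a complex scalar field this is precisely what the polarization identity delivers: for any $\ket{\psi}, \ket{\phi}$ the off-diagonal value $\bra{\psi} S \ket{\phi}$ with $S := T - T^{\dagger}$ can be written as a finite linear combination of diagonal values $\bra{\chi_k} S \ket{\chi_k}$ with $\ket{\chi_k} = \ket{\psi} + i^k \ket{\phi}$, each of which vanishes.

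Parts \textbf{(ii)} and \textbf{(iii)} are one-line verifications of the defining identity $T T^{\dagger} = T^{\dagger} T$. For (ii), if $T = T^{\dagger}$ then both sides collapse to $T^2$. For (iii), if $T^{\dagger} = T^{-1}$ then both sides equal the identity operator $I$, since $T T^{-1} = I = T^{-1} T$.

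The only genuine obstacle lies in part (i): one must be aware that this implication depends on working over $\mathbb{C}$. Over a real vector space the argument breaks down, as witnessed by a planar rotation by $\pi/2$, which satisfies $\bra{\psi} T \ket{\psi} = 0 \geq 0$ for every real $\ket{\psi}$ and yet is not self-adjoint. In the quantum-mechanical setting of this thesis all Hilbert spaces are complex, so the polarization step goes through without caveat, and the three implications follow as described.
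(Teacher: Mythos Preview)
Your argument is correct and is the standard textbook proof. Note, however, that the paper does not actually supply a proof of this theorem: immediately after stating it the author writes that the result is only enunciated and refers the reader to the linear algebra text \cite{elon}. There is therefore no in-text proof to compare against; what you have written would serve perfectly well as the omitted argument, including your remark that part (i) genuinely requires the scalar field to be $\mathbb{C}$.
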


\begin{teo}\label{spectral} [Spectral Decomposition Theorem] An operator on a vector space is normal if and only if is diagonal with respect to some basis for the space.
\end{teo}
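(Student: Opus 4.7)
The plan is to prove the two implications separately: the ``if'' direction is a one-line computation, while the ``only if'' direction goes by induction on $n=\dim\mathcal{H}$, the essential input being that eigenvectors of a normal operator $T$ are also eigenvectors of $T^\dagger$ (with the conjugate eigenvalue).

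For the ``if'' direction, suppose there is an orthonormal basis $\{\ket{e_1},\ldots,\ket{e_n}\}$ in which $[T]=\diag(\lambda_1,\ldots,\lambda_n)$. Then, by the definition of the adjoint, $[T^\dagger]=\diag(\overline{\lambda_1},\ldots,\overline{\lambda_n})$ in the same basis, and since diagonal matrices commute entry-wise, $TT^\dagger=T^\dagger T$.

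For the ``only if'' direction, I would first establish the key lemma: if $T$ is normal and $T\ket{v}=\lambda\ket{v}$, then $T^\dagger\ket{v}=\overline{\lambda}\ket{v}$. Note that $T-\lambda I$ is again normal, as its adjoint is $T^\dagger-\overline{\lambda}I$ and these commute because $T$ and $T^\dagger$ do. Hence
\begin{equation*}
\|(T-\lambda I)\ket{v}\|^2 = \bra{v}(T^\dagger-\overline{\lambda}I)(T-\lambda I)\ket{v} = \bra{v}(T-\lambda I)(T^\dagger-\overline{\lambda}I)\ket{v} = \|(T^\dagger-\overline{\lambda}I)\ket{v}\|^2,
\end{equation*}
so the vanishing of the left-hand side forces the vanishing of the right-hand side, proving the lemma.

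With this in hand, I would proceed by induction on $n$. The base case $n=1$ is trivial. For the inductive step, since $\mathcal{H}$ is a finite-dimensional complex space, $T$ admits at least one unit eigenvector $\ket{v}$ with eigenvalue $\lambda$. Let $W=\{\ket{v}\}^\perp$; for any $\ket{w}\in W$ the lemma gives $\bra{v}T\ket{w}=\bra{T^\dagger v}\ket{w}=\lambda\bra{v}\ket{w}=0$, so $W$ is $T$-invariant, and the same argument applied to $T^\dagger$ (which is also normal) shows $W$ is $T^\dagger$-invariant as well. Consequently $T|_W$ is a normal operator on the $(n-1)$-dimensional Hilbert space $W$, and the inductive hypothesis produces an orthonormal basis of $W$ diagonalizing it; adjoining $\ket{v}$ yields the desired basis of $\mathcal{H}$. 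The main obstacle is really the lemma; once it is established, the invariance of $W$ under both $T$ and $T^\dagger$ — which is exactly what the lemma delivers — makes the inductive descent essentially automatic.
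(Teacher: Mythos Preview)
Your proof is correct and is the standard argument for the spectral theorem for normal operators on a finite-dimensional complex inner product space. One small remark: the paper's statement says ``some basis'', but your proof (correctly) takes this to mean an \emph{orthonormal} basis --- without that the ``if'' direction would fail, since diagonalizability in an arbitrary basis does not imply normality.

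As for comparison with the paper: there is nothing to compare. The paper does not prove this theorem; immediately before stating it the author writes ``The next two basic results about these classes of operators we will only enounce; one can find the proofs in \cite{elon}.'' Your argument is essentially the one that would be found in any standard linear algebra text, and is entirely appropriate here.
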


\begin{prop} Let $T,P\in\mathcal{L(H)}$ such that $P=\ketbraa{v}$ is a projector. Then $PT=TP$ if and only if $\ket v$ is an eigenvector of $T$.
\end{prop}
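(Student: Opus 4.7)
The plan is to prove each direction separately, using throughout the basic identities $P\ket v=\ket v$ and $\bra v P=\bra v$, both of which hold because $\ket v$ is a unit vector.

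For the direction $(\Rightarrow)$ I would simply apply the commutation relation to $\ket v$: one side gives $TP\ket v=T\ket v$, and the other $PT\ket v=\ket v\bra v T\ket v=(\bra v T\ket v)\ket v$, which is manifestly a scalar multiple of $\ket v$. Equating the two and setting $\lambda:=\bra v T\ket v$ yields $T\ket v=\lambda\ket v$, so $\ket v$ is an eigenvector of $T$ with eigenvalue $\lambda$.

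For $(\Leftarrow)$, assume $T\ket v=\lambda\ket v$. Then $TP=T\ket v\bra v=\lambda\ket v\bra v=\lambda P$ is immediate, but the matching identity $PT=\ket v\bra v T=\lambda P$ requires $\bra v T=\lambda\bra v$, equivalently $T^{\dagger}\ket v=\bar\lambda\ket v$. I expect this to be the main obstacle: it is not automatic from $T\ket v=\lambda\ket v$ alone (for instance $T=\ketbraa 0+\ketbra 01$ on $\mathbb C^2$ fixes $\ket 0$ but fails to commute with $\ketbraa 0$). What rescues it is the tacit hypothesis, natural in this quantum-theoretic setting, that $T$ is normal: by the Spectral Decomposition Theorem (Theorem~\ref{spectral}), $T$ diagonalises in an orthonormal eigenbasis shared with $T^{\dagger}$, so every $\lambda$-eigenvector of $T$ is automatically a $\bar\lambda$-eigenvector of $T^{\dagger}$. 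With that, $\bra v T=\lambda\bra v$ follows, and then $PT=\lambda P=TP$ closes the argument.
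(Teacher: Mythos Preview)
Your forward direction matches the paper's exactly: apply both sides to $\ket v$ and read off the eigenvalue.

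For the converse you are more careful than the paper. The paper attempts a direct argument: pick a basis $\{\ket{v_i}\}$ with $\ket{v_1}=\ket v$, write $\ket u=\sum_i a_i\ket{v_i}$, compute $TP\ket u=a_1\lambda\ket v$, and then claim
\[
PT\ket u=P\Bigl(a_1\lambda\ket v+\sum_{i>1}a_iT\ket{v_i}\Bigr)=a_1\lambda\ket v.
\]
That last equality tacitly assumes $\bra v T\ket{v_i}=0$ for $i>1$, which is precisely the statement $\bra v T=\lambda\bra v$ that you flagged as not automatic. Your counterexample $T=\ketbraa 0+\ketbra 01$ with $\ket v=\ket 0$ kills the step outright, so the proposition as stated is false for general $T$ and the paper's proof has a gap at exactly this point.

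Your repair via normality is the right one: once $T$ is normal, the Spectral Decomposition Theorem gives $T^{\dagger}\ket v=\bar\lambda\ket v$, hence $\bra v T=\lambda\bra v$ and $PT=\lambda P=TP$. This hypothesis is also consistent with how the result is actually used later (Corollary~\ref{tchatchatcha} is invoked only with unitary $U$ commuting with projectors), so nothing downstream breaks.
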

\begin{proof} Suppose that $TP=PT$. Then
\begin{equation*}
T\ket v=TP\ket v=PT\ket v=\lambda\ket v,
\end{equation*}
for some $\lambda\in\mathbb{C}$, by definition of $P$.

Conversely, suppose $T\ket v=\lambda \ket v$ and take $\ket u\in\mathcal{H}$. Consider the base $\{\ket{v_i}\}$ of $\mathcal{H}$ such that $\ket v=\ket{v_1}$. Then $\ket u=\sum_i{a_i\ket{v_i}}$, with $a_i\in\mathbb{C}$, and
\begin{equation*}
TP\ket u=T(a_1\ket{v})=a_1T\ket v=a_1\lambda\ket{v}.
\end{equation*}
On the other hand,
\begin{equation*}
PT\ket u=P(a_1 \lambda \ket v + \sum_{i>1}{a_1T\ket{v_i}})=a_1\lambda\ket v,
\end{equation*}
completing the proof.
\end{proof}

\begin{coro}\label{tchatchatcha} Let $S,T\in\mathcal{L(H)}$ such that $S$ is Hermitian and its spectral decomposition is $\sum_i{a_iP_i}$. If $TP_i=P_iT, \ \forall i$, then $T=\sum_i{b_iP_i}$.
\end{coro}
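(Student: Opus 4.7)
The plan is to apply the preceding proposition pointwise and then reassemble $T$ from its action on an eigenbasis of $S$. The key observation is that in the spectral decomposition $S=\sum_i a_i P_i$ (combined with the Spectral Decomposition Theorem, since $S$ is Hermitian hence normal), the projectors $P_i$ can be taken to be the rank-one projectors $P_i=\ketbraa{v_i}$ associated with an orthonormal eigenbasis $\{\ket{v_i}\}$ of $S$, possibly with repeated eigenvalues $a_i$.

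First, I would invoke the previous proposition: the hypothesis $TP_i=P_iT$ says that each $\ket{v_i}$ is an eigenvector of $T$, so there exist scalars $b_i\in\mathbb{C}$ with $T\ket{v_i}=b_i\ket{v_i}$. Second, since $\{\ket{v_i}\}$ is an orthonormal basis of $\mathcal{H}$, the operator $\sum_i b_i\ketbraa{v_i}$ has exactly the same action on every basis vector as $T$ does, namely $\sum_i b_i\ketbraa{v_i}\ket{v_j}=b_j\ket{v_j}=T\ket{v_j}$. By linearity this forces
\begin{equation*}
T=\sum_i b_i\ketbraa{v_i}=\sum_i b_i P_i,
\end{equation*}
which is the claimed decomposition.

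There is no real obstacle here; the corollary is essentially a packaging of the previous proposition with the spectral theorem. The only subtlety worth flagging is the (implicit) convention that the $P_i$ in the spectral decomposition of $S$ are the rank-one projectors onto an eigenbasis; if instead one had taken them to be projectors onto the full eigenspaces of $S$ (grouping repeated eigenvalues), then the previous proposition would not directly apply and one would need a slightly stronger argument. Under the rank-one convention, however, the proof is a two-line assembly.
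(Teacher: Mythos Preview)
Your proof is correct and is precisely the derivation the paper intends: the corollary is stated without proof immediately after the proposition characterizing when a rank-one projector commutes with $T$, and your argument simply makes that inference explicit by applying the proposition to each $P_i=\ketbraa{v_i}$ and reassembling $T$ on the eigenbasis. Your remark about the rank-one convention for the $P_i$ is also apt, since the preceding proposition (and the way the corollary is invoked later for one-dimensional projections) relies on exactly that reading.
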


\subsection{Basic Probability Theory}

\begin{definicao} Let $A,B$ be random variables. The conditional probability that $B=b$ given that $A=a$ is defined by
\begin{equation*}
  \p(B=b|A=a)=\frac{\p(A=a,B=b)}{\p(A=a)}.
\end{equation*}
When $\p(A=a)=0$ we make the convention that $\p(B=b|A=a)=0$.
\end{definicao}

\begin{definicao} Random variables $A,B$ are said to be independent if $\p(A=a, B=b)=\p(A=a)\p(B=b)$.
\end{definicao}

\begin{notacao} We often denote $\p(A=a, B=b)$ by $\p(a,b)$, leaving the ``A=" and ``B=" implicit.
\end{notacao}

\begin{teo}[Law of total probability] If $A,B$ are random variables, then
\begin{equation*}
  \p(b)=\sum_a{\p(b|a)\p(a)},
\end{equation*}
where the sum is over all values $a$ that A can assume.
\end{teo}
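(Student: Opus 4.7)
The plan is to start from the right-hand side and peel back the definition of conditional probability given just above the statement, together with the fact that the events $\{A=a\}$ partition the sample space (so that joint probabilities of $B=b$ with all possible values of $A$ sum to the marginal $\p(b)$). The whole argument is essentially the algebraic manipulation $\p(b|a)\p(a)=\p(a,b)$, followed by summing over $a$.

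Concretely, first I would fix a value $b$ and split the range of $a$ into the set $S^{+}=\{a:\p(a)>0\}$ and its complement. For each $a\in S^{+}$, the definition of conditional probability directly yields $\p(b|a)\p(a)=\p(a,b)$. For $a\notin S^{+}$, the paper's convention sets $\p(b|a)=0$, so $\p(b|a)\p(a)=0$; on the other hand $\p(a,b)\le\p(a)=0$, so $\p(a,b)=0$ as well. Hence $\p(b|a)\p(a)=\p(a,b)$ for every $a$ in the range of $A$.

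Summing this identity over all values of $a$ gives
\begin{equation*}
  \sum_{a}\p(b|a)\p(a)=\sum_{a}\p(a,b),
\end{equation*}
and it remains to observe that $\sum_{a}\p(a,b)=\p(b)$, because the events $\{A=a\}_{a}$ are pairwise disjoint and their union covers the sample space, so the joint probabilities marginalize to the probability of $\{B=b\}$.

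I do not expect any real obstacle: the only subtlety is the bookkeeping of the $\p(a)=0$ case, which is resolved precisely by the convention stated in the definition of conditional probability. In a fully rigorous treatment one should also note that the sum is meaningful (countable or finite) because we are tacitly assuming a discrete random variable $A$, which is the setting implicit in the definitions above.
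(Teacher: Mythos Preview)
Your proof is correct and entirely standard. The paper itself does not prove this theorem; it is stated as a basic result in the preliminary section on probability theory, with the reader referred to \cite{barry} for proofs and further details, so there is nothing to compare against.
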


\begin{definicao} The expectation or expected value of a random variable $A$ that take values in $\mathbb{R}$ is defined by
\begin{equation*}
  E(A)=\sum_a{\p(a)a}
\end{equation*}
where the sum is over all values $a$ that A can assume.
\end{definicao}

\begin{prop} The expectation has the following properties.
\begin{description}
  \item[(i)] $E(A)$ is linear in $A$.
  \item[(ii)] If $A,B$ are independent, then $E(AB)=E(A)E(B)$.
\end{description}
\end{prop}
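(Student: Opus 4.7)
The plan is to unpack both items directly from the definitions of expectation, independence, and conditional/joint probability, using the law of total probability as the one nontrivial input.

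For (i), linearity, I would first observe that the claim has two parts: $E(\alpha A)=\alpha E(A)$ for scalars $\alpha\in\mathbb{R}$, and $E(A+B)=E(A)+E(B)$. The scalar part is immediate from the definition, since $\alpha A$ takes the value $\alpha a$ exactly when $A=a$, so $E(\alpha A)=\sum_a \p(a)\,\alpha a=\alpha\sum_a \p(a)\,a=\alpha E(A)$. The additive part is the content-carrying step: writing $C=A+B$, the random variable $C$ takes the value $c=a+b$ when $A=a$ and $B=b$, so I would compute
\begin{equation*}
E(A+B)=\sum_{a,b}\p(a,b)(a+b)=\sum_{a,b}\p(a,b)\,a+\sum_{a,b}\p(a,b)\,b,
\end{equation*}
and then apply the law of total probability to each marginal sum, $\sum_b \p(a,b)=\p(a)$ and $\sum_a \p(a,b)=\p(b)$, to conclude the result equals $E(A)+E(B)$.

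For (ii), assuming independence, I use $\p(a,b)=\p(a)\p(b)$ by definition. Then, since $AB$ takes the value $ab$ when $A=a,B=b$,
\begin{equation*}
E(AB)=\sum_{a,b}\p(a,b)\,ab=\sum_{a,b}\p(a)\p(b)\,ab=\left(\sum_a \p(a)\,a\right)\left(\sum_b \p(b)\,b\right)=E(A)E(B),
\end{equation*}
where the factorization of the double sum into a product of single sums uses only the distributive law.

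I do not expect a real obstacle here, since both parts are essentially rearrangements of finite sums once the correct joint-probability expression is in place. The only mild subtlety is being explicit that the ``value of $A+B$ is $a+b$'' and ``value of $AB$ is $ab$'' observations implicitly re-index the sum over pairs $(a,b)$; this is what forces the use of the joint distribution $\p(a,b)$ and consequently the law of total probability in (i) and the independence hypothesis in (ii). Everything else is bookkeeping.
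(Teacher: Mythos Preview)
Your argument is correct and entirely standard: both parts follow by expanding the expectation as a sum over the joint distribution $\p(a,b)$, then invoking the law of total probability for (i) and the factorization $\p(a,b)=\p(a)\p(b)$ for (ii). The paper itself does not give a proof of this proposition at all---it simply states the result and refers the reader to \cite{barry}---so your write-up actually supplies more than the paper does.
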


We recommend \cite{barry} for further definitions and results.

\chapter{Quantum Theory}

This chapter has no intention to be a didactic introduction to Quantum Theory and should not be the first text about the subject to be readed. Its purpose is to present the minimum of the mathematical framework related to the quantum operations and phenomena that we are interested in this text, such as measurements and nonlocality. Therefore, no physical motivation will be exposed and various of basic and important topics will be completely ignored (such as time evolution of quantum systems, for example). For this reason, the postulates of Quantum Theory will be exposed as definitions. On the other hand, some very specific results will have to appear.

We will start directly making use of the density operator formalism. For an introduction to Quantum Theory and its formalisms we suggest references \cite{NC}, \cite{ABC} and \cite{TC}.

\section{States}

In Quantum Theory, we postulate that a system is associated to a Hilbert space $\mathcal{H}$.\footnote{Since throughout this text the Hilbert spaces will have finite dimension, we can think that each system we are going to speak about is associated to $\mathbb{C}^d$, for some dimension $d$.} Considering the set $\mathcal{L(H)}$ of linear operators of $\mathcal{H}$, a state is an element of $\mathcal{L(H)}$ that describes completely the system.

\begin{definicao} A state of a system associated to $\mathcal{H}$ is an operator $\rho\in \mathcal{L(H)}$ which is positive semi-definite with unit trace.
\end{definicao}

The subset of $\mathcal{L(H)}$ formed by states is denoted by $\mathcal{D(H)}$, where the ``$\mathcal{D}$" comes from the expression ``density operator", another term for ``quantum state"  in this formalism. $\mathcal{D(H)}$ is a convex set, that is, every convex combination of density matrices is also a density matrix. If the state $\rho$ is a one-dimensional projector (that is, if $\rho^2=\rho$ and $Im (\rho)$ has dimension 1), then $\rho=\ket{\psi}\bra{\psi}$ for some $\ket{\psi} \in \mathcal{H}$, thus we can identify the density operator $\rho$ with the vector $\ket{\psi}$. The unit trace condition implies that $\ket {\psi}$ has euclidean norm equals to 1. Every other kind of density operator is called a \emph{mixed state} and can be written as a convex combination of projectors, i.e.,
\begin{equation}\label{charlie}
  \rho \in \mathcal{D(H)} \implies \rho=\sum_i{\p_i\ket{\psi_i}\bra{\psi_i}},
\end{equation}
with unit $\ket{\psi_i}\in \mathcal{H}$ and $\p_i\geq0$ satisfying $\sum_i{\p_i}=1$.
Notice that a one-dimensional projector is a mixed state with only one term on the sum. Therefore, it is called a \emph{pure state}. However, the decomposition in (\ref{charlie}) is not unique. Take for example

\begin{exemplo}
\begin{equation*}
\rho=\frac{3}{4}\ket{0}\bra{0}+\frac{1}{4}\ket{1}\bra{1}=\frac{1}{2}\ket{a}\bra{a}+\frac{1}{2}\ket{b}\bra{b},
\end{equation*}
where $\ket{a}=\frac{\sqrt3}{2}\ket0+\frac{1}{2}\ket1$ and $\ket{b}= \frac{\sqrt3}{2}\ket0-\frac{1}{2}\ket1$.
\end{exemplo}

In this thesis, we will turn our attention mostly to composite systems; generally, we will be considering two different systems $\mathcal{H_A}$ and $\mathcal{H_B}$. The right way to describe such compositions is through the tensor product of Hilbert spaces, which is a Hilbert space itself.

\begin{definicao} The state space of a composite system is the tensor product of the state spaces of the component systems.
\end{definicao}

In the product $\mathcal{H}_A\otimes \mathcal{H}_B$, the first factor will be said to be Alice's system and the second, Bob's system, following the usual terminology used in the literature.

In a composite system, we can speak about separability.

\begin{definicao} A pure state $\ket{\psi}\in \mathcal{H}_A\otimes \mathcal{H}_B$ is said to be a product state if there are $\ket{\psi_A} \in \mathcal{H}_A$ and $\ket{\psi_B} \in \mathcal{H}_B$ such that
\begin{equation}\label{separable_pure}
  \ket{\psi}=\ket{\psi_A}\otimes \ket{\psi_B}.
\end{equation}
The pure states that are not product are said to be entangled.
\end{definicao}

The concept of product state is just the restriction of the term product operator to the set $\mathcal{D(H)}$. However, the analogy does not hold for Proposition \ref{siegfrid}, since it is false that every state is a convex combination of product states.

As any definition given by a denial, it is usually hard to decide whether a state is entangled or not: in principle, one needs to show that equation (\ref{separable_pure}) does not hold for any pair $\ket{\psi_A}\otimes\ket{\psi_B} \in \mathcal{H}_A\otimes\mathcal{H}_B$.

\begin{exemplo} Let $\mathcal{H}=\mathbb{C}^2 \otimes \mathbb{C}^2$ and $\mathcal{B}=\{\ket0, \ket1\}$ be a basis of $\mathbb{C}^2$. The state
\begin{equation*}
  \ket{\psi}=\frac{1}{\sqrt2}(\ket{00}+\ket{01})
\end{equation*}
of $\mathcal{H}$ is a product state, since
\begin{equation*}
  \ket{\psi}=\ket{0}\otimes\frac{1}{\sqrt2}(\ket{0}+\ket{1}).
\end{equation*}
\end{exemplo}

\begin{exemplo} Let $\mathcal{H}$ and $\mathcal{B}$ be the same as above but take
\begin{equation*}
  \ket{\Phi_+}=\frac{1}{\sqrt2}(\ket{00}+\ket{11}).
\end{equation*}
If $\ket{\Phi_+}$ were separable, then there would be $\ket{a},\ket{b} \in \mathbb{C}^2$ such that
\begin{equation}\label{bla}
\ket{\Phi_+}=\ket{a}\otimes\ket{b}.
\end{equation}
Writing these vectors in terms of $\mathcal{B}$, we have
\begin{equation*}
\ket a = \alpha\ket0 +\beta\ket1,\ \ \ \ket b=\gamma \ket0+ \delta\ket1,
\end{equation*}
for some $\alpha,\beta, \gamma, \delta \in \mathbb{C}^2$. Thus
\begin{equation*}
  \ket a \otimes \ket b=\alpha \gamma\ket{00}+\alpha \delta\ket{01}+\beta \gamma\ket{10}+\beta \delta \ket{11}.
\end{equation*}
According to equation (\ref{bla}), this implies
\begin{eqnarray*}
% \nonumber to remove numbering (before each equation)
  \alpha \gamma &=& \beta \delta \ = \ \frac{1}{\sqrt2} \\
  \alpha \delta &=& \beta \gamma \ = \ 0,
\end{eqnarray*}
which is an insoluble system. Therefore, we conclude that $\ket{\Phi_+}$ is entangled. (In fact, $\ket{\Phi_+}$ is one of the so-called Bell states, which are maximally entangled states of $\mathbb{C}^2\otimes\mathbb{C}^2$.)
\end{exemplo}

Generalizing the above concepts, we have the following definition.
\begin{definicao} A state $\rho$ of a composite system $\mathcal{H}_A\otimes \mathcal{H}_B$ is said to be separable if \begin{equation*}
  \rho=\sum_i{p_i \rho_{A,i}\otimes \rho_{B,i}}
\end{equation*}
for some $\rho_{A,i} \in \mathcal{H}_A$ and $\rho_{B,i} \in \mathcal{H}_B$, with $\sum_i{p_i}=1$ and $p_i\geq0$. A state which is not separable is said to be entangled.
\end{definicao}

\section{Measurements}\label{mesaurements}

We saw how to represent systems and states. Now we will define how an observer can interact with the systems, that is, how a measurement takes place.

\begin{definicao}Given a state $\rho \in \mathcal{H}$, a measurement on $\rho$ is a set of operators $\{M_i\}$ that acts on $\mathcal{H}$ and satisfy the completeness relation
\begin{equation*}
  \sum_i{M_i^{\dagger}M_i}=I.
\end{equation*}
The index $i$ refers to the outcomes that may occur in the measurement; outcome $i$ occurs with probability
\begin{equation*}
\tr(M_i \rho M_i^{\dagger})
\end{equation*}
and the state of the system after the measurement is
\begin{equation*}
\frac{M_i \rho M_i^{\dagger}}{\tr(M_i \rho M_i^{\dagger})}.
\end{equation*}
\end{definicao}
It is easy to see that the completeness relation implies that the probabilities of the outcomes indeed sum to one.

The outcomes of a measurement can be attached to each measurement operator in an arbitrary way; in general, they do not have a special meaning. For projective measurements however, we will see that this is usually done in a somewhat natural way.

The fact that we use the trace to calculate the probabilities is known as Born's rule. In order to formalize and to strengthen the axiomatic basis of Quantum Theory, Andrew Gleason showed that every probability measure on the set of closed subspaces of a Hilbert space with dimension greater than two is given by the trace of its projector times a positive semi-definite operator with trace one \cite{Gle}. In some sense, this result legitimate the use of the Born's rule.

The next definition can be found in \cite{ABC}.

\begin{definicao}\label{prob meas on C} Let $\mathcal{V}$ be a vector space and $\mathcal{C}$ the set of closed subspaces of $\mathcal{V}$. A probability measure on $\mathcal{C}$ is a measure $\mu:\mathcal{C}\rightarrow[0,1]$ such that $\mu(\mathcal{V})=1$ and if $\{E_i\}\subset\mathcal{C}$ is a countable collection of mutually orthogonal subspaces that generates $E\subset\mathcal{C}$ then
\begin{equation*}
  \mu(E)=\sum_i{\mu(E_i)}
\end{equation*}
\end{definicao}

\begin{teo}[Gleason]
Let $\mathcal{V}$ be a vector space over $\mathbb{C}$ with dimension $d>2$ and $\mathcal{C}$ the set of closed subspaces of $\mathcal{V}$. If $\mu$ is a probability measure on $\mathcal{C}$, then there is a positive semi-definite operator $\rho\in\mathcal{L(V)}$ with unit trace such that
\begin{equation*}
  \mu(E)=\tr(P\rho),
\end{equation*}
for all $E\in\mathcal{C}$, where $P$ is the projector onto $E$.
\end{teo}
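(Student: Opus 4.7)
The plan is to translate the measure $\mu$ into a ``frame function'' on unit vectors and then show that any such function must be a quadratic form in the vector, from which the operator $\rho$ reads off directly. First, I define $f:\{\ket\psi\in\mathcal{V}:\|\psi\|=1\}\to[0,1]$ by $f(\ket\psi)=\mu(\text{span}(\ket\psi))$. Countable additivity, applied to orthogonal decompositions of $\mathcal{V}$ into one-dimensional subspaces, gives $\sum_i f(\ket{e_i})=1$ for every orthonormal basis $\{\ket{e_i}\}$; conversely, $\mu$ is recovered on every closed subspace $E$ by $\mu(E)=\sum_i f(\ket{e_i})$ over any orthonormal basis of $E$. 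So the theorem reduces to showing that every nonnegative frame function $f$ has the form $f(\ket\psi)=\bra\psi\rho\ket\psi$ for some positive semi-definite $\rho$ with unit trace.

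Next, I would reduce to the real three-dimensional case. Restricting $f$ to any real three-dimensional subspace of $\mathcal{V}$ produces a real frame function on the unit sphere of $\mathbb{R}^3$; if each such restriction is the restriction of a quadratic form, these forms patch together on $\mathcal{V}$ to yield a single Hermitian operator $\rho$. The problem then becomes: any function $f:S^2\to[0,1]$ with $f(\ket{e_1})+f(\ket{e_2})+f(\ket{e_3})=1$ for every orthonormal triple is the restriction of a quadratic form. The strategy has two halves: first, prove $f$ is continuous; second, show that any continuous frame function is quadratic.

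The main obstacle, and the technical heart of the argument, is continuity. I would prove it by a geometric oscillation argument: if one perturbs a unit vector $\ket{e_1}$ slightly, one can arrange the orthogonal complement basis $\{\ket{e_2},\ket{e_3}\}$ to vary in a controlled way, so that the frame-function identity forces $f$ to oscillate only slightly. Chaining such perturbations along carefully designed sequences of orthonormal triples, following Gleason's construction, yields a uniform modulus of continuity. Essentially all the real work of the proof lives in this step, and it uses crucially that $\dim\mathcal{V}\ge3$, which is exactly why the two-dimensional case must be excluded.

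Once $f$ is continuous, I would expand it in spherical harmonics as $f=\sum_{\ell\ge0} f_\ell$, where $f_\ell$ is the degree-$\ell$ component. The identity $\sum_i f(\ket{e_i})=1$, evaluated on all rotations of a fixed orthonormal frame, forces $f_\ell\equiv 0$ for every $\ell\ne 0,2$: the odd components vanish by parity (replacing $\ket{e_i}$ by $-\ket{e_i}$ gives the same triple), while the even components of degree $\ge4$ cannot conspire to produce a constant under the action of $SO(3)$ on triples. What remains, $f_0+f_2$, is precisely the restriction to $S^2$ of a quadratic form $\ket\psi\mapsto\bra\psi\rho\ket\psi$; positivity of $\rho$ follows from $f\ge 0$ and $\tr\rho=1$ from the frame-function condition. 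Finally, to recover $\mu(E)=\tr(P\rho)$ for an arbitrary closed subspace $E$ with projector $P$, I would sum $f$ over an orthonormal basis of $E$ and apply Proposition~\ref{probpureprojec} together with the linearity of the trace.
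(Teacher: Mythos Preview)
The paper does not prove Gleason's theorem; it only states the result and refers the reader to Gleason's original article \cite{Gle} and Bell's simplified treatment \cite{Bell66}. Your outline is precisely the skeleton of Gleason's original argument---frame-function reduction, reduction to the real three-dimensional sphere, the hard continuity step, and identification of continuous frame functions with quadratic forms via spherical harmonics---so it is consistent with the reference the paper cites.

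Two places in your sketch would need more care if you actually wrote it out. First, the passage from a complex $d$-dimensional $\mathcal{V}$ to real three-dimensional subspaces is not as direct as you suggest: one does not simply restrict to ``real three-dimensional subspaces'' of a complex space, but rather first handles the real $3$-dimensional case, then extends to higher real dimensions, and finally treats the complex case by a separate argument (or by viewing $\mathbb{C}^d$ as $\mathbb{R}^{2d}$ with the appropriate compatibility of frames). Second, the claim that the even spherical-harmonic components of degree $\ell\ge 4$ ``cannot conspire to produce a constant'' is exactly the representation-theoretic fact that needs proof; it is true, but it is not automatic. These are standard and are handled in the references, so as an outline your proposal is sound.
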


The proof of Gleason's theorem can be found in his original work \cite{Gle} and in a simplified version in \cite{Bell66}. In \cite{Ara}, the author discusses other theorems that formalize Born's rule.

We will see later that Gleason's theorem can give us a hint for the construction of local models.

\subsection{Projective measurements}\label{projective}

A simple and important kind of measurements occurs when the measurement operators $\{P_i\}$ are projectors, i.e., $P_i^2=P_i$, and satisfies $P_iP_j=\delta_{ij}P_i$. In this case, the completeness relation becomes
\begin{equation*}
  \sum_i{P_i}=I
\end{equation*}
and the probabilities are given by
\begin{equation*}
  \tr(P_i\rho P_i^{\dagger})=\tr(P_i\rho P_i)=\tr(P_i^2\rho)=\tr(P_i\rho).
\end{equation*}
If the state is pure, then
\begin{equation*}
  \tr(P_i\rho)=\tr(P_i \ket{\psi}\bra{\psi})=\bra{\psi}P_i\ket{\psi},
\end{equation*}
where the last equality is given by Proposition \ref{probpureprojec}.

While a general measurement can be associated with a simple partition of the identity $I$, in the projective case the association can go further, identifying each projective measurement $\{P_i\}$ with the Hermitian operator
\begin{equation*}
  M=\sum_i{m_iP_i}
\end{equation*}
(here in its spectral decomposition form), acting on the space state of the system. In this case, the measurement operators are the projectors onto its eigenvectors and the possible outcomes are its eigenvalues. Such Hermitian operator $M$ is called an \emph{observable}. The set of observables of a system $\mathcal{H}$ will be denoted by $\mathcal{O(H)}\subset\mathcal{L(H)}$.

\begin{exemplo} Important examples of observables are the Pauli matrices,
\begin{equation*}
  \sigma_x=\left(
             \begin{array}{cc}
               0 & 1 \\
               1 & 0 \\
             \end{array}
           \right), \ \sigma_y=\left(
             \begin{array}{cc}
               0 & -i \\
               i & 0 \\
             \end{array}
           \right),\ \sigma_z=\left(
             \begin{array}{cc}
               1 & 0 \\
               0 & -1 \\
             \end{array}
           \right).
\end{equation*}

Together with the identity $I_{2\times2}$, the Pauli matrices span the real vector space $H_2(\mathbb{C})$ of the $2\times2$ Hermitian matrices.

The measurement of observable $\sigma_i$ is referred as ``measurement of spin along the $i$ axis", for $i=x,y,z$.
\end{exemplo}

\begin{exemplo}
More generally, given $\ket v=(v_x, v_y, v_z)$ a real three-dimensional unit vector, we can define the observable\footnote{We denote $\ket a \cdot \ket b$ simply by $a \cdot b$, in order to do not overload the notation.}
\begin{equation*}
  v\cdot\sigma \equiv v_x\sigma_x+v_y\sigma_y+v_z\sigma_z=\left(
             \begin{array}{cc}
               v_z & v_x-iv_y \\
               v_x+iv_y & -v_z \\
             \end{array}
           \right),
\end{equation*}
which has eigenvalues $\pm1$. Hence, a projective measurement related to $v\cdot\sigma$ has possible outcomes $\pm1$ and projectors onto the corresponding eigenspaces are given by
\begin{equation*}
  P_{\pm}=\frac{I\pm v \cdot \sigma}{2}
\end{equation*}
and thus the corresponding probabilities of the measurement applied to the pure state $\ket{\psi}$ are
\begin{eqnarray*}
% \nonumber to remove numbering (before each equation)
  \p(\pm1)&=&\tr\left(\ketbraa{\psi}\frac{I\pm v \cdot \sigma}{2}\right) \\
   &=& \frac{1\pm\bra{\psi}v\cdot\sigma\ket{\psi}}{2}.
\end{eqnarray*}

The measurement of this observable is referred to as ``measurement of spin along the $\ket v$ direction".
\end{exemplo}

\begin{exemplo}\label{shaka}
Take the pure state
\begin{equation*}
\ket{+}=\frac{1}{\sqrt2}(\ket0+\ket1)
\end{equation*}
and let's measure the observable $\sigma_z$. As said above, its eigenvalues are $\pm1$ associated to eigenvectors $\ket0, \ket1$, with respective projectors (measurement operators) $P_{\pm}^z=(I\pm\sigma_z)/2$. Hence, the possible resulting states post-measurement are $\ket0$ and $\ket1$, occurring with probability $\tr(P_-^z\ketbraa+)=1/2=\tr(P_+^z\ketbraa+)$.

On the other side, $\ket+$ is itself an eigenvector of the observable $\sigma_x$, together with $\ket-=(\ket0-\ket1)/\sqrt2$. Notice that we can also choose the opposite way and write the ``z-basis" $\{\ket0,\ket1\}$ in terms of the ``x-basis" $\{\ket+,\ket-\}$:
\begin{equation*}
  \ket0=\frac{1}{\sqrt2}(\ket++\ket-), \ \ket1=\frac{1}{\sqrt2}(\ket+-\ket-).
\end{equation*}

Therefore, measuring observable $\sigma_x$ on any of the states $\ket0, \ket1$, the probability to obtain any of the possible resulting states $\ket-, \ket+$ is $\tr(P_-^x\ketbraa+)=1/2=\tr(P_+^x\ketbraa+)$, where the measurement operators are $P_{\pm}^x=(I\pm\sigma_x)/2$.

We conclude from these observations that measuring observable $\sigma_z$ on an eigenvector of $\sigma_x$ leads to a resulting state uniformly distributed on $\{\ket0, \ket1\}$; but measuring $\sigma_x$ on any of those states leads to a resulting state uniformly distributed on $\{\ket+,\ket-\}$. Thus it is impossible for a particle to have spin determined simultaneously in both directions $x$ and $z$. Though it perhaps seems an unimportant conclusion, this fact will play a decisive role in Section \ref{epr}.

\end{exemplo}

\subsection{POVMs}

Observe that, since the probabilities of a measurement $\{M_i\}$ are given by $\tr(M_i\rho M_i^{\dagger})=\tr(\rho M_i^{\dagger}M_i)$, we can associate to the measurement the positive operators $\{Q_i\}$, where $Q_i=M_i^{\dagger}M_i$. Although such operators are not enough to determine the resulting post-measurement state, many times we are only interested in the probabilities $\tr(\rho Q_i)$, and we can restrain our attentions to the simplified description of the measurement provided by the $\{Q_i\}$. These operators are known as the \emph{POVM elements} associated to the measurement (sometimes called \emph{effects}), where the acronym holds for positive operator-value measure.

\section{The Bloch sphere}

The simplest quantum system that we can imagine is the one associated to the Hilbert space $\mathbb{C}^2$, in which the pure states are unit vectors of the form $\ket{v}=\alpha \ket0+\beta \ket1$ and are called \emph{qubits}. As we imply, the vectors $\ket0, \ket1$ form an orthogonal base for $\mathbb{C}^2$.

Because $\ket{v}$ has norm 1, we must have $|\alpha|^2+|\beta|^2=1$. Writing $\alpha=\alpha_1+i\alpha_2$ and $\beta=\beta_1+i\beta_2$, this condition becomes $\alpha_1^2+\alpha_2^2+\beta_1^2+\beta_2^2=1$, so in principle we see that the set of qubits is isomorphic to the sphere $\mathbb{S}^3\subset\mathbb{R}^4$.

However, given an observable of $\mathcal{L}(\mathbb{C}^2)$, its eigenvectors are orthogonal, thus they form a base for $\mathbb{C}^2$. The coefficients of a pure state written in such base are related to the probabilities with which the resulting post-measurement state becomes the corresponding eigenvector.

\begin{exemplo} If $\ket{v}=\alpha \ket0+\beta \ket1$ and we measure $\sigma_z$ (whose eigenvectors are $\ket0$ and $\ket1$), then
\begin{equation*}
  \tr(\ketbraa0 \ketbraa{v})=|\alpha|^2 , \ \tr(\ketbraa1 \ketbraa{v})=|\beta|^2,
\end{equation*}
that is, the resulting state is $\ket0$ with probability $|\alpha|^2$ and $\ket1$ with probability $|\beta|^2$.
\end{exemplo}

Hence, since a state describes the system by giving the probabilities of the possible outcomes of any measurement, we can identify the pure states $\ket{v}$ and $e^{i\theta}\ket{v}$: the $e^{i\theta}$ factor will not influence in the computation of probabilities $|\alpha|^2$ and $|\beta|^2$. Another way to say this is that $\ket{v}$ and $e^{i\theta}\ket{v}$ are physically indistinguishable.

This identification tells us that we have no need of four real coefficients to describe the set of pure states physically distinct, only three will suffice\footnote{The right way to formalize this claim is by showing that the referred identification determines a equivalence relation and considering the quotient space generated by it \cite{ABC}.}. Thus, the set of physically distinct qubits is isomorphic to $\mathbb{S}^2$.

As we saw in Section {\ref{projective}}, to each real three-dimensional unit vector $\ket v$ can be associated the observable $v\cdot \sigma$. Now we know that each of such vectors can also be associated uniquely to a pure state. In order to distribute the pure states $\ket v$ in $\mathbb{S}^2$, we can take each of them to correspond to the direction of the eigenvector associated to +1 in the measurement of the observable $v \cdot \sigma$. The sphere $\mathbb{S}^2$ together with this correspondence is called \emph{the Bloch sphere}.

Notice that the pair of vectors that are orthogonal in $\mathbb{R}^3$ are collinear in the Bloch sphere: the states corresponding to vectors $\ket0, \ket1$ points to the north and south poles, and $\ket-, \ket+$ are in opposite points of the equator of the Bloch sphere. We can think that, in the process of allocation of states corresponding to vectors from $\mathbb{R}^3$ in the Bloch sphere, the angles between the vector and the positive $y$-axis get doubled. This mean that if we consider a scalar product over the Bloch sphere, it will be given by
\begin{equation*}
\ket a \cdot \ket b= \cos(2\theta),
\end{equation*}
where $\theta$ is the angle between the vectors in $\mathbb{S}^2$.

\begin{figure}[h!]
  \centering
  % Requires \usepackage{graphicx}
  \includegraphics[width=15cm]{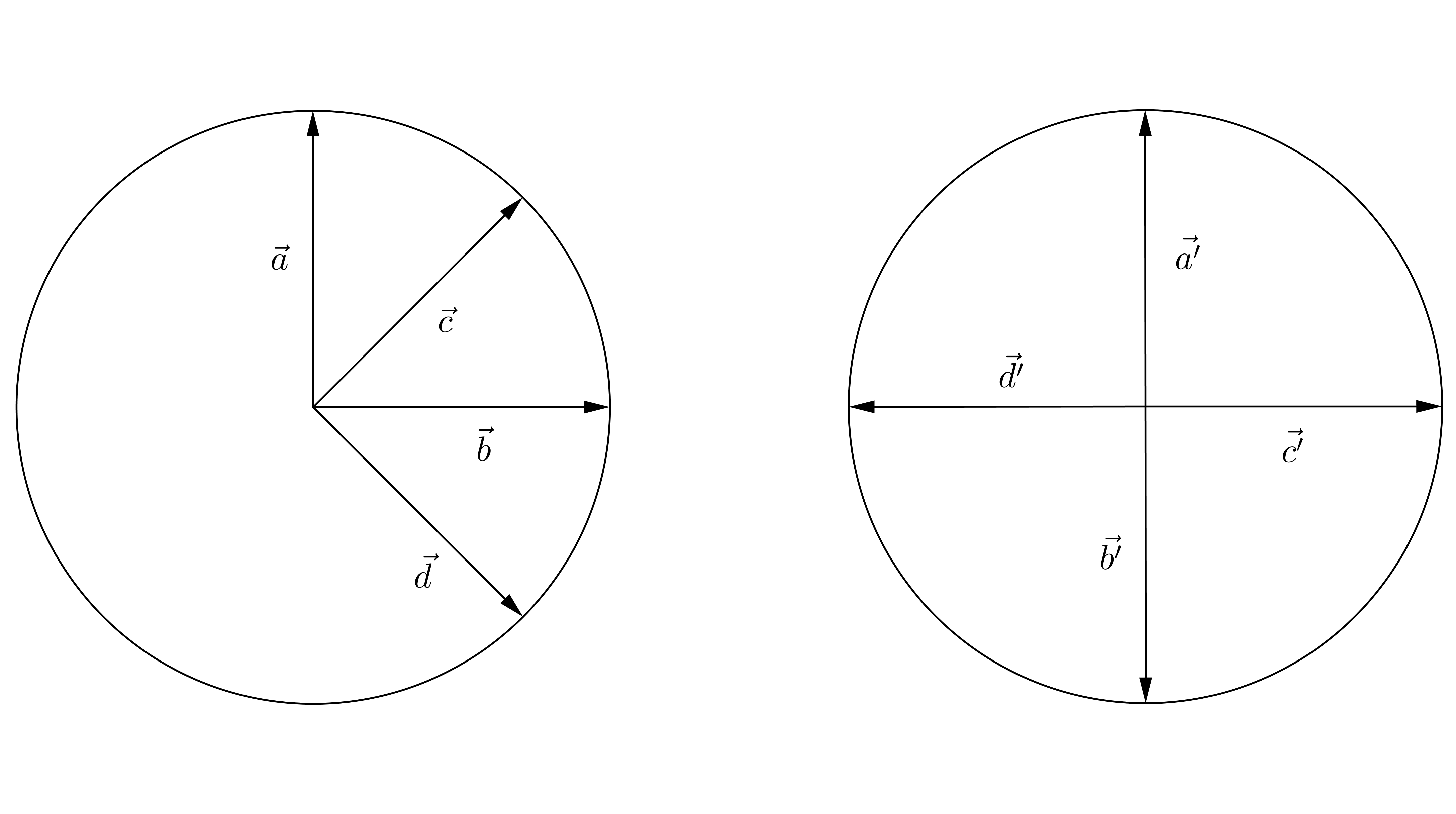}\\
  \caption{\small{Suppose the vectors $\vec{a},\vec{b}$ are $\ket0, \ket1$ and vectors $\vec{c},\vec{d}$ are $\ket+,\ket-$. In the left side we have them represented in a section of the sphere $\mathbb{S}^2$ and in the right, in a section of the Bloch sphere.}}
\end{figure}

\section{The partial trace}

\begin{definicao} Let $\mathcal{H_A}\otimes \mathcal{H_B}$ be a composite Hilbert space. We define the partial trace (in relation to $\mathcal{H_B}$) $\tr_B:\mathcal{L(H_A)}\otimes \mathcal{L(H_B)}\rightarrow \mathcal{L(H_A)}$ by
\begin{equation*}
  \tr_B(M_A\otimes M_B)=M_A\cdot \tr(M_B)
\end{equation*}
for product operators and extend to non-product operators by linearity. Analogously, we define the partial trace in relation to subsystem $\mathcal{H_A}$.
\end{definicao}

Given a state of a composite system $\mathcal{H_A\otimes H_B}$, we can find the state that describe each subsystem via partial trace. Such state is called \emph{reduced density operator}.

\begin{definicao} Let $\rho$ be the state that describes the composite system $\mathcal{H_A\otimes H_B}$. Then the reduced density operator
\begin{equation*}
  \rho_A=\tr_B(\rho)
\end{equation*}
describes subsystem $\mathcal{H_A}$.
\end{definicao}

The description provided by the partial trace referred above is about measurements: suppose that Alice shares the state $\rho$ with Bob and perform a measurement $\{M_i\}$. Then $\tr_B(\rho)=\rho_A$ is the only state that satisfies
\begin{equation*}
  \tr(M_a\otimes I \rho)=\tr(M_a \rho_A)
\end{equation*}
for any $a$ and $\{M_i\}$, that is, is the only state that provides the right probability of obtain outcome $a$ when we consider a measurement only on Alice's system. In other words, concerning to local measurements only on Alice's side, to say that Alice shares state $\rho$ with Bob is the same as to say that Alice holds the state $\tr_B(\rho)$.

\begin{exemplo}\label{aldebaran} Suppose that Alice and Bob shares the singlet state $\ketbraa{\Psi_-}$. Then Alice's reduced state is
\begin{eqnarray*}
% \nonumber to remove numbering (before each equation)
  \tr_B(\ketbraa{\Psi_-}) &=& \frac{1}{2}\tr_B(\ketbraa{01}+\ketbraa{10}-\kett01\braa10- \kett10\braa01) \\
   &=& \frac{1}{2}(\ketbraa0\tr(\ketbraa1)+\ketbraa1\tr(\ketbraa0)- \ketbra{0}{1}\tr(\ketbra{1}{0})\\ &&-\ketbra{1}{0}\tr(\ketbra{0}{1})) \\
   &=& \frac{1}{2}(\ketbraa{0}+\ketbraa1)\\
   &=& \frac{I_{2\times2}}{2}.
\end{eqnarray*}
\end{exemplo}

\section{EPR}\label{epr}

In objection to the view where measurements in quantum systems do not properly reveal some already existing information, but rather just lead the system to probabilistically assume one of the possible resulting states, Albert Einstein, together with Nathan Rosen and Boris Podolsky proposed a thought experiment \cite{EPR}. Their idea was to show, using entangled states, that the description of reality provided by Quantum Theory was not complete. Here, we are going to use the simplified formulation given by David Bohm \cite{Bohm}.

The first step is to understand what the authors propose as description of reality.

\begin{definicao}\label{shura} An element of reality is a physical property that can be determined precisely (i.e., with probability equal to unity) without disturbing the system. A theory is complete if contains a counterpart for every element of reality.
\end{definicao}

EPR consider a bipartite system where the parts, Alice and Bob, share the singlet state
\begin{equation*}
  \ket{\Psi_-}=\frac{1}{\sqrt2}(\kett01 -\kett10).
\end{equation*}
The singlet state has the property\footnote{This property is derived from the fact that the singlet has total spin equal to zero, which roughly means that each part must have its spin pointing to exactly opposite directions \cite{Bohm}.} that, given any direction $\ket v$, it can be written as
\begin{equation*}
  \ket{\Psi_-}=\frac{1}{\sqrt2}(\kett{v_+}{v_-} -\kett{v_-}{v_+}),
\end{equation*}
where $\ket{v_+}, \ket{v_-}$ are eigenstates of the observable $v\cdot\sigma$, related to eigenvalues $\pm1$. This means that the outcomes of any measurement of $v\cdot\sigma$ in both parts will be perfectly anti-correlated: if Alice obtains +1, then Bob obtains -1, and vice-versa.

Now imagine that Alice and Bob are far away from each other, Alice measures $\sigma_z$ and obtains outcome +1. So, if Bob also measures $\sigma_z$ in his particle, we know that the resulting outcome will be -1. Alice's measurement cannot instantaneously disturb Bob's system because of the distance separating them, but still we can predict with probability 1 the value of $\sigma_z$. Thus, according to Definition \ref{shura}, $\sigma_z$ is an element of reality of Bob's system.

We can imagine that, instead of $\sigma_z$, Alice measures $\sigma_x$ and using the same argument conclude that $\sigma_x$ is an element of reality of Bob's system as well. (Does not really matter if Alice indeed measured $\sigma_z$ or $\sigma_x$ or did not do any measurement at all: the important fact is that it would be possible to predict the property `spin along the $z$-axis' or `spin along the $x$-axis' of Bob's particle.) Therefore, there exist elements of reality corresponding to those properties simultaneously. A complete theory should, therefore, simultaneously assign values for such properties.

However, we saw in Example \ref{shaka} that Quantum Theory forbids both observables to be determined at the same time\footnote{This rest upon the fact that $\sigma_z$ and $\sigma_x$ do not commute. Any other pair of non-commuting observables would do the job, as we will consider in Chapter \ref{degorre}.}.The conclusion of EPR was that Quantum Theory is not complete. Despite being a good approximation of experimental data, they believed that should exist another theory, satisfying their completeness criterium without paradoxes.

It is important to emphasize that we have taken two major assumptions to insure this conclusion: the \emph{realism} (the intrinsic existence of values for elements of reality, independent of measurements) and the \emph{locality} (distant measurements cannot influence each other instantaneously). Hence, the EPR argument proves that \emph{if} Quantum Theory were a locally realistic theory, \emph{then} would not be a complete one.

In the next chapter, we will see that Quantum Theory is not locally realistic (or, as we will simply say from now on, \emph{local}).

\chapter{Nonlocality}\label{nonlocality}

Nearly thirty years after Ref. \cite{EPR} was published, John Bell made a proposal that allows one to experimentally test whether Quantum Theory is locally realistic or not \cite{Bell64}. The main tool used for performing such test was a linear inequality for the joint probabilities (or equivalently, the joint expectations) of both parts, the first of a class of inequalities now known as \emph{Bell inequalities}. To obey the Bell inequalities is a condition that a local-realist theory should satisfy. As we will see, Quantum Theory is able to violate such inequalities, consequently annulling the incompleteness' conclusion of EPR.

Though Bell have developed the first Bell inequality in 1964, the most popular one is the CHSH inequality, dating from 1969.

\section{The CHSH inequality}\label{belldesi}

The CHSH inequality \cite{CHSH}, named after its authors, Clauser, Horne, Shimony and Holt, is the only one that we will have to keep in mind for the discussions proposed in this thesis.

Suppose that we have two parts, Alice and Bob, and a referee, a third part which is not related to any system, but is able to prepare two particles and distribute them to Alice and Bob, repeating the procedure an unlimited number of times. Once each part has its particle, it must choose among two possible measurements, say $M_Q$ or $M_R$ for Alice and $M_S$ or $M_T$ for Bob. They do not know in advance which measurement they will perform; by receiving the particle, each of the two decides it randomly. Also, each measurement $M_i$ has outcome $i$, for $i=Q,R,S,T$, that takes values on $\{1,-1\}$.

Making the assumption of realism, we will consider that $Q$ (and similarly for $R,S,T$) is an objective property of Alice's particle, being merely revealed by the measurement $M_Q$. The course of action is arranged in such a way that Alice and Bob do their measurements at the precise same time. Thus, assuming locality, Alice's measurement cannot disturb the result of Bob's measurement, and vice-versa.

We will now consider the quantity
\begin{equation*}
QS+RS+RT-QT.
\end{equation*}
Since the above expression can be rewritten as
\begin{equation*}
  Q(S-T)+R(S+T)
\end{equation*}
and $S,T\in\{1,-1\}$, we must have $Q(S-T)=0$ or $R(S+T)=0$. In either case, we have
\begin{equation*}
QS+RS+RT-QT\in\{+2,-2\}.
\end{equation*}

Now, if $\p(q,r,s,t)$ is the probability that right before the measurements are performed the system is in a state that $Q=q, R=r, S=s$ and $T=t$, then the expectation becomes
\begin{eqnarray*}
% \nonumber to remove numbering (before each equation)
  \E(QS+RS+RT-QT) &=& \sum_{q,r,s,t\in\{+1,-1\}}{\p(q,r,s,t)(qs+rs+rt-qt)} \\
   &\leq& 2\sum_{q,r,s,t\in\{+1,-1\}}{\p(q,r,s,t)}\\
   &=& 2.
\end{eqnarray*}

On the other hand, expectation is linear,
\begin{equation*}
\E(QS+RS+RT-QT)=\E(QS)+\E(RS)+\E(RT)-\E(QT).
\end{equation*}

Combining the previous equations, we obtain the CHSH inequality \cite{CHSH}
\begin{equation}\label{CHSH}
\E(QS)+\E(RS)+\E(RT)-\E(QT)\leq2.
\end{equation}

With this inequality in hands, Alice and Bob are able to compare whether or not the expected values predicted by their theory obeys the inequality. Depending on the result of the comparison (more specifically, if the inequality is violated), they conclude that their theory is nonlocal. They can also proceed to perform a series of measurements in the above conditions on a shared state. The outcomes obtained by each part can then be put together, and the integrated data can be used to approximate each of the expected values in question. If they get an inequality violation, they will be experimentally confirming the nonlocality of Nature itself. If they check that the data obtained match their predictions, they will be showing that Nature agrees with their nonlocal theory.

We now proceed to prove that Quantum Theory is not local.

\begin{exemplo}\label{masc da mort}
Suppose that Alice and Bob share the singlet state
\begin{equation*}
\ket{\Psi_-}=\frac{\kett01-\kett10}{\sqrt2}.
\end{equation*}
Consider now the observables
\begin{equation*}
  M_Q=\sigma_z,\ M_R=\sigma_x
\end{equation*}
for Alice and
\begin{equation*}
M_S=\frac{-\sigma_z-\sigma_x}{\sqrt2},\ M_T=\frac{\sigma_z-\sigma_x}{\sqrt2}
\end{equation*}
for Bob.

The observables $M_Q, M_S$ have eigenvalues $+1, -1$, with respective eigenvectors $\ket0, \ket1$ for $M_Q$ and $\ket0+(-1-\sqrt2)\ket1, \ket0+(-1+\sqrt2)\ket1$ for $M_S$. Their projectors onto the eigenspaces are
\begin{eqnarray*}
% \nonumber to remove numbering (before each equation)
  P_+^Q&=&\ketbraa0 \\ P_-^Q&=&\ketbraa1 \\
  P_+^S&=&\frac{\ketbraa0+(3+2\sqrt2)\ketbraa 1 +(-1-\sqrt2)[\ketbra{0}{1}+\ketbra{1}{0}]\ketbraa1}{4+2\sqrt2} \\ P_-^S&=&\frac{\ketbraa0+(3-2\sqrt2)\ketbraa 1 +(-1+\sqrt2)[\ketbra{0}{1}+\ketbra{1}{0}]\ketbraa1}{4-2\sqrt2}.
\end{eqnarray*}

Hence, Quantum Theory predicts that
\begin{eqnarray*}
% \nonumber to remove numbering (before each equation)
  \E(QS) &=& \p(qs=1)-\p(qs=-1) \\
   &=& [\p(q=1,s=1)+\p(q=-1, s=-1)]-[\p(q=+1,s=-1)\\&& +\p(q=-1,s=+1)] \\
   &=& [\tr(P_+^Q\otimes P_+^S\ketbraa{\Psi_-})+\tr(P_-^Q\otimes P_-^S\ketbraa{\Psi_-})]\\ &&- [\tr(P_+^Q\otimes P_-^S\ketbraa{\Psi_-})+\tr(P_-^Q\otimes P_+^S\ketbraa{\Psi_-})]\\
   &=& \left[\frac{3+2\sqrt2}{4(2+\sqrt2)}+\frac{1}{4(2-\sqrt2)}\right]- \left[\frac{3-2\sqrt2}{4(2-\sqrt2)}+\frac{1}{4(2+\sqrt2)}\right]\\
   &=& \frac{1}{\sqrt2}.
\end{eqnarray*}

Similarly, we may find that
\begin{equation*}
  \E(RS)=\E(RT)=\frac{1}{\sqrt2}, \ \E(QT)=-\frac{1}{\sqrt2}.
\end{equation*}
Thus,
\begin{equation*}
  \E(QS)+\E(RS)+\E(RT)-\E(QT)=2\sqrt2>2.
\end{equation*}

Since the CHSH inequality is violated, we conclude that Quantum Theory is nonlocal.

\end{exemplo}

\section{The Horodecki Criterion}\label{horocrit}

Notice that the CHSH inequality (and similarly, all other Bell inequalities) can be seen as a superoperator, whose entries are a quantum state and four observables. In order to test a violation, even after fixing a state, we may achieve different values for the expectation by varying the set of observables.

In Example \ref{masc da mort}, we saw that, for the singlet state, there exist observables $M_Q, M_R, M_S, M_T$ for which the CHSH is violated. This, however, is not the case for all quantum states. In 1995, Ryszard Horodecki, Pawel Horodecki and Michal Horodecki presented a necessary and sufficient condition for an arbitrary quantum state of $\mathbb{C}^2\otimes\mathbb{C}^2$ to violate the CHSH inequality and made explicit the greatest value achieved by that state \cite{HHH}.

Associated to the CHSH inequality, there is the observable
\begin{equation*}
\mathcal{B}_{CHSH}(x,x',y,y')=x\cdot \sigma \otimes (y+y')\cdot\sigma + x'\cdot\sigma\otimes(y-y')\cdot\sigma,
\end{equation*}
called the CHSH \emph{Bell operator}, where $\ket x, \ket{x'}, \ket y, \ket{y'}$ are unit vectors in $\mathbb{R}^3$. The inequality in question thus becomes
\begin{equation*}
\E_{\rho}(\mathcal{B}_{CHSH})\leq2,
\end{equation*}
where the subscript emphasize the referred state. Our objective is to check if $\rho$ violates it, hence we want to maximize the expectation over all Bell operators and show that such quantity is achieved by some Bell operator $\mathcal{B}_{max}$:
\begin{equation*}
\sup_{\mathcal{B}_{CHSH}}{\E_{\rho}(\mathcal{B}_{CHSH})}=\E_{\rho}(\mathcal{B}_{max}).
\end{equation*}

The Horodecki start by showing that for all $\rho\in\mathcal{D}(\mathbb{C}^2\otimes\mathbb{C}^2)$, there are vectors $\ket r, \ket s \in \mathbb{R}^3$ such that
\begin{equation*}
\rho=\frac{1}{4}\left( I\otimes I + (r\cdot\sigma)\otimes I + I\otimes (s\cdot \sigma) + \sum_{m,n=1}^3{t_{nm}\sigma_n \otimes \sigma_m} \right),
\end{equation*}
where $\sigma_x=\sigma_1, \sigma_y=\sigma_2$ and $\sigma_z=\sigma_3$ are the Pauli matrices and $t_{nm}=\tr(\rho\sigma_n\otimes\sigma_m)$.

Consider the $3\times3$ matrix $T_{\rho}$ composed by the coefficients $t_{mn}$. A simple calculation gives us
\begin{equation*}
\E_{\rho}(\mathcal{B}_{CHSH})=\bra x T_{\rho}(\ket y+ \ket y')+\bra {x'} T_{\rho}(\ket y- \ket {y'}).
\end{equation*}

By noticing that the vectors $\ket y+ \ket {y'}, \ket y- \ket {y'}$ are orthogonal and satisfy $||\ket y+ \ket {y'}||^2+||\ket y- \ket {y'}||^2=4$, where $||\cdot||$ is the Euclidian norm, we can introduce the pair of unit and mutually orthogonal vectors $\ket z, \ket {z'}$ such that
\begin{equation*}
\ket y+ \ket {y'}=2\cos\theta \ket z,\ \ \ \ket y- \ket {y'}=2\sin\theta\ket {z'},
\end{equation*}
for some $\theta \in [0,\pi]$. Then we have
\begin{eqnarray}\label{saori}
% \nonumber to remove numbering (before each equation)
  \sup_{\mathcal{B}_{CHSH}}{\E_{\rho}(\mathcal{B}_{CHSH})} &=& \sup_{\theta, \ket x, \ket{x'}, \ket z, \ket{z'}}{2\left[\bra x T_{\rho}\ket z \cos\theta + \bra {x'} T_{\rho}\ket {z'} \sin\theta\right]} \\
   &=& \sup_{\theta, \ket z, \ket{z'}}{2\left[||T_{\rho}\ket z|| \cos\theta + ||T_{\rho}\ket {z'}|| \sin\theta\right]} \\
   &=& \sup_{\ket z, \ket{z'}}{2\sqrt{||T_{\rho}\ket z||^2 + ||T_{\rho}\ket {z'}||^2}}
\end{eqnarray}

Now, consider the positive matrix $U_{\rho}=T_{\rho}^{\dagger}T_{\rho}$ and denote its two greatest eigenvalues by $u, \tilde u$. Using Lagrange multipliers, it is possible to show that
\begin{equation*}
\sup_{\{\ket j,\ket k\}}{(||T_{\rho}\ket j||^2 + ||T_{\rho}\ket k||^2)}=u + \tilde u =: M(\rho),
\end{equation*}
where the supreme is taken over all orthonormal subsets $\{\ket j,\ket k\}\subset \mathbb{R}^3$. Eq. (\ref{saori}) shows us that
\begin{equation*}
\sup_{\mathcal{B}_{CHSH}}{\E_{\rho}(\mathcal{B}_{CHSH})}=2\sqrt{M(\rho)}.
\end{equation*}

Conversely, one can take in turn $\ket {z_{max}}, \ket{z'_{max}}$ as the $U_{\rho}$ eigenvectors maximalizing $M(\rho)$; $\ket {x_{max}}, \ket{x'_{max}}$ as unit vectors in the directions $T_{\rho}\ket{z_{max}}, T_{\rho}\ket{z'_{max}}$; and $\theta_{max}$ defined by $||T_{\rho}\ket{z_{max}}||\sin\theta_{max}=||T_{\rho}\ket{z'_{max}}||\cos\theta_{max}$. Using these parameters to construct the observable $\mathcal{B}_{max}$, we have
\begin{equation*}
2\sqrt{M(\rho)}=\E_{\rho}(\mathcal{B}_{max})=\sup_{\mathcal{B}_{CHSH}}{\E_{\rho}(\mathcal{B}_{CHSH})}.
\end{equation*}

The above reasoning outlines the proof of the following result.

\begin{teo} There is a CHSH Bell operator $\mathcal{B}_{max}$ such that
\begin{equation*}
\E_{\rho}(\mathcal{B}_{max}) = \sup_{\mathcal{B}_{CHSH}}{\E_{\rho}(\mathcal{B}_{CHSH})} = 2\sqrt{M(\rho)}.
\end{equation*}
\end{teo}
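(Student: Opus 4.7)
The plan is to verify that the supremum analysis sketched above actually attains its maximum, and that no step in the chain of equalities is merely an upper bound. I would organize the proof as a sequence of reductions, each of which either is an exact computation or admits an explicit maximizer that I can later collect to assemble $\mathcal{B}_{max}$.

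First, I would justify the Pauli decomposition of $\rho$: since $\{I,\sigma_1,\sigma_2,\sigma_3\}$ is a real basis for the Hermitian $2\times 2$ matrices, $\{\sigma_i\otimes\sigma_j\}\cup\{I\otimes\sigma_j,\sigma_i\otimes I, I\otimes I\}$ spans the Hermitian $4\times 4$ matrices, and the coefficients are recovered by taking traces against the basis elements, using $\tr(\sigma_i\sigma_j)=2\delta_{ij}$ and the unit trace condition. Next, I would plug this expansion into
\begin{equation*}
\E_{\rho}(\mathcal{B}_{CHSH}) = \tr\bigl(\rho\,\mathcal{B}_{CHSH}(x,x',y,y')\bigr),
\end{equation*}
and note that the terms involving $r\cdot\sigma\otimes I$, $I\otimes s\cdot\sigma$ and $I\otimes I$ vanish because $\tr(\sigma_i)=0$ and $\tr(v\cdot\sigma)=0$; this leaves exactly $\bra x T_\rho(\ket y+\ket{y'}) + \bra{x'}T_\rho(\ket y-\ket{y'})$.

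Then I would perform the optimization in stages. Writing $\ket y+\ket{y'}=2\cos\theta\ket z$ and $\ket y-\ket{y'}=2\sin\theta\ket{z'}$ is an exact parametrization of ordered pairs of orthogonal vectors in $\mathbb{R}^3$ together with a parameter $\theta\in[0,\pi]$, by the parallelogram identity $\|\ket y+\ket{y'}\|^2+\|\ket y-\ket{y'}\|^2 = 4$. Fixing $\ket z,\ket{z'},\theta$, the maximization over unit vectors $\ket x,\ket{x'}$ is handled by Cauchy--Schwarz, attained when $\ket x$ and $\ket{x'}$ point along $T_\rho\ket z$ and $T_\rho\ket{z'}$ respectively (if these are nonzero; otherwise the corresponding $\ket x$ is free). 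The maximization over $\theta$ of $a\cos\theta+b\sin\theta$ with $a,b\geq 0$ gives $\sqrt{a^2+b^2}$, attained at $\tan\theta_{\max}=b/a$. This yields the formula in Eq.~(\ref{saori}).

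The main obstacle, and the step that actually requires work, is the Lagrange-multiplier claim that
\begin{equation*}
\sup_{\{\ket j,\ket k\}\text{ orthonormal}}\bigl(\|T_\rho\ket j\|^2+\|T_\rho\ket k\|^2\bigr)=u+\tilde u,
\end{equation*}
where $u\geq \tilde u$ are the two largest eigenvalues of $U_\rho=T_\rho^\dagger T_\rho$. I would prove this by applying the spectral theorem to the Hermitian positive semi-definite operator $U_\rho$, diagonalizing it as $U_\rho=\sum_i u_i \ket{e_i}\bra{e_i}$ with $u_1\geq u_2\geq u_3\geq 0$. For any orthonormal pair $\{\ket j,\ket k\}$, $\|T_\rho\ket j\|^2+\|T_\rho\ket k\|^2=\bra j U_\rho\ket j+\bra k U_\rho\ket k=\tr(U_\rho P)$, where $P=\ketbraa j+\ketbraa k$ is a rank-$2$ projector. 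The maximum of $\tr(U_\rho P)$ over rank-$2$ projectors is $u_1+u_2=u+\tilde u$, attained by $P=\ketbraa{e_1}+\ketbraa{e_2}$; this is a standard consequence of the Ky Fan maximum principle, which I would derive directly by writing $P=I-\ketbraa{e}$ for a unit $\ket e$ and minimizing $\bra e U_\rho\ket e$. Finally, to realize $\mathcal{B}_{max}$ I would retrace the parametrization: choose $\ket{z_{\max}}=\ket{e_1}$, $\ket{z'_{\max}}=\ket{e_2}$, the associated $\ket{x_{\max}},\ket{x'_{\max}},\theta_{\max}$ as described, and recover $\ket{y_{\max}}=\cos\theta_{\max}\ket{z_{\max}}+\sin\theta_{\max}\ket{z'_{\max}}$ and $\ket{y'_{\max}}=\cos\theta_{\max}\ket{z_{\max}}-\sin\theta_{\max}\ket{z'_{\max}}$; by construction these are unit vectors, the supremum is attained, and the three equalities of the theorem hold simultaneously.
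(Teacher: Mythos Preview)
Your proposal is correct and follows essentially the same route as the paper's argument: Pauli decomposition of $\rho$, reduction of $\E_\rho(\mathcal{B}_{CHSH})$ to the bilinear form in $T_\rho$, the $(\ket z,\ket{z'},\theta)$ reparametrization, Cauchy--Schwarz in $\ket x,\ket{x'}$, optimization in $\theta$, and finally the explicit construction of $\mathcal{B}_{max}$. The only notable difference is that where the paper simply invokes Lagrange multipliers for $\sup_{\{\ket j,\ket k\}}(\|T_\rho\ket j\|^2+\|T_\rho\ket k\|^2)=u+\tilde u$, you supply a clean direct argument via $\tr(U_\rho P)$ with $P=I-\ketbraa{e}$ a rank-$2$ projector in $\mathbb{R}^3$; this is a slight gain in self-containment but not a different strategy.
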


The straightforward corollary yielded is the announced Horodecki Criterion.

\begin{coro}[Horodecki Criterion] A state $\rho\in\mathcal{D}(\mathbb{C}^2\otimes\mathbb{C}^2)$ violates the CHSH inequality if and only if $M(\rho)>1$.
\end{coro}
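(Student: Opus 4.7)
The plan is to derive the corollary directly from the preceding theorem, which already identifies the maximum CHSH expectation for $\rho$ as $2\sqrt{M(\rho)}$. The definition of a Bell violation is the existence of at least one choice of CHSH Bell operator with expectation strictly greater than $2$, so the content of the corollary reduces to comparing that optimal value with the local bound of $2$.

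First, I would unpack the equivalence: $\rho$ violates the CHSH inequality if and only if there exists a CHSH Bell operator $\mathcal{B}_{CHSH}(x,x',y,y')$ with $\E_{\rho}(\mathcal{B}_{CHSH}) > 2$, which in turn is equivalent to $\sup_{\mathcal{B}_{CHSH}} \E_{\rho}(\mathcal{B}_{CHSH}) > 2$. The direction ``violation implies the supremum exceeds $2$'' is immediate, and the converse uses the fact that the supremum is attained by the operator $\mathcal{B}_{max}$ provided by the theorem, so strict inequality of the supremum yields a specific violating operator (this is where attainment, not just boundedness, of the supremum matters).

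Next I would invoke the preceding theorem, which asserts the identity
\begin{equation*}
\sup_{\mathcal{B}_{CHSH}} \E_{\rho}(\mathcal{B}_{CHSH}) = 2\sqrt{M(\rho)}.
\end{equation*}
Substituting this into the equivalence above gives
\begin{equation*}
\rho \text{ violates CHSH} \iff 2\sqrt{M(\rho)} > 2 \iff \sqrt{M(\rho)} > 1 \iff M(\rho) > 1,
\end{equation*}
where the last step uses that $M(\rho)$ is a sum of eigenvalues of the positive semidefinite matrix $U_{\rho} = T_{\rho}^{\dagger} T_{\rho}$ and hence nonnegative, so squaring preserves the strict inequality.

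Since the entire substantive work, namely the supremum computation in terms of the singular values of $T_{\rho}$, was carried out in the proof of the theorem, there is no real obstacle here; the only subtlety worth flagging is ensuring one has equivalence (not merely implication) between the existence of a single violating Bell operator and the supremum exceeding the threshold, which is handled by the explicit construction of $\mathcal{B}_{max}$ earlier in the text.
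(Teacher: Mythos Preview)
Your proposal is correct and matches the paper's approach: the paper simply calls this a ``straightforward corollary'' of the preceding theorem without giving a separate argument, and your derivation spells out exactly the intended comparison $2\sqrt{M(\rho)}>2 \iff M(\rho)>1$. The only addition you make is the explicit remark that attainment of the supremum by $\mathcal{B}_{max}$ is what turns the supremum condition into an actual violation, which is a fair point the paper leaves implicit.
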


\section{Local hidden variables}

The EPR argument can be generalized to a ``Bell scenario", by considering an arbitrary number of systems, possible measurements and possible outcomes. What is central in such kind of experiment is that - let's suppose the number of systems is two - Alice and Bob both receive from a referee a shared state and give rise to joint probabilities $\p(ab|x,y)$, where $a(b)$ is a possible outcome for $x(y)$, one of the possible measurements available to be performed by Alice (Bob). As we saw, in general we have
\begin{equation*}
\p(a,b|x,y)\neq\p(a|x)\p(b|y),
\end{equation*}
that is, the outcomes on both sides are not always independent from each other. The existence of such correlations does not necessarily imply a direct influence of one system on the other, though. It can be the case that it is only being revealed a dependence relation between the two systems established in the past, when they may have interacted \cite{BCPSW}.

If that is the case, there is a set of past factors, described by some variables $\lambda$, which influences the outcomes, explaining completely the dependence between them. This means that we must be able to write
\begin{equation}\label{afrodite}
\p(a,b|x,y,\lambda)=\p(a|x,\lambda)\p(b|y,\lambda),
\end{equation}
standing for the fact that the only needed information to determine the probability of outcome $a$ occurring is the local measurement $x$ in question and the past variables $\lambda$; the distant measurement performed by Bob has nothing to do with it.

Since we are not claiming to have any knowledge over $\lambda$ besides its existence, in principle it may involve physical quantities that are not completely controllable. Thus, $\lambda$ will not necessarily be constant for each run of the experiment, even if the referee is careful enough to use the exact same process to prepare the states. Thus, we must consider also a probability distribution $\pi(\lambda)$ over $\Lambda$, the set where these variables inhabit, that rules the different values of $\lambda$ for different experiments.

Eq. (\ref{afrodite}) together with this considerations lead to
\begin{equation}\label{saga}
\p(a,b|x,y)=\int_{\Lambda}{\pi(\lambda)\p(a|x,\lambda)\p(b|y,\lambda)}.
\end{equation}

Another observation that should be made is that the distribution $\pi(\lambda)$ is supposed to be independent of the chosen measurements, i.e., for any choice of $x,y$, the $\lambda$ variables are distributed in the same way. In other words, $\pi(\lambda)$ must \emph{a priori} take in consideration all possible measurements on the system.

By using Eq. (\ref{saga}), it is possible to rewrite the same expectations of Example \ref{masc da mort} in terms of $\lambda$, and then derive the CHSH inequality (\ref{CHSH}), as is done in \cite{BCPSW}. That is, the assumption of this past variables $\lambda$ can play the part of the local realism assumption, formulated by EPR. Based on that, we can take Eq. (\ref{saga}) as the condition for locality. The variables $\lambda$ are called \emph{local hidden variables}: `local' because their knowledge enable us to make a local factorization of the joint probabilities, `hidden' because they are apart from the state\footnote{We are considering the terms locality and local realism to have basically the same meaning. For a discussion on those terminologies, check \cite{Nor}.}.

\section{Local models}\label{local models}

Strictly speaking, locality is a feature of families of probabilities distributions $\{\p(a,b|x,y)\in\mathbb{R}| a\in A, b\in B, x\in X, y\in Y\}$, where $A,B$ are the sets of possible outcomes and $X,Y$ the sets of possible measurements of each part. So, when we say that Quantum Theory is nonlocal, what we really mean is that there exists at least one bipartite state and one pair of measurements that provide a set of correlations which cannot be written in the factorized manner of Eq. (\ref{saga}), even when local hidden variables are taken in account. This is guaranteed by a violation of a Bell inequality, like we saw in Example \ref{masc da mort}.

However, this is not the case for all quantum states. Indeed, for product states the correlations factorize naturally.

\begin{prop}\label{kamuss} Let $A,B \in \mathcal{L(H)}$. Then $\tr(A\otimes B)=\tr(A)\tr(B)$.
\end{prop}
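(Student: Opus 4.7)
The plan is to reduce the claim to a direct computation in a fixed orthonormal basis, closely parallel to the proof of Proposition \ref{kamus}. Fix an orthonormal basis $\{\ket{1},\dots,\ket{d}\}$ of $\mathcal{H}$, with matrix representations $A=(a_{ij})$ and $B=(b_{kl})$. Then $\{\kett{k}{i}\}_{k,i}$ is an orthonormal basis of $\mathcal{H}\otimes\mathcal{H}$, and from the expansion obtained earlier in Eq.~(\ref{miro}) we already have
\begin{equation*}
  A\otimes B = \sum_{i,j,k,l} a_{ij} b_{kl}\, \ketbra{ki}{lj}.
\end{equation*}

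The key step is simply to take the trace of this expression. Since $\tr(\ketbra{ki}{lj})=\bra{lj}ki\rangle = \delta_{kl}\delta_{ij}$, the double sum collapses to
\begin{equation*}
  \tr(A\otimes B) = \sum_{i,j,k,l} a_{ij} b_{kl}\,\delta_{kl}\delta_{ij} = \sum_{i,k} a_{ii} b_{kk} = \left(\sum_i a_{ii}\right)\!\left(\sum_k b_{kk}\right) = \tr(A)\tr(B),
\end{equation*}
which is the claim. There is really no obstacle here: the content of the statement is just that the diagonal entries of $A\otimes B$, in a product basis, are products of diagonal entries of $A$ and of $B$, so the trace factorizes. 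The only point requiring a bit of care is bookkeeping of indices in the product basis, which is handled cleanly by reusing the expansion (\ref{miro}) already established in the proof of Proposition \ref{kamus}.
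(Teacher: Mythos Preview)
Your proof is correct and follows essentially the same route as the paper: both reuse the expansion of $A\otimes B$ from Eq.~(\ref{miro}) and compute the trace directly by noting that $\tr(\ketbra{ki}{lj})=\delta_{kl}\delta_{ij}$, so only the diagonal terms survive and the double sum factorizes into $\tr(A)\tr(B)$.
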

\begin{proof}
Since\footnote{Check the proof of Proposition \ref{kamus}.}
\begin{equation*}
  A\otimes B=\sum_{i,j,k,l}{a_{ij}b_{kl}\ketbra{ki}{lj}},
\end{equation*}
we have
\begin{eqnarray*}
% \nonumber to remove numbering (before each equation)
  \tr(A\otimes B) &=& \sum_{i,j,k,l}{a_{ij}b_{kl}\tr(\ketbra{ki}{lj})}\\
  &=& \sum_{i,k}{a_{ii}b_{kk}} \\
  &=& \sum_i{a_{ii}}\sum_k{b_{kk}} \\
  &=& \tr(A)\tr(B).
\end{eqnarray*}
\end{proof}

With Proposition \ref{kamuss} in mind, we can easily see that a measurement on a product state $\rho^A \otimes \rho^B$ can always be locally factorized:
\begin{equation*}
\p(a,b|\{M_i\},\{N_j\})=\tr(M_a\rho^A \otimes N_b\rho^B)=\tr(M_a\rho^A)\tr(N_b\rho^B)=\p(a|\{M_i\})\p(b|\{N_j\}).
\end{equation*}
It is as if no local variables were hidden, or if those who were make no difference in the outcomes: $\p(a|x,\lambda)=\p(a|x)$ (and the same for Bob). Thus the locality condition is trivially achieved: for any measurements $x,y$ we have
\begin{eqnarray*}
% \nonumber to remove numbering (before each equation)
  \p(a,b|x,y) &=& \p(a|x)\p(b|y) \\
   &=& \int_{\Lambda}{\pi(\lambda)\p(a|x)\p(b|y)d\lambda} \\
   &=&\int_{\Lambda}{\pi(\lambda)\p(a|x,\lambda)\p(b|y,\lambda)d\lambda}.
\end{eqnarray*}

On the other side, there is no reason to exclude the existence of an entangled state for which the locality condition is satisfied as well. In principle, it may be that the relation between the systems represented by the entanglement can also be explained by some set of local hidden variables. This would mean that, setting good choices of objects to play the part of the hidden variables $\lambda\in\Lambda$, a probability distribution $\pi(\lambda)$ of these objects and a manner to compute the probabilities of the outcomes generated by each part, we would be able to artificially simulate the correlations $\p(ab|x,y)$ in a way that Eq. (\ref{saga}) would always be satisfied, no matter which measurements $x,y$ were being considered. In the case where we manage to perform all these tasks successfully, then we say that we have created a \emph{local hidden variables model} for such state.

\begin{definicao} Let $\mathcal{M(H)}=\{\{M_i\}\subset\mathcal{L(H)};\ \sum_i{M_i^{\dagger}M_i}=I\}$ be the set of generalized measurements over $\mathcal{H}$, $O_M\subset\mathbb{R}$ the set of possible outcomes for the measurement $M=\{M_i\}$ and $\Lambda$ the set of hidden variables. A function
\begin{eqnarray*}
f^M:O_M\times\Lambda\times\{\{M_i\}\} &\rightarrow&[0,1]\\
(a,\lambda, \{M_i\})&\mapsto& f^M(a,\lambda, \{M_i\})
\end{eqnarray*}
is said to be a response function for $M$ if for every fixed $\lambda\in\Lambda$ we have
\begin{equation*}
\sum_{j\in O_M}{f^M(j,\lambda,\{M_i\})}=1.
\end{equation*}
A function $f:\mathbb{R}\times\Lambda\times\mathcal{M(H)} \rightarrow[0,1]$ is said to be a response function if every restriction
\begin{eqnarray*}
f|_M:O_M\times\Lambda\times\{\{M_i\}\} &\rightarrow& [0,1]\\
(a,\lambda,\{M_i\}) &\mapsto& f(a,\lambda,\{M_i\})
\end{eqnarray*}
is a response function for $M$, that is,
\begin{equation*}
f|_M=f^M,\ \forall M\in\mathcal{M(H)}.
\end{equation*}
\end{definicao}

The response functions depend on the outcomes, the hidden variables and the whole measurement being applied\footnote{Indeed, if the response functions depend only on the specific measurement operator regarding the outcome of interest, Gleason's theorem would imply that the correlations being reproduced would belong to a separable state. See Section \ref{alice response function} for further details.}, since their task is to attribute a probability for each outcome.

\begin{definicao}\label{def local model}A state $\rho\in \mathcal{H}_A\otimes \mathcal{H}_B$ admits a local hidden variables model for projective measurements if there exists a set $\Lambda$ of hidden variables, a probability distribution $\pi(\lambda)$ and response functions $\p_A, \p_B$ such that for any pair of observables $P=\sum_i{iP_i},\ Q=\sum_j{jQ_j}$ measured by Alice and Bob, respectively, the equality
\begin{equation}\label{local}
  \tr(\rho P_a\otimes Q_b) =\int_{\Lambda}{\pi(\lambda)\p_A(a,\lambda,\{P_i\})\p_B(b,\lambda,\{Q_j\})d\lambda},
\end{equation}
is attained.
\end{definicao}

Notice that left hand side of the above equality stands for $\p(a,b|P_a,Q_b)$, so we have there a twin of Eq. (\ref{saga}), where $\p_A(a,\lambda,P)=\p(a|\lambda, P)$ and $\p_B(b,\lambda,Q)=\p(b|\lambda, Q)$.

There is no reason for us to restrain the idea of locality to projective measurements. The definition below is a stronger version of Def. \ref{def local model}, since projectors are a particular case of positive operators.

\begin{definicao} Analogously to Def. \ref{def local model}, a state $\rho\in \mathcal{H}_A\otimes \mathcal{H}_B$ admits a local hidden variables model for POVMs if Eq. (\ref{local}) is satisfied for any pair of POVMs $\{M_i\}, \{N_j\}$, that is,
\begin{equation*}
  \tr(\rho M_a\otimes N_b) =\int_{\Lambda}{\pi(\lambda)\p_A(a,\lambda,\{M_i\})\p_B(b,\lambda,\{N_j\})d\lambda}.
\end{equation*}

\end{definicao}

We will usually use the shortcut expression `local model' to refer to local hidden variables model. In the same fashion, sometimes we will refer to a state which admits a local model as simply `local'.

The locality of product states discussed above naturally leads to a local model for a separable (mixed) state. If our state is
\begin{equation*}
\rho=\sum_{i=1}^{n}{p_i\rho^A_i\otimes \rho^B_i}
\end{equation*}
then the probabilities we would like to reproduce have the form
\begin{equation*}
\tr(M_a\otimes N_b\rho)=\sum_{i=1}^{n}{p_i\tr(M_a\rho^A_i)\tr(N_b\rho^B_i)}.
\end{equation*}
The comparison of the integral in Eq. (\ref{local}) and the sum in the right side of the above equation practically solve the problem: we just have to take the hidden variables as $\lambda\in\Lambda=\{1,...,n\}$ and $\pi(\lambda)=p_{\lambda}$. The response functions will be same for both parts, the ``quantum" response function given by the trace of the state times the measurement operator. Hence, the role of the hidden variables here is only to determine which of the product states present in the mixture we are going to use in the response function.

Since separability implies locality, by contraposition, we conclude
\begin{equation*}
  nonlocality \Rightarrow entanglement.
\end{equation*}

In \cite{Gis}, Nicolas Gisin showed that for every pure bipartite entangled state, it is possible to construct observables for which the CHSH inequality is violated. Later, in Ref. \cite{PR} and \cite{YCZLO} it has been proved that this extends to every multipartite pure state. In another words, for pure states we have the equivalence
\begin{equation*}
  nonlocality \iff entanglement.
\end{equation*}

The great and perhaps counterintuitive advance acquired by Reinhard Werner was to show that for mixed states the situation is not the same \cite{Wer}. That is, it does exist entangled states which admit local models, and thus violate no Bell inequality! So, in general, we have
\begin{equation*}
entanglement \nRightarrow nonlocality.
\end{equation*}

That is the first big result that we are going to present, in the next chapter.

\chapter{Werner's local model}\label{Werner}

This whole chapter is devoted to detail Werner's paper \cite{Wer}, in which the first local model appears.

There are two main difficulties in constructing an example of entangled state which admits a local model for projective measurements. The first is to prove that such state is indeed entangled, i.e., it can not be written as a convex combination os separable states. The second is to verify that Eq. ($\ref{local}$) holds independently of $P,Q$, which become an infinite system of equations indexed by the set of projective measurements. We shall circumvent both difficulties by considering states of very high symmetry, nominated Werner states.

\section{Werner states}

\begin{definicao} A state $W\in \mathcal{D}(\mathcal{H} \otimes \mathcal{H})$ is said to be a \emph{Werner state} if it is $U\otimes U$-invariant, i.e., $U\otimes U W U^{\dagger} \otimes U^{\dagger}=W,\ \forall U\in \mathcal{L}(\mathcal{H})$ unitary.
\end{definicao}

The $U\otimes U$-invariance can be physically interpreted as the property that allows both Alice and Bob to apply the same unitary transformation in their part of the system and still remain with the same global state.

In order to characterize the Werner states, we start asking which operators, not necessarily states, present the $U\otimes U$-invariance. It is clear, by the definition of unitary, that the identity $I$ has this property. For the \emph{flip} operator $V:\mathcal{H} \otimes \mathcal{H} \rightarrow \mathcal{H} \otimes \mathcal{H}$, defined by
\begin{equation*}
V\ket{ij}=\ket{ji}
\end{equation*}
on the product states and extended by linearity, we have $V=V^{-1}=V^{\dagger}$. It follows that
\begin{equation*}
V A\otimes BV^{\dagger}=B\otimes A
\end{equation*}
and thus
\begin{eqnarray*}
% \nonumber to remove numbering (before each equation)
  (U\otimes U) V (U^{\dagger} \otimes U^{\dagger}) = V \ \Leftrightarrow \ (U\otimes U) V (U^{\dagger} \otimes U^{\dagger}) V^{-1} = I \\ \Leftrightarrow  \ (U\otimes U) (V U^{\dagger} \otimes U^{\dagger} V^{\dagger}) = I \ \Leftrightarrow \  (U\otimes U) (U^{\dagger} \otimes U^{\dagger}) = I,
\end{eqnarray*}
so $V$ is also $U\otimes U$-invariant. By linearity, we have that any linear combination of these two operators presents this property. The first important result about Werner states is that those are all the operators that have the $U\otimes U$-invariance.

\begin{prop} An operator $A\in \mathcal{L}(\mathbb{C}^d \otimes \mathbb{C}^d)$ is $U\otimes U$-invariant if and only if $A$ is a linear combination of the identity $I$ and the flip operator $V$.
\end{prop}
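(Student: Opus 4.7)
The ``if'' direction is already done in the discussion preceding the statement, so only the ``only if'' direction needs work. My plan is a fairly elementary one based on computing matrix elements with respect to a fixed orthonormal basis $\{\ket i\}_{i=1}^d$ of $\mathbb{C}^d$ and exploiting the $U\otimes U$-invariance for three specific families of unitaries. Write $A=\sum_{ij,kl} A_{ij,kl}\,\ketbra{ij}{kl}$ and note that the condition $(U\otimes U)A(U^\dagger\otimes U^\dagger)=A$ means that $A$ commutes with $U\otimes U$ for every unitary $U$.

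First I would plug in diagonal unitaries $U\ket k = e^{i\theta_k}\ket k$. Then $U\otimes U$ acts on $\ket{ij}$ by the phase $e^{i(\theta_i+\theta_j)}$, so the invariance forces $e^{i(\theta_i+\theta_j-\theta_k-\theta_l)}A_{ij,kl}=A_{ij,kl}$ for all choices of phases. Hence $A_{ij,kl}=0$ whenever $\{i,j\}\neq \{k,l\}$ as multisets, and only the coefficients $A_{ii,ii}$, $A_{ij,ij}$ and $A_{ij,ji}$ (with $i\neq j$) can be nonzero. Next I would use permutation unitaries $U_\pi\ket k=\ket{\pi(k)}$ for $\pi\in S_d$; their $U\otimes U$-action permutes the basis $\{\ket{ij}\}$, and invariance under all of them yields that $A_{ii,ii}$ is the same constant $a$ for every $i$, that $A_{ij,ij}$ is the same constant $b$ for every $i\neq j$, and that $A_{ij,ji}$ is the same constant $c$ for every $i\neq j$.

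At this point
\begin{equation*}
A=a\sum_{i}\ketbraa{ii}+b\sum_{i\neq j}\ketbraa{ij}+c\sum_{i\neq j}\ketbra{ij}{ji},
\end{equation*}
whereas a generic combination $\alpha I+\beta V$ unpacks as $(\alpha+\beta)\sum_i\ketbraa{ii}+\alpha\sum_{i\neq j}\ketbraa{ij}+\beta\sum_{i\neq j}\ketbra{ij}{ji}$. So identifying $\alpha=b$, $\beta=c$, the whole statement reduces to the single linear relation $a=b+c$. To produce this relation I would use a unitary $U$ that maps the vector $\ket\psi:=\tfrac{1}{\sqrt{2}}(\ket 1+\ket 2)$ to $\ket 1$ (any unitary completion works). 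Invariance gives $\bra{11}A\ket{11}=\bra{\psi\psi}A\ket{\psi\psi}$. Expanding $\ket{\psi\psi}$ in the basis $\{\ket{11},\ket{12},\ket{21},\ket{22}\}$ and keeping only the matrix elements supported on $\{i,j\}=\{k,l\}$, one computes $\bra{\psi\psi}A\ket{\psi\psi}=\tfrac14(2a+2b+2c)=\tfrac12(a+b+c)$, so $a=\tfrac12(a+b+c)$, i.e.\ $a=b+c$.

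The main potential obstacle is actually bookkeeping rather than ideas: one has to be careful in the last computation to track exactly which of the sixteen matrix elements $A_{ij,kl}$ for $i,j,k,l\in\{1,2\}$ survive the support restriction of the first step, and to match the constants correctly. Once the relation $a=b+c$ is obtained, assembling $A=bI+cV$ is immediate from the expression above, which completes the proof.
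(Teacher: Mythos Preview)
Your proof is correct and follows essentially the same approach as the paper: both use diagonal phase unitaries to force $A_{ij,kl}=0$ unless $\{i,j\}=\{k,l\}$ as multisets, then permutation unitaries to reduce to three constants, and finally a Hadamard-type unitary to obtain the relation $a=b+c$. Your version is slightly more streamlined---you use arbitrary phases in one pass rather than the paper's separate $\pm1$ and $\pm i$ steps, and you extract the last relation from a single matrix element rather than conjugating the whole $4\times4$ block---but the structure and key ideas coincide.
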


\begin{proof} One of the implications was established above. To the other way, consider $A$ $U\otimes U$-invariant and the orthonormal basis $\mathcal{B}=\{\ket{e_1 e_1}, \ket{e_1 e_2}, ..., \ket{e_d e_d}\}$ of $\mathbb{C}^d \otimes \mathbb{C}^d$. The matrix of $A$ associated to this basis has entries  $\bra{e_j e_k}A\ket{e_l e_m}$. Using the invariance of $A$ under unitaries $U_r$ which take $\ket{e_r}$ to $-\ket{e_r}$ leaving the other elements fixed, we have
\begin{equation*}
  \bra{e_j e_k}A\ket{e_l e_m}=\bra{e_j e_k}U_r^{\dagger}\otimes U_r^{\dagger}A U_r\otimes U_r\ket{e_l e_m}= \bra{U_r\otimes U_r e_j e_k} A U_r\otimes U_r\ket{e_l e_m}
\end{equation*}
for $r=1,...,d$. Such matrix elements vanishes unless the indices $j,k,l,m$ are equal by pairs. Indeed, if we have $j\notin \{k,l,m\}$, then
\begin{equation*}
  \bra{e_j e_k}A\ket{e_l e_m}=\bra{U_j\otimes U_j e_j e_k} A U_j\otimes U_j\ket{e_l e_m}=-\bra{e_j e_k}A\ket{e_l e_m}.
\end{equation*}

In the same way, making use of the unitaries $\tilde{U}_r$ that take $\ket{e_r}$ to $i\ket{e_r}$ and leaves the other elements fixed, we see that the terms of the form $\bra{e_j e_j}A\ket{e_l e_l}$ also vanishes.

We conclude that the only non-zero matrix elements have the form $\bra{e_j e_j}A\ket{e_j e_j}$, $\bra{e_j e_k}A\ket{e_j e_k}$ or $\bra{e_j e_k}A\ket{e_k e_j}$. Since any transposition of two of the basis elements can be realised unitarily, we can apply unitaries $U_{(i,k)}$ that transpose $\ket {e_i}$ and $\ket {e_k}$ and leaves the rest of elements fixed to see that, for any $k\in\{1,...,d\}$,
\begin{equation*}
  \bra{e_i e_i}A\ket{e_i e_i}=\bra{U_{(ik)}\otimes U_{(ik)} e_i e_i} A U_{(ik)}\otimes U_{(ik)}\ket{e_i e_i}=\bra{e_k e_k}A\ket{e_k e_k}.
\end{equation*}
Hence,
\begin{equation*}
\bra{e_i e_i}A\ket{e_i e_i} = \gamma ,\ \forall i,
\end{equation*}
for some $\gamma\in\mathbb{C}$. The same unitaries together with $U_{(jl)}$, defined similarly, gives us
\begin{eqnarray*}
\bra{e_i e_j}A\ket{e_i e_j} = \alpha ,\ \forall i,j,\ i\neq j \\
\bra{e_i e_j}A\ket{e_j e_i} = \beta ,\ \forall i,j,\ i\neq j
\end{eqnarray*}
with $\alpha, \beta \in \mathbb{C}$. The unitaries of the form $U_{(i,k)(j,l)}$ tell us that
\begin{equation*}
  \bra{e_ie_j}A\ket{e_ke_l}=\bra{e_ke_l}A\ket{e_ie_j},
\end{equation*}
and hence all entries are real, that is, actually we have $\alpha, \beta,\gamma\in\mathbb{R}$.

At this point, we already know that $A$ is described by
\begin{equation*}
A\kett {e_i}{e_i}=\gamma\kett {e_i}{e_i}, \ \ \ A\kett {e_i}{e_j} = \alpha \kett {e_i}{e_j} + \beta\kett {e_j}{e_i}.
\end{equation*}

Using unitaries\footnote{This is a variation of the Hadamard operator $H=H_{01}$.} $H_{ij}$ such that
\begin{equation*}
H_{ij}\ket i = \frac{\ket i + \ket j}{\sqrt2},\ \ \
H_{ij}\ket j = \frac{\ket i - \ket j}{\sqrt2}
\end{equation*}
and the rest remains fixed, we see that $\gamma=\alpha+\beta$. Indeed, in the particular case where $d=2$, we have
\begin{equation*}
[A]=\left(
      \begin{array}{cccc}
        \gamma & 0 & 0 & 0 \\
        0 & \alpha & \beta & 0 \\
        0 & \beta & \alpha & 0 \\
        0 & 0 & 0 & \gamma \\
      \end{array}
    \right), \ \ [H \otimes H]=\frac{1}{\sqrt2}\left(
                                            \begin{array}{cccc}
                                              1 & 1 & 1 & 1 \\
                                              1 & -1 & 1 & -1 \\
                                              1 & 1 & -1 & -1 \\
                                              1 & -1 & -1 & 1 \\
                                            \end{array}
                                          \right),
\end{equation*}
thus
\begin{equation*}
[H\otimes H A H^{\dagger}\otimes H^{\dagger}]= \left(
                                               \begin{array}{cccc}
                                                 \frac{\gamma+\alpha+\beta}{2} & 0 & 0 & \frac{\gamma-\alpha-\beta}{2} \\
                                                 0 & \frac{\gamma+\alpha-\beta}{2} & \frac{\gamma-\alpha+\beta}{2} & 0 \\
                                                 0 & \frac{\gamma-\alpha+\beta}{2} & \frac{\gamma+\alpha-\beta}{2} & 0 \\
                                                 \frac{\gamma-\alpha-\beta}{2} & 0 & 0 & \frac{\gamma+\alpha+\beta}{2} \\
                                               \end{array}
                                             \right),
\end{equation*}
which equals $[A]$ if and only if $\gamma=\alpha+\beta$.

Now, we only have to notice that the non-zero matrix elements of the identity have the form $\ketbra{e_j e_k}{e_j e_k}$, for $j,k=1,...,d$ and the non-zero matrix elements of the flip are $\ketbra{e_j e_k}{e_k e_j}$, with $j,k=1,...,d$. So, when we sum a multiple of the identity and a multiple of the flip, the only entries that overlap are $\bra{e_j e_j}\ket{e_j e_j}$.

Since this describes all non-zero matrix elements of $A$, we conclude that $A=\alpha I + \beta V$.\end{proof}

This result is valid for all operators $U \otimes U$-invariants, in particular for Werner states. So, for each of these states, there are parameters $\alpha, \beta$ such that
\begin{equation*}
  W=\alpha I + \beta V.
\end{equation*}
With the constraint that a state has unitary trace and using $\tr(I)=d^2$ and $\tr(V)= d$ (the flip operator permutes the basis elements, standing fixed only $\ket{e_i e_i},\ i=1,...,d$), we are able to bond both parameters together,
\begin{equation*}
  1=\tr(W) = \alpha \tr(I) + \beta \tr(V) = \alpha d^2 + \beta d,
\end{equation*}
and thus
\begin{eqnarray}\label{alfabeta}
\alpha = \frac{(1-\beta d)}{d^2}.
\end{eqnarray}

So we need only one parameter to determine a Werner state. In order to study which Werner states are entangled, we will make use of the parameter $\phi$ defined by
\begin{eqnarray*}
% \nonumber to remove numbering (before each equation)
  \phi = \tr(WV) = \tr([\alpha I + \beta V]V) = \alpha \tr(V) + \beta \tr(V^2) \\ = \alpha d + \beta d^2 = \frac{(1-\beta d)}{d} + \beta d^2 = \frac{1-\beta d +\beta d^3 }{d},
\end{eqnarray*}
where we have used that $V^2=I$ and Eq. $(\ref{alfabeta})$. This way we obtain
\begin{equation*}
  \beta=\frac{d\phi -1}{d^3-d}, \ \ \ \alpha=\frac{d-\phi}{d^3-d}
\end{equation*}
and therefore
\begin{equation}\label{werner state}
  W=\frac{(d-\phi)I+(d\phi -1)V}{d^3-d}.
\end{equation}

Thus the task of constructing a local model for Werner states is the task of showing that the integral in $(\ref{local})$ is equal to
\begin{equation}
 \label{probconjunta} \tr(WP_a\otimes Q_b)=\frac{(d-\phi)\tr(P_a)\tr(Q_b)+(d\phi -1)\tr(P_aQ_b)}{d^3-d},
\end{equation}
where we have used the formulas $\tr(R\otimes S)=\tr(R)\tr(S)$ and $\tr(V A\otimes B)=\tr(AB)$ (Propositions \ref{kamus}, \ref{kamuss}).

The next results show how useful this parametrization is to study the entanglement of a Werner state.

\begin{lema}\label{rafael} If $\rho\in\mathcal{D(H)}$, then $\tr(V\rho)\in[-1,1]$, where $V$ is the flip operator.
\end{lema}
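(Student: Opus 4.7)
The plan is to exploit the fact that the flip operator $V$ has a very simple spectrum. I will first observe that $V$ is Hermitian (directly from its action on the product basis, $V^\dagger = V$) and satisfies $V^2 = I$, so its only possible eigenvalues are $+1$ and $-1$. By the Spectral Decomposition Theorem (Theorem \ref{spectral}), $V$ can be written as $V = P_+ - P_-$, where $P_+$ and $P_-$ are the orthogonal projectors onto its $\pm 1$-eigenspaces (the symmetric and antisymmetric subspaces of $\mathcal{H}\otimes\mathcal{H}$), satisfying $P_+ + P_- = I$.

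From here the proof becomes a short computation. I would compute
\begin{equation*}
\tr(V\rho) = \tr(P_+\rho) - \tr(P_-\rho).
\end{equation*}
Since $\rho$ is positive semi-definite and each $P_\pm$ is a projector (hence positive semi-definite), both $\tr(P_\pm\rho)$ are non-negative real numbers. Moreover, $\tr(P_+\rho) + \tr(P_-\rho) = \tr((P_+ + P_-)\rho) = \tr(\rho) = 1$, so in fact each $\tr(P_\pm\rho) \in [0,1]$.

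Writing $p := \tr(P_+\rho)$, I conclude $\tr(V\rho) = p - (1-p) = 2p - 1 \in [-1,1]$, which is precisely the claim. The only step worth being careful with is the positivity claim $\tr(P\rho)\geq 0$ for any projector $P$ and state $\rho$; I would justify it by invoking the spectral decomposition $\rho = \sum_i p_i \ketbraa{\psi_i}$ with $p_i\geq 0$, which gives $\tr(P\rho) = \sum_i p_i \bra{\psi_i}P\ket{\psi_i} \geq 0$ since $P$ is positive semi-definite. There is no real obstacle in this argument — the lemma follows essentially immediately once one recognizes $\tr(P_\pm\rho)$ as a Born-rule probability.
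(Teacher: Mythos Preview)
Your proposal is correct and follows essentially the same approach as the paper: both use that $V$ is Hermitian with $V^2=I$, hence has spectrum $\{\pm 1\}$, and then bound $\tr(V\rho)$ accordingly. Your write-up via the spectral projectors $P_\pm$ and the identity $\tr(V\rho)=2p-1$ is in fact slightly more complete than the paper's version, which only names the extremal pure eigenstates without spelling out why the bound holds for arbitrary mixed $\rho$.
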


\begin{proof} By noticing that $V^{\dagger}=V$, we see that all eigenvalues $\lambda$ of $V$ are real. Since $V^2=I$, we have
\begin{equation*}
V\ket v =\lambda\ket v \implies \ket v=V^2\ket v=\lambda^2\ket v\implies \lambda^2=1,
\end{equation*}
hence $\lambda\in\{\pm 1\}$.

Therefore, the least value achieved by $\tr(V\rho)$ is $\tr(-\rho)=-1$, corresponding to the situation where $\rho=\ketbraa{\psi}$ and $\ket{\psi}$ is an eigenvector of $V$ associated with $-1$. Similarly, the largest value of $\tr(V\rho)$ is 1, obtained when $\rho=\ketbraa{\psi}$ and $\ket{\psi}$ is an eigenvector associated with $1$.
\end{proof}

\begin{teo}\label{ob-ladi ob-lada} The flip operator $V$ is the optimal entanglement witness for Werner states. That is, if $W\in\mathcal{D}(\mathbb{C}^d\otimes\mathbb{C}^d)$ is a Werner state, then it is separable if and only if $\phi=\tr(VW)\geq0$.
\end{teo}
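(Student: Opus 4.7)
The plan is to prove the two implications separately.

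The ``only if'' direction is a direct calculation. Assume $W = \sum_i p_i\, \rho_i^A \otimes \rho_i^B$ is a separable decomposition with $p_i \geq 0$. By linearity of the trace and Proposition \ref{kamus},
\begin{equation*}
\phi = \tr(VW) = \sum_i p_i\, \tr(V\, \rho_i^A \otimes \rho_i^B) = \sum_i p_i\, \tr(\rho_i^A \rho_i^B).
\end{equation*}
Each term is non-negative because $\tr(\rho_i^A \rho_i^B) = \tr\bigl((\rho_i^A)^{1/2}\, \rho_i^B\, (\rho_i^A)^{1/2}\bigr)$, the trace of a positive semi-definite operator. Hence $\phi \geq 0$.

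For the converse, the central tool is the \emph{twirling map}
\begin{equation*}
\mathcal{T}(\rho) = \int_{U(d)} (U\otimes U)\, \rho\, (U^\dagger \otimes U^\dagger)\, d\mu(U),
\end{equation*}
where $\mu$ is the normalized Haar measure on the unitary group. I would verify three properties: (i) by construction $\mathcal{T}(\rho)$ is $U\otimes U$-invariant, so it is a Werner state; (ii) since $V$ is itself $U\otimes U$-invariant, cyclicity of the trace yields $\tr(V\mathcal{T}(\rho)) = \tr(V\rho)$, so the parameter $\phi$ is preserved; (iii) if $\rho = \ketbraa{\alpha}\otimes \ketbraa{\beta}$ is a pure product state, then $\mathcal{T}(\rho) = \int \ketbraa{U\alpha}\otimes \ketbraa{U\beta}\, d\mu(U)$ is a convex mixture of product states, and so lies in the closed convex hull of product states, admitting a finite separable decomposition by Carath\'eodory's theorem applied to the finite-dimensional set of density operators.

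Next I would realize the two ``extreme'' separable Werner states explicitly. A short calculation gives $\tr\bigl(V\, \ketbraa{\alpha}\otimes\ketbraa{\beta}\bigr) = |\langle\alpha|\beta\rangle|^2$, so twirling with $\ket{\alpha} = \ket{\beta}$ produces a separable Werner state $W_1$ with $\phi = 1$, while twirling with $\ket{\alpha}\perp\ket{\beta}$ produces a separable $W_0$ with $\phi = 0$. Since Equation \eqref{werner state} exhibits $W$ as an affine function of $\phi$, for any $\phi \in [0,1]$ we may write $W_\phi = \phi W_1 + (1-\phi) W_0$, a convex combination of separable states and hence separable. Lemma \ref{rafael} forces $\phi \in [-1,1]$, so this exhausts the range $\phi \geq 0$. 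The main obstacle I anticipate is property (iii): reconciling the continuous Haar average with the paper's finite-sum definition of separability. The cleanest resolution is that the separable set is closed and finite-dimensional, but one could equivalently approximate the Haar integral by finite averages over $\epsilon$-nets of $U(d)$ and pass to the limit.
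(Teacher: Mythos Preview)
Your proposal is correct and follows essentially the same route as the paper: both directions match---the forward direction via $\tr(V\,\rho^A\otimes\rho^B)=\tr(\rho^A\rho^B)\geq 0$, and the converse via the twirling map, showing it preserves $\phi$ and separability, exhibiting $W_0$ and $W_1$ as twirls of orthogonal and identical pure product states respectively, and filling in $[0,1]$ by convexity. If anything, you are more scrupulous than the paper about point (iii): the paper simply asserts that the Haar average of separable states is separable, whereas you correctly flag that this needs the separable set to be closed (or an approximation argument) to pass from an integral to the finite-convex-combination definition used in the text.
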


\begin{proof} Note that
\begin{equation*}
\tr(V(\ketbraa{ab}))=\tr((V\ket{ab})\bra{ab})=\tr(\ketbra{ba}{ab})=\delta_{ab},
\end{equation*}
i.e., equals 1 if $\ket a=\ket b$ and 0 otherwise. Since a separable state is a convex combination of product states of the $\ketbraa{ab}$ form, we conclude that
\begin{equation*}
  \phi=\tr(VA)\in[0,1]
\end{equation*}
whenever $A$ is a separable state. This proves the first part of the theorem.

In order to prove the converse, recall that the Werner states are uniquely determined by the paramater $\phi$, which lies in $[-1,1]$ according to Lemma \ref{rafael}. Firstly, we observe that is enough to show that the Werner states $W_0, W_1$, corresponding to $\phi=0$ and $\phi=1$, are separable. Indeed, assuming this, then for every $\phi\in[0,1]$ the corresponding Werner state $W_{\phi}$ is given by
\begin{equation*}
W_{\phi}=(1-\phi)W_0+\phi W_1.
\end{equation*}
This follows from the linearity of the $U\otimes U$ conjugation (which implies the $U\otimes U$-invariance of $W_{\phi}$), the achievement of the $\tr(W_{\phi})=1$ condition and from the calculation
\begin{equation*}
\tr(VW_{\phi})=(1-\phi)\tr(VW_0)+\phi \tr(VW_1)=\phi.
\end{equation*}
Thus we conclude that $W_{\phi}$ is separable, since is a convex combination of separable states.

The separability of $W_0$ and $W_1$ will be proved by using the map
\begin{equation*}
  \mathbb{P}:A\mapsto\int{(U\otimes U)A(U^{\dagger}\otimes U^{\dagger})dU},
\end{equation*}
where $dU$ denotes the Haar measure of the unitary transformations group of $\mathbb{C}^d$. (The Haar measure is the unique nonzero measure which is invariant under the group operation.)

$\mathbb{P}$ (which is known as \emph{twirling}) takes arbitrary density matrices $A\in \mathcal{D}(\mathbb{C}^d\otimes \mathbb{C}^d)$ and returns Werner states. Indeed, due to the invariance of $dU$ and the fact that product of unitaries is also unitary, it follows
\begin{eqnarray*}
  \widetilde{U}\otimes \widetilde{U}\mathbb{P}(A) \widetilde{U}^{\dagger}\otimes \widetilde{U}^{\dagger}&=& \int{\widetilde{U} \otimes \widetilde{U} (U\otimes U)A(U^{\dagger}\otimes U^{\dagger})\widetilde{U}^{\dagger}\otimes \widetilde{U}^{\dagger}dU }\\ &=& \int{(\widetilde{U}U\otimes \widetilde{U}U)A(U^{\dagger}\widetilde{U}^{\dagger}\otimes U^{\dagger}\widetilde{U}^{\dagger})dU}\\ &=&\int{(U\otimes U)A(U^{\dagger}\otimes U^{\dagger})dU}\\
  &=&\mathbb{P}(A),
\end{eqnarray*}
which also proves that $\mathbb{P}$ is a projection (onto the $U\otimes U$-invariant operators subspace). Then $\mathbb{P}(A)$ depends only on the parameter $\tr(V \mathbb{P}(A))$, and we can utilize the $U \otimes U$-invariance of $V$ to show that
\begin{eqnarray*}
  \tr(V \mathbb{P}(A))=\int{\tr(V (U\otimes U)A(U^{\dagger}\otimes U^{\dagger}))dU}=\\
  \int{\tr((U^{\dagger}\otimes U^{\dagger}) V (U\otimes U)A)dU}=\tr (VA).
\end{eqnarray*}
Also, if $A$ is separable, so is each $(U\otimes U)A(U^{\dagger}\otimes U^{\dagger})$ and hence $\mathbb{P}(A)$ is also separable. This means that $\mathbb{P}$ preserves the separability and the trace of the operator times $V$.

Now, let $\rho=\rho_0\otimes\rho_1, \sigma=\sigma_0\otimes\sigma_1 \in\mathcal{D}(\mathbb{C}^d\otimes\mathbb{C}^d)$ be separable states. Taking $\rho_0$ and $\rho_1$ to be orthogonal (e.g.: $\rho_0=\ketbraa0, \rho_1=\ketbraa1$) and making use of Prop. \ref{kamuss}, we have
\begin{equation*}
\tr(V\mathbb{P}(\rho))=\tr(V\rho)=\tr(\rho_0\rho_1)=0.
\end{equation*}
Thus $\mathbb{P}(\rho)$ is precisely the Werner state determined by $\phi=0$, which is $W_0$. Since $\mathbb{P}$ preserves separability, we conclude that $W_0$ is separable.

Similarly, for $\sigma_0=\sigma_1=\ketbraa{\psi}$ we have
\begin{equation*}
\tr(V\mathbb{P}(\sigma))=\tr(V\sigma)=\tr(\ketbraa{\psi})=1.
\end{equation*}
Thus $\mathbb{P}(\sigma)=W_1$, whose separability is implied by the separability of $\sigma$, completing the proof.
\end{proof}

The first part of the proof establishes that the flip operator is an entanglement witness\footnote{It seems fair to say that the concept of entanglement witness only appeared in Ref. \cite{HHH2}, dating from 1996 - hence, posterior to Werner's paper.} for quantum states, i.e., we can calculate $\tr(V\rho)$ for any state $\rho$ and conclude that it is entangled if the result is negative (although, in general, no conclusion can be taken if the trace results to be positive). The second part shows that the flip is the optimal witness for Werner states, since it reveals the entanglement of any of such states.

\section{Bob's response function and the hidden variables}

Now that we have setted the class of states, we focus on the objects that compose the local model: the hidden variables space, the probability distribution upon it and the response functions. The consequences of the $U\otimes U$-invariance of Werner states is a key observation to motivate the choices we are going to make.

Since $W$ is $U\otimes U$-invariant, we have that the left-hand side of Eq. (\ref{local}) is
\begin{equation*}
\tr(W P_a\otimes Q_b)=\tr(U\otimes UWU^{\dagger}\otimes U^{\dagger} P_a\otimes Q_b)
\end{equation*}
\begin{equation*}
= \tr(WU^{\dagger}\otimes U^{\dagger} P_a\otimes Q_bU\otimes U)=\tr(W(U^{\dagger}P_aU)\otimes(U^{\dagger}Q_bU)).
\end{equation*}
Since $(U^{\dagger}P_iU)^2=U^{\dagger}P_iU$ and $\sum_i{U^{\dagger}P_iU}= U^{\dagger}\sum_i{P_i}U=I$, we see that $\{U^{\dagger}P_iU\}$ also defines a projective measurement. Thus the symmetry of the Werner states implies that the probabilities obtained for any pair of measurements in $\{(\{UP_iU^{\dagger}\}, \{UQ_jU^{\dagger}\})\ |\ U\in\mathcal{L(H)}\ \text{unitary}\}$ must be the same. Therefore, the local model to be constructed must satisfy
\begin{equation}\label{zoro}
\int_{\Lambda}{\p_A(a,\lambda,\{P_i\})\p_B(b,\lambda,\{Q_j\})\pi(\lambda)d\lambda}= \int_{\Lambda}{\p_A(a,\lambda,\{UP_iU^{\dagger}\})\p_B(b,\lambda,\{UQ_jU^{\dagger}\})\pi(\lambda)d\lambda}.
\end{equation}
This motivates the setting of the following arrangements, in the construction of the local model.

The space $\Lambda$ of hidden variables will be chosen to be the unit sphere $\{\lambda \in \mathbb{C}^d ; ||\lambda||=1\}$, that is, the hidden variables can be seen as pure quantum states of the local systems. We will also impose a symmetry condition to the response functions $\p_A$ and $\p_B$ used by Alice and Bob. These will be taken to depend on the family $\{P_i\}$ of orthogonal projections, but not on the their labeling or their eigenvalues. Moreover, it suffices to consider only the case where the projectors are one-dimensional, since for projections of higher dimension the response functions can be chosen as a sum of response functions of one-dimensional projections. The symmetry condition to be imposed on the response functions is given by the relation
\begin{equation}\label{simetriaP}
  \p(a,\lambda,\{U^{\dagger}P_iU\})=\p(a, U\lambda,\{P_i\}).
\end{equation}

With this two requirements, Eq. (\ref{zoro}) is equivalent to
\begin{equation*}
\int_{\Lambda}{\p_A(a,\lambda,\{P_i\})\p_B(b,\lambda,\{Q_j\})\pi(\lambda)d\lambda}= \int_{\Lambda}{\p_A(a,U\lambda,\{P_i\})\p_B(b,U\lambda,\{Q_j\})\pi(\lambda)d\lambda},
\end{equation*}
which is satisfied if we set the measure $\pi(\lambda)$ to be the unique measure invariant under unitaries on $\Lambda$.

Hence, to complete our description of the local model, we still are left to decide the response functions $\p_A$ and $\p_B$, respecting Eq. (\ref{simetriaP}). The simplest choice for such functions is the one we will adopt for Bob's system,
\begin{equation}\label{bob_response}
  \p_B(b,\lambda,\{Q_j\})=\tr(Q_b\ket{\lambda}\bra{\lambda})=\bra{\lambda}Q_b\ket{\lambda},
\end{equation}
for the observable $Q=\sum_b{\beta_bQ_b}$. Notice that, by doing this, we are saying that Bob does things ``in the Quantum fashion", since his probabilities will be given by treating the hidden variable as a pure state and using Born's rule to calculate probabilities. Also, the function works as we wish in ($\ref{simetriaP}$),\footnote{For a comment on the abuse of Dirac's notation done here, check Section 1.1.}
\begin{eqnarray*}
  \p_B(b,\lambda,\{U^{\dagger}Q_jU\})=\bra{\lambda}U^{\dagger}Q_bU\ket{\lambda}=
  \bra{U\lambda}Q_b\ket{U\lambda}=\p_B(b,U\lambda,\{Q_j\}).
\end{eqnarray*}

We now proceed to show that using Eqs. ($\ref{simetriaP}$) e ($\ref{bob_response}$), we will be able to reduce the calculation of any of the integrals in the form of Eq. (\ref{local}) to the computation of a single integral.

For any positive integrable function $f:\Lambda \rightarrow \mathbb{R}$, we can consider the positive operator\footnote{There is a misprint in the definition of $\tilde{\rho}$ in the original paper: the $f(\lambda)$ is missing, in the integrand.}
\begin{equation}\label{cassius}
  \tilde{f}=\int_{\Lambda}{\ketbraa{\lambda}f(\lambda)\pi(\lambda)d\lambda}\in\mathcal{L}(\mathbb{C^d}).
\end{equation}
Then, using Bob's response function with $\{Q_j\}$ and $b$ fixed, we have
\begin{equation*}
  \int_{\Lambda}{\p_B(b,\lambda,\{Q_j\})f(\lambda)\pi(\lambda)d\lambda}=\int_{\Lambda}{\tr(Q_b\ketbraa{\lambda})f(\lambda)\pi(\lambda)d\lambda}=
  \tr(Q_b\tilde{f}).
\end{equation*}

In particular, for each fixed $\{P_i\}$ and $a$, we can relate to $\p_A$ the positive operator $\tilde{\p}_A$ given in Eq. (\ref{cassius}), such that
\begin{equation}\label{PtilA}
  \int_{\Lambda}{\p_B(b,\lambda,\{Q_j\}) \p_A(a,\lambda,\{P_i\}) \pi(\lambda) d\lambda} =\tr(\tilde{\p}_A(a,\{P_i\})Q_b).
\end{equation}
Notice that the left-hand side of the above equation equals the right-hand side of Eq. (\ref{local}). Using the $U$-invariance of $\pi$, we find
\begin{eqnarray*}
% \nonumber to remove numbering (before each equation)
  \tr(U\tilde{\p}_A(a,\{P_i\})U^{\dagger}Q_b) &=& \tr(\tilde{\p}_A(a,\{P_i\})U^{\dagger}Q_bU) \\ &=& \int_{\Lambda}{\p_A(a,\lambda,\{P_i\})\bra{\lambda}U^{\dagger}Q_bU\ket{\lambda}\pi(\lambda)(d\lambda)} \\
   &=& \int_{\Lambda}{\p_A(a,U\lambda,\{P_i\})\bra{\lambda}Q_b\ket{\lambda}\pi(\lambda)(d\lambda)}  \\
   &=& \int_{\Lambda}{\p_A(a,\lambda,\{UP_iU^{\dagger}\})\bra{\lambda}Q_b\ket{\lambda}\pi(\lambda)(d\lambda)} \\
   &=& \tr(\tilde{\p}_A(a,\{UP_iU^{\dagger}\})Q_b).
\end{eqnarray*}
Since this holds for all one-dimensional projections $Q_b$, we have
\begin{equation*}
U\tilde{\p}_A(a,\{P_i\})U^{\dagger}= \tilde{\p}_A(a,\{UPU^{\dagger}\}).
\end{equation*}
In particular, we see that, if $U$ commutes with every $P_i$, then $U$ commutes with $\tilde \p_A$:
\begin{equation*}
  U\tilde{\p}_A(a,\{P_i\})U^{\dagger}= \tilde{\p}_A(a,\{UP_iU^{\dagger}\})=\tilde{\p}_A(a,\{P_iUU^{\dagger}\})=\tilde{\p}_{A}(a,\{P_i\}).
\end{equation*}
That is to say (check Corollary \ref{tchatchatcha}) that $\tilde \p_A$ (and the same is true for $U$) has a representation
\begin{equation*}
  \tilde{\p}_A(a,\{P_i\})=\sum_i{\omega(a,i)P_i}.
\end{equation*}

Since $\p_A$ is not to depend on the labeling of the projections $P_i$, we conclude that $\omega(a,i)$ depends only on whether $a=i$ or not. Hence
\begin{equation*}
\tilde{\p}_A(a,\{P_i\})=\omega P_a+\omega ' \sum_{i \neq a}P_i=\omega_1 P_a+ \omega_2 I,
\end{equation*}
for some $\omega_1, \omega_2\in \mathbb{R}$, which are independent of $P_i,\ i\neq a$. Since
\begin{eqnarray*}
  \sum_a\tilde{\p}_A(a,\{P_i\})&=&\sum_a{\int_{\Lambda}{\p_A(a,\lambda,\{P_i\})\ket{\lambda}\bra{\lambda}\pi(\lambda)d\lambda}}\\
  &=& \int_{\Lambda}{\sum_a{(\p_A(a,\lambda,\{P_i\})) \ket{\lambda}\bra{\lambda}\pi(\lambda)d\lambda}}\\ &=&\int_{\Lambda}{\ket{\lambda}\bra{\lambda}\pi(\lambda)d\lambda}\\ &=&I
\end{eqnarray*}
we must have
\begin{equation*}
\sum_a{(\omega_1 P_a+ \omega_2I)}=\omega_1I+d\omega_2I=I,
\end{equation*}
thus
\begin{equation}\label{tchururu}
\omega_1+d\omega_2=1.
\end{equation}

Hence for computing $\tilde{\p}_A(a,\{P_i\})$ for a given $a$, it suffices to calculate the value of the expression $\tr(Q_b\tilde{\p}_A(a, \{P_i\}))$ for only one arbitrary choice of $Q_b$. (This is the great advantage of dealing with $\tilde{\p}_A$: it is uniform over all $Q_b$.) Choosing $Q_b=P_a$ \cite{Wer}, we have
\begin{equation}\label{shalala}
  \tr(\tilde{\p}_A(a,\{P_i\})P_a)=\tr([\omega_1 P_a+ \omega_2 I]P_a)=\tr(\omega_1 P_a+ \omega_2 P_a)=\omega_1+\omega_2.
\end{equation}
Assuming in advance that our local model works, i.e., that
\begin{equation*}
\tr(W P_a\otimes Q_b)= \int_{\Lambda}{\p_A(a,\lambda,\{P_i\})\p_B(b,\lambda,\{Q_j\})\pi(\lambda)d\lambda},
\end{equation*}
and using Eq. (\ref{PtilA}), we have
\begin{equation*}
\tr(WP_a\otimes Q_b)= \tr(Q_b\tilde{\p}_A(a,\{P_i\})),
\end{equation*}
hence $\tr(Q_b\tilde{\p}_A(a,\{P_i\}))$ can be easily calculated (once $W, P_a$ and $Q_b$ are determined and well-known) and Eqs. (\ref{tchururu}), (\ref{shalala}) defines explicitly $\omega_1,\omega_2$ and thus $\tilde{\p}_A$.

However, it is not really important the values of $\omega_1, \omega_2$. The main point here is that, for fixed $\{P_i\}$ and $a$, solving the problem for $Q_b=P_a$ implies solving it for any $Q_b$. In this situation, the only probability we need to achieve with our local model is
\begin{eqnarray*}
\tr(WP_a\otimes P_a)&=&\frac{(d-\phi)(\tr P_a)(\tr P_a)+(d\phi -1)\tr(P_a^2)}{d^3-d}\\ &=&\frac{(d-\phi)+(d\phi -1)}{d^3-d}\\ &=&\frac{d(1+\phi)-(1+\phi)}{d^3-d}\\ &=&\frac{(d-1)(1+\phi)}{d(d^2-1)}\\ &=&\frac{1+\phi}{d(d+1)},
\end{eqnarray*}
that is, it suffices to show that
\begin{equation*}
\frac{1+\phi}{d(d+1)}=\int_{\Lambda}{\p_A(a,\lambda,\{P_i\})\bra{\lambda}P_a\ket{\lambda}\pi(\lambda)d\lambda}.
\end{equation*}
But this last equality is trivially achieved by setting
\begin{equation}\label{defphi}
\phi=-1+d(d+1)\int_{\Lambda}{\p_A(a,\lambda,\{P_i\})\bra{\lambda}P_a\ket{\lambda}\pi(\lambda)d\lambda}.
\end{equation}
In other words, we have constructed a local model for the state given by the parameter $\phi$ determined by Eq. (\ref{defphi}). However, we do not have much information about this state. Particularly, we are still under the risk of the integral in Eq. (\ref{defphi}) provides a parameter $\phi$ which determines a separable state, and thus we are only presenting a sophisticated proof of a well known fact to us: that such separable state is local.

Therefore, it remains to show that the right-hand side of Eq. (\ref{defphi}) can be negative, what would gives us a negative $\phi$, corresponding (according with Theorem \ref{ob-ladi ob-lada}) to an entangled state. In order to do that, the card still left up our sleeve is the setting of Alice's response function $\p_A(a,\lambda,\{P_i\})$. Our task now is to determine $\p_A$ satisfying the symmetry (\ref{simetriaP}) for which the integral in (\ref{defphi}) becomes as small as possible (since it is always nonnegative), under the constraints $\p_A(a,\lambda,\{P_i\})\geq 0$ and $\sum_a{\p_A(a,\lambda,\{P_i\})}=1$ for all $\lambda$ and all $\{P_i\}$.

We observe that Eq. (\ref{defphi}) determines $\phi$ (and thus $W$) according with the choice of Alice's measurement $\{P_i\}$ and outcome $a$. So, in principle, it seems that we have constructed a state that depends on the measurement to be carried out on Alice's side. This would be a huge problem, since our goal is, for a given a state, be able to simulate the correlations provided by \emph{any} local measurements. We will see, however, that the choice of Alice's response function and the measure $\pi(\lambda)$ guarantees that the integral in (\ref{defphi}) equals the same value for any $a, \{P_i\}$ under consideration.

\section{Alice's response function and a local model for an entangled state}\label{alice response function}

At this point, we might feel tempted to look for a manner to compute the probabilities of Alice's outcomes similar to Bob's. The response function $\p_B(b,\lambda, \{Q_j\})$ employed by Bob has the characteristic property of depending only on $Q_b$, once $b$ and $\lambda$ are fixed. The remaining measurement operators do not influence $\p_B$. This may appear very natural, since Born's rule got us used to something like this, in the quantum context. Nevertheless, it is vital for Werner's construction that the response function $\p_A$ does not have this property.

Indeed, suppose that, by fixing $\lambda$, we have $\p_A=\p_A(P_a)$ (as it is $\p_B$). Then $\p_A$ is a non-negative, summing to one and additive map on families of mutually orthogonal projections, while response function. Associated to $\p_A$, there is a probability measure $\mu$ such that, if $P_i$ is the projector onto the subspace $E_i\subset\mathbb{C}^d$, then
\begin{equation*}
\p_A(P_i)=\mu(E_i),
\end{equation*}
in the sense of Def. \ref{prob meas on C}. According to Gleason's theorem (and assuming $d>2$), there is a density operator $\rho_{\lambda}$ such that
\begin{equation*}
\p_A(P_a)=\tr(P_a\rho_{\lambda}).
\end{equation*}
Substituting such $\p_A$ as well as the previously defined $\p_B$ in the left hand side of Eq. (\ref{local}), we have
\begin{equation*}
\int_{\Lambda}{\tr(P_a\rho_{\lambda})\tr(Q_b\ket{\lambda}\bra{\lambda})\pi(\lambda)d\lambda}= \int_{\Lambda}{\tr( P_a\otimes Q_b \rho_{\lambda}\otimes \ketbraa{\lambda})\pi(\lambda)d\lambda}.
\end{equation*}
Hence, those response functions give rise to a local model that simulates the mixed state
\begin{equation*}
\int_{\Lambda}{\rho_{\lambda}\otimes\ketbraa{\lambda}\pi(\lambda)d\lambda}.
\end{equation*}
However, the above state is clearly separable and, thus, local \emph{a priori}! (Remember, the objective of our quest is an \emph{entangled} local state.) Therefore, we conclude that the response function $\p_A$ should depend on other projectors of $\{P_i\}$ and not just on $P_a$.

With this in our minds, there is only one more observation to be made before the setting of $\p_A$. Since for every fixed $\lambda$ and $P$ the constraints
\begin{equation*}
\p_A(a,\lambda,\{P_i\})\geq 0,\ \ \ \sum_a{\p_A(a,\lambda,\{P_i\})}=1
\end{equation*}
single out a convex set in $\mathbb{R}^d$, we expect that the smallest values of $\phi$ is attained for response functions taking only the values 0 and 1. This suggests the following choice:
\begin{equation}\label{responseA}
  \p_A(a,\lambda,\{P_i\})=
  \begin{cases}
      1, & \text{if} \ \forall k\neq a, \ \bra{\lambda}P_a\ket{\lambda}<\bra{\lambda}P_k\ket{\lambda}  \\
      0, & \text{if} \ \exists k\neq a;
       \ \bra{\lambda}P_a\ket{\lambda}>\bra{\lambda}P_k\ket{\lambda}.
  \end{cases}
\end{equation}

Note that we have left $\p_A(a,\lambda,\{P_i\})$ unspecified at all points where $\bra{\lambda}P_a\ket{\lambda}$ is the minimum of $\bra{\lambda}P_{k}\ket{\lambda}$ but not the unique one. However, since this set is of measure zero, it will not contribute to the integral (\ref{defphi}) anyway, and we may choose on this subset any measurable function satisfying the constraint.

We have written Eq. (\ref{responseA}) in such a form that the property postulated in Eq. (\ref{simetriaP}) is manifest. Moreover, $\p_A$ is independent of the labeling of the $P_a$ in the sense that it only depends of the set of numbers $\bra{\lambda}P_{k}\ket{\lambda}$, but not on their ordering.

Substituting (\ref{responseA}) in the integrand in (\ref{defphi}), we have
\begin{equation}\label{mestre cristal}
  \int_{\Lambda}{\p_A(a,\lambda,\{P_i\})\bra{\lambda}P_a\ket{\lambda}\pi(\lambda)d\lambda}= \int_{\Lambda_a}{\bra{\lambda}P_a\ket{\lambda}\pi(\lambda)d\lambda},
\end{equation}
where $\Lambda_a=\{\lambda \in \Lambda;\forall k\neq a, \ \bra{\lambda}P_a\ket{\lambda}<\bra{\lambda}P_k\ket{\lambda}\}$. Setting $\bra{\lambda}P_i\ket{\lambda}=u_i$, through the relation
\begin{equation*}
\ket{\lambda}\mapsto (u_1,..., u_d)
\end{equation*}
we can identify the space of pure states of $\mathbb{C}^d\otimes \mathbb{C}^d$ with the simplex of $d$ vertices $\mathcal{S}=\{(u_1,...,u_d)\in \mathbb{R}^d; u_1+...+u_d=1\}$, which is embedded in $\mathbb{R}^{d-1}$, and $\Lambda_a$ with the subset $\mathcal{S}'\subset \mathcal{S}$ delimited by the hyperplanes $u_a=u_1,..., u_a=u_d$. The simplex is best imagined to ``stand" upon the plane $u_a=0$, so that the $u_a$ represents the ``vertical" axis. $\mathcal{S}'$ thus is the convex set formed by the face of $\mathcal{S}$ resting on the plane $u_a=0$ and vertex on the barycenter of $\mathcal{S}$, which is at height $u_a=1/d$.

To illustrate the above reasoning, let's calculate explicitly the integral in the case where $d=3$. The simplex turns out to be the restricted plane $\mathcal{S}=\{(x,y,z)\in\mathbb{R}^3; x+y+z=1,\ x,y,z\geq0 \}$ and the planar set $\mathcal{S}'\subset\mathcal{S}$ is delimited by the lines $x=y,\ 2z=1-x$ and $2z=1-y$.

\begin{figure}[h!]
  \centering
  % Requires \usepackage{graphicx}
  \includegraphics[width=8cm]{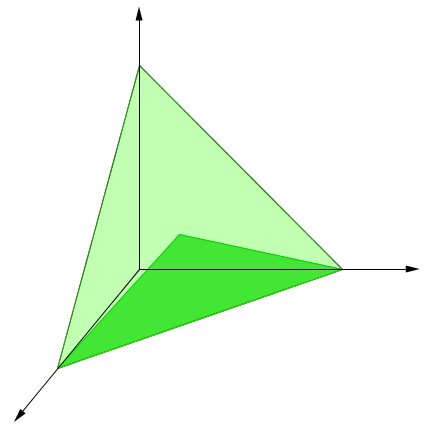}\\
  \caption{\small{The simplex $\mathcal{S}\subset\mathbb{R}^3$ in light color and the subset $\mathcal{S}'$ in dark.}}
\end{figure}

The barycenter of $\mathcal{S}$ (and apex of $\mathcal{S}'$) is the point $(1/3,1/3,1/3)$. We can parameterize the plane containing the simplex using
\begin{equation*}
f(r,s)=(r,s, 1-r-s),\ r,s\in\mathbb{R},
\end{equation*}
for which we have
\begin{equation*}
f_r(r,s)=(1,0,-1),\ f_s(r,s)=(0,1,-1),\ ||f_r\wedge f_s||=\sqrt3.
\end{equation*}

We want to integrate the function $g(r,s)=1-r-s$ over the domain comprehended between the points $(1,0,0),\ (0,1,0)$ and $(0,1/3,1/3)$. Hence, the integral is given by
\begin{eqnarray*}
% \nonumber to remove numbering (before each equation)
  &&\int_0^{\frac13}\int_{-2x+1}^{-x+1}(1-r-s)\sqrt3drds + \int_{\frac13}^{1}\int{_{-x/2+1/2}^{-x+1}}{(1-r-s)\sqrt3drds} \\
  &&= \frac{\sqrt3}{2\cdot3^4}+\frac{\sqrt3}{3^4} \\
  &&= \frac{\sqrt3}{2}\frac{1}{3^3}.
\end{eqnarray*}

The normalization (indicated by $\pi(\lambda)$ in Eq. (\ref{mestre cristal})) is done by dividing the resulting expression by $\sqrt3/2$, the area of $\mathcal{S}$, an equilateral triangle of side length $\sqrt2$. We obtain the result $1/3^3$.

In the general case, the integral of $u_a$ over $\mathcal{S}'$ is equal to the height of the barycenter of $\mathcal{S}'$ times the volume of $\mathcal{S}'$. Since the barycenter has height $u_a=1/d^2$ and the volume of $\mathcal{S}'$ is  $1/d$ (the simplex is formed by $d$ pieces congruent to $\mathcal{S}'$ and has volume equal to 1 by definition), we find that the integral results in $1/d^3$.

Substituting this value in Eq. (\ref{defphi}), we have
\begin{equation}\label{luffy}
\phi=-1+\frac{1+d}{d^2}.
\end{equation}
Henceforth, we have constructed a local model for projective measurements for an entangled Werner state, since this value of $\phi$ is negative for all $d\geq2$. We will denote this state by $W_{local}$.

By now, after the considerations over $\p_A$ and the correspondence with the simplex, it is clear that the value of $\phi$ in Eq. (\ref{defphi}) is independent of the choice of $P_a$ and the whole construction results in the state given by $\ref{luffy}$ regardless of our starting point.

\section{Barrett's local model}\label{barrett}

Werner's paper \cite{Wer} was a breakthrough in the foundations of Quantum Theory\footnote{However, it took a while for being acclaimed as so. It is nice to see the history of citations.}, since showed that entanglement and non-locality are distinct features of quantum states. His model is a proof that some aspects of Quantum Theory can be reproduced using only classical resources and that nonlocality is one of the distinctly nonclassical features of Quantum Theory.

\begin{figure}[h!]
  \centering
  % Requires \usepackage{graphicx}
  \includegraphics[width=15cm]{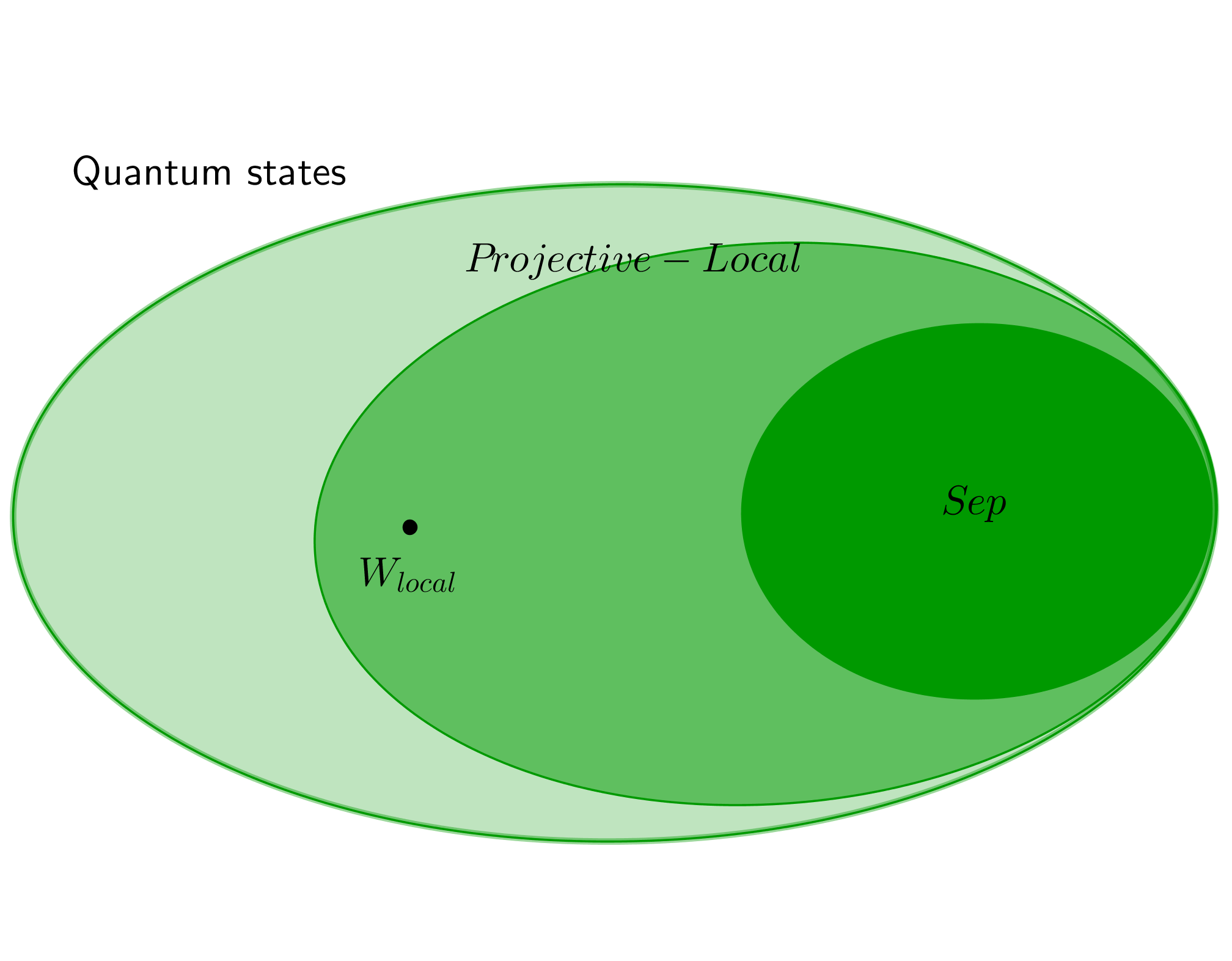}\\
  \caption{\small{Werner's local model shows that the set of separable states is strictly contained in the set of projective-local states: at least $W_{local}$ is projective-local but not separable.}}
\end{figure}

However, the reproduction capability presented by Werner's local model is limited, since it only works for projective measurements. Indeed, quantum mechanics allows us to perform a more general kind of measurement, the POVMs. The natural question raised now is: is there a local model which can reproduce the correlations generated by any POVM applied to an entangled state?

Jonathan Barrett showed that the answer to this question is affirmative \cite{Bar}. Choosing appropriate response functions for Alice and Bob, he was able to create a model that reproduces the correlations given by any POVM over an slightly modified Werner state which is entangled. More explicitly, the state simulated by Barrett is
\begin{equation}\label{shiryu}
  W_{Barrett}=\alpha\frac{2\sum_{i<j;i,j=1}^d{\ket{ij}\bra{ji}}}{d(d-1)}+(1-\alpha)\frac{I_{d\times d}}{d^2},
\end{equation}
where
\begin{equation*}
  \alpha=\frac{(d-1)^{d-1}(3d-1)}{(d+1)d^d}
\end{equation*}
and $d$ is the local dimension. (In Chapter \ref{hidden} we show that the original Werner state also has the form in Eq. (\ref{shiryu}), with $\alpha=(d-1)/d$.) In state $W_{Barrett}$, the first density matrix that appears in the superposition is sometimes called the \emph{projector onto the antisymmetric subspace} of $\mathcal{D}(\mathbb{C}^d\otimes \mathbb{C}^d)$. $W_{Barrett}$ can be shown to be entangled if and only if $\alpha>1/(1+d)$, which happens for all $d\geq2$.

\begin{figure}[h!]
  \centering
  % Requires \usepackage{graphicx}
  \includegraphics[width=15cm]{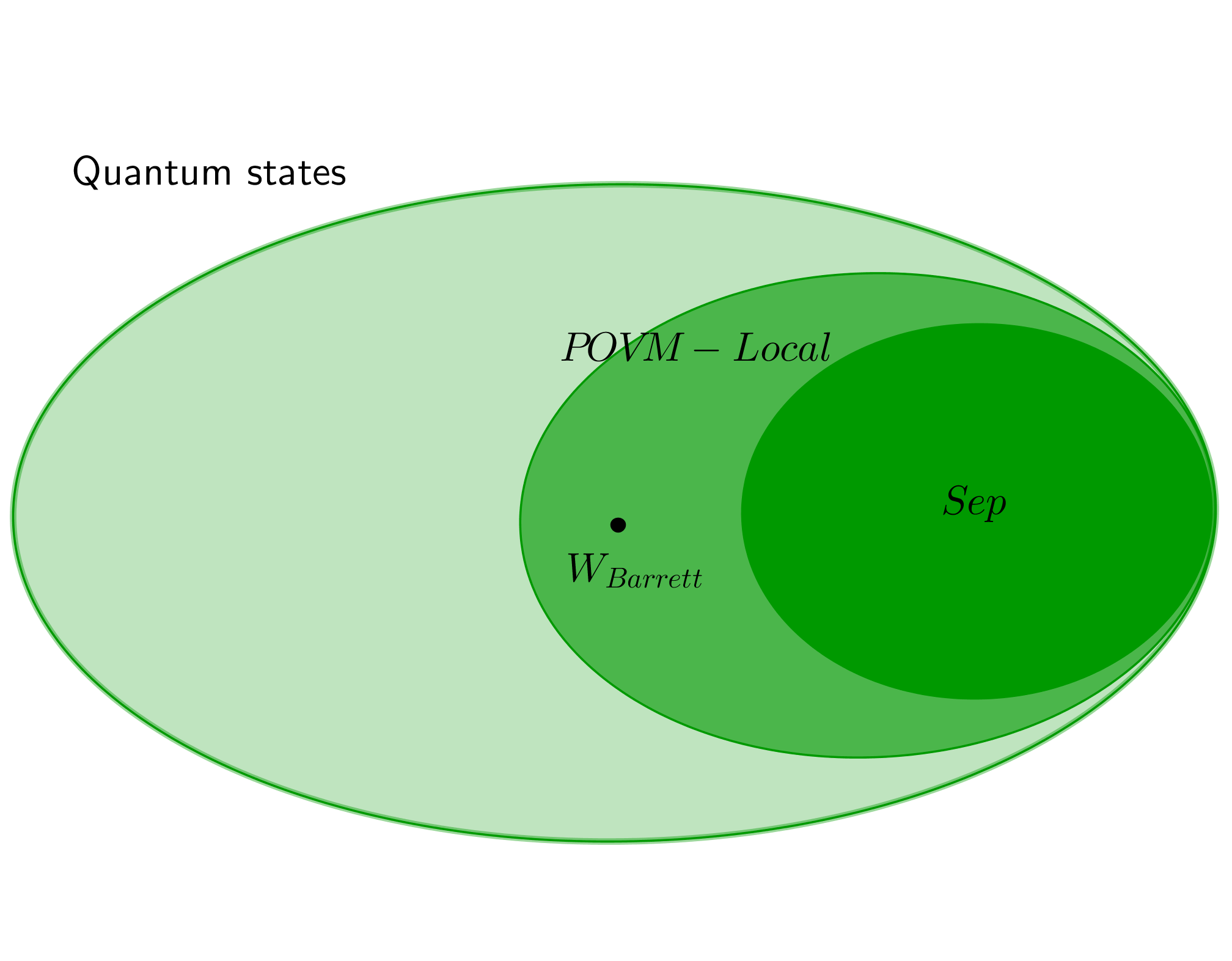}\\
  \caption{\small{Barrett's local model shows that the set of separable states is strictly contained in the set of POVM-local states: at least $W_{Barrett}$ is POVM-local but not separable. It is still an open question if the sets of projective-local and POVM-local states are the same or not; in other words, it is not known if general measurements indeed offer any advantage to detect the nonlocality of projective-local states.}}
\end{figure}

One of the key observations that allowed Barrett to construct the model was that we can restrict our attentions to the case where all the POVM elements are proportional to one-dimensional projectors, that is, POVMs of the form $M=\{M_i\}$ where $M_i=a_iP_i$, with $0\leq x_i \leq 1$ and $P_i$ one-dimensional projector. Though we are not going to detail Barrett's local model, this simple result will be further used in Chapter \ref{genuine}, where another local model for POVM will be detailed, so we prove it now.

\begin{prop}\label{shun} If a local model simulates the correlations of a POVM of the form $M=\{M_i\}$, where $M_i=a_iP_i$, $0\leq a_i \leq 1$ and $P_i$ is a rank-1 projector, then it simulates the correlations of any POVM.
\end{prop}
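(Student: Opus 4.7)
The plan is to reduce an arbitrary POVM to one of the special form by spectrally decomposing each element, and then to obtain the coarser POVM by grouping outcomes. Concretely, let $\{E_i\}$ be any POVM on Alice's side (a symmetric argument will handle Bob's side). Since each $E_i$ is positive semi-definite and satisfies $0\leq E_i\leq I$, the spectral theorem (Theorem \ref{spectral}) lets me write
\begin{equation*}
E_i=\sum_j a_{ij}P_{ij},
\end{equation*}
where the $P_{ij}$ are rank-1 projectors onto the eigenvectors of $E_i$ and $a_{ij}\in[0,1]$ are the corresponding eigenvalues.

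Next I would verify that the relabelled family $\widetilde M=\{a_{ij}P_{ij}\}_{(i,j)}$ is itself a POVM of the special form in the hypothesis. Positivity of its elements is immediate, and the completeness relation follows from
\begin{equation*}
\sum_{i,j}a_{ij}P_{ij}=\sum_i E_i=I.
\end{equation*}
Hence, by assumption, the given local model provides a hidden-variable space $\Lambda$, a distribution $\pi(\lambda)$, and response functions $\p_A((i,j),\lambda,\widetilde M)$, $\p_B$ that reproduce the joint probabilities $\tr(\rho\, a_{ij}P_{ij}\otimes N_b)$ for every $(i,j)$ and every POVM $\{N_b\}$ of the special form on Bob's side (and one repeats the argument on Bob's side to extend to arbitrary $\{N_b\}$).

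Now I would coarse-grain: define the response function on the original POVM by
\begin{equation*}
\p_A(i,\lambda,\{E_k\}):=\sum_j \p_A((i,j),\lambda,\widetilde M).
\end{equation*}
This remains non-negative, and summing over $i$ recovers $\sum_{(i,j)}\p_A((i,j),\lambda,\widetilde M)=1$, so it is a bona fide response function. Using linearity of the trace and of the integral in Eq.~(\ref{local}), one then checks
\begin{equation*}
\tr(\rho\, E_i\otimes N_b)=\sum_j\tr(\rho\, a_{ij}P_{ij}\otimes N_b)=\sum_j\int_\Lambda \pi(\lambda)\,\p_A((i,j),\lambda,\widetilde M)\,\p_B(b,\lambda,\{N_l\})\,d\lambda,
\end{equation*}
which by definition of $\p_A(i,\lambda,\{E_k\})$ equals $\int_\Lambda \pi(\lambda)\p_A(i,\lambda,\{E_k\})\p_B(b,\lambda,\{N_l\})d\lambda$. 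Repeating the same refine/coarse-grain procedure on Bob's side yields the locality condition for an arbitrary pair of POVMs.

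The only delicate point is the implicit dependence of $\widetilde M$ on the chosen spectral decomposition, but since any such decomposition yields an admissible refined POVM of the required form, the construction is well-defined once a convention is fixed. The main conceptual step, which is what makes the argument short, is recognising that a local model is preserved under coarse-graining of outcomes because the factorised form in Eq.~(\ref{local}) is stable under summing over labels of either party.
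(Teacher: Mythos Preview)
Your proof is correct and follows essentially the same approach as the paper: spectrally decompose each POVM element into rank-1 pieces, observe that the refined family is itself a POVM of the special form, and then coarse-grain the outcomes. The paper's own proof is much terser (it simply says to ``regard as if a more detailed POVM is taking place'' and to relabel outcomes), whereas you make explicit the definition of the new response function and verify the locality condition via linearity---but the underlying idea is identical.
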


\begin{proof} Let $N=\{N_j\}$ be an arbitrary POVM. According to the Spectral Decomposition Theorem (Theorem \ref{spectral}), each positive operator $N_j$ can be written as $\sum_k{c_{jk}P_{jk}}$, where $0\leq c_{jk}\leq 1$ and the $P_{jk}$ are one-dimensional operators satisfying $P_{jk}P_{jk'}=\delta_{kk'}P_{jk}$. If each $N_j$ is written in this form, then we can regard as if a more ``detailed" POVM is taking place than the one who is actually been performed, with elements $\{c_{jk}P_{jk}\}$, and our model will make the appropriate predictions. If the outcome referent to $P_{jk}$ is predicted by the model, then we say that the outcome $N_j$ was actually obtained.
\end{proof}

In Barrett's model, the set of hidden variables is the same as the one used by Werner, tridimensional unit vectors $\lambda\in\mathbb{C}^d$, considered with uniform distribution. The response functions defined for the POVMs $M=\{M_i\}=\{x_iP_i\}$ performed by Alice is
\begin{eqnarray*}
  \p_A(i,\lambda,\{M_i\})&=&\bra{\lambda}M_i\ket{\lambda} \chi_{\{x\in\mathbb{R};x\geq0\}}(\bra{\lambda}P_i\ket{\lambda}-1/d) \\
  & & +\left(1-\sum_k{\bra{\lambda}M_k\ket{\lambda} \chi_{\{x\in\mathbb{R};x\geq0\}}(\bra{\lambda}P_k\ket{\lambda}-1/d)} \right)\frac{x_i}{d},
\end{eqnarray*}
where the characteristic function $\chi_S$ is defined to map $s$ to 1 if $s\in S$ and to 0 otherwise. Bob's response function  for $N=\{N_j\}=\{y_jQ_j\}$ is
\begin{equation*}
  \p_B(j,\lambda,\{N_j\})=\frac{1}{d-1}y_j(1-\bra{\lambda}Q_j\ket{\lambda}).
\end{equation*}
More about this subject will be said in Chapter \ref{genuine}.

\chapter{Hidden nonlocality}\label{hidden}

Until Sandu Popescu's work \cite{Pop}, all the treatment of Bell inequalities had a common aspect: they all consider the case in which the system is subjected to a \emph{single} local measurement in each part. Popescu showed that, despite the Werner state referent to the parameter $\phi$ in Eq. (\ref{luffy}) being local for any single measurement, i.e., do not violate any Bell inequality, for a \emph{sequence of measurements}, it does. This property became known as \emph{hidden nonlocality}.

The local state provided by Werner is
\begin{equation}\label{shushu}
  W_{local}=\frac{d+1}{d^3}I_{d\times d}-\frac{1}{d^2}V.
\end{equation}

Defining $\ket{S_{ij}}$ to be ``the singlet state in positions $i$ and $j$", that is,
\begin{equation*}
  \ket{S_{ij}}=\frac{1}{\sqrt2}(\ket{ij}-\ket{ji}),
\end{equation*}
we have
\begin{equation*}
  -2\ket{S_{ij}}\bra{S_{ij}}=\ket{ij}\bra{ji}+\ket{ji}\bra{ij}-\ket{ij}\bra{ij}-\ket{ji}\bra{ji},
\end{equation*}
thus the flip operator $V$ can be written as
\begin{equation*}
  V=I-2\sum_{i<j;i,j=1}^d{\ket{S_{ij}}\bra{S_{ij}}}.
\end{equation*}
Substituting the above relation in Eq. (\ref{shushu}), we can write the Werner state as
\begin{equation}\label{w2}
  W=\frac{1}{d^2}\left(\frac{1}{d}I+2\sum_{i<j;i,j=1}^d{\ket{S_{ij}}\bra{S_{ij}}}\right).
\end{equation}

We will now show that after Alice and Bob perform each one a large rank projective measurement, it is possible that the resulting state violates the CHSH inequality for specified measurements $A, A', B$ and $B'$.

First, each part performs the measurement referent to the projective operator
\begin{equation*}
  P=\ket{1}\bra1 + \ket2 \bra2,
\end{equation*}
that is, the measurement elements involved are $\{P, I-P\}$. The resulting state belongs to the space generated by $\{\ket{1}, \ket{2}\}$ (in which case we will say the outcome was 1) or to the space generated by $\{\ket{3},...,\ket{d} \}$ (in which case we will say the outcome was 0).

Because each part is performing this measurement, we have four possible outcomes: $(0,0), (0,1), (1,0)$ and $(1,1)$. The resulting (unnormalized) state corresponding to the outcome $\{1,1\}$ is
\begin{eqnarray*}
% \nonumber to remove numbering (before each equation)
  P\otimes P W_{local} P\otimes P &=& \frac{1}{d^3}P\otimes P I P\otimes P+ \frac{2}{d^2}\sum_{i<j}P\otimes P \ket{S_{ij}}\bra{S_{ij}}P\otimes P\\
   &=& \frac{1}{d^3}P\otimes P + \frac{2}{d^2} \ket{S_{12}}\bra{S_{12}},
\end{eqnarray*}
where we have used the form (\ref{w2}). Let's denote the subspace generated by $\{\ket1, \ket2\}$ by $\mathcal{S}_{12}$. Notice that $P\otimes P := I^{4\times 4}$ is the $4\times 4$ identity matrix acting in the $\mathcal{S}_{12}\otimes \mathcal{S}_{12}$ subspace and zero at the rest. Recalling that $\ket{S_{12}}=\ket{\Psi_-}$, after normalization we obtain
\begin{eqnarray*}
% \nonumber to remove numbering (before each equation)
  W' &=& \frac{1}{\tr(\frac{1}{d^3}I^{4\times 4} + \frac{2}{d^2} \ket{\Psi_-}\bra{\Psi_-})}\left(\frac{1}{d^3}I^{4\times 4} + \frac{2}{d^2} \ket{\Psi_-}\bra{\Psi_-} \right)\\
   &=& \frac{1}{(\frac{4}{d^3}+\frac{2}{d^2})d^3} I^{4\times 4} +\frac{2}{(\frac{4}{d^3}+\frac{2}{d^2})d^2} \ket{\Psi_-}\bra{\Psi_-}\\
   &=& \frac{d}{d+2}\left(\frac{1}{2d}I^{4\times 4}+\ket{\Psi_-}\bra{\Psi_-}\right).
\end{eqnarray*}
Notice that as $d$ grows, the state approaches the singlet.

For the second measurement, Alice chooses\footnote{At this point, perhaps should be made a comment on a delicate topic, ``causality". We won't.} randomly between operators $A$ and $A'$ and Bob between $B$ and $B'$. Each of these operators have three different eigenvalues, 1, -1 and 0. The eigenvalues 1 and -1 are non-degenerate and the corresponding eigenvectors belong to the subspace $\mathcal{S}_{12}$. The eigenvalue 0 is highly degenerate and corresponds to the rest of the space, the subspace generated by $\{\ket3, ...,\ket d\}$. The nongenerate part of these operators is chosen such that they yield maximal violation of the CHSH inequality for the singlet state $\ket{\Psi_-}$, that is,
\begin{equation*}
  \bra{\Psi_-}AB+AB'+A'B-A'B'\ket{\Psi_-}=2\sqrt2.
\end{equation*}
In another words, the operators $A, A', B, B'$, when restricted to $\mathcal{S}_{12} \otimes \mathcal{S}_{12}$, are equal to the operators $Q, R, S, T$, respectively, defined in Section \ref{belldesi}.

With these operators and the state $W'$, we have
\begin{equation*}
% \nonumber to remove numbering (before each equation)
  \tr(W'[AB+AB'+A'B-A'B'])
\end{equation*}
\begin{equation*}
  =\frac{1}{2d+4}\tr(QS+QT+RS-RT)+\frac{d}{d+2}\bra{\Psi_-}AB+AB'+A'B-A'B'\ket{\Psi_-}.
\end{equation*}
It is not hard to see that $\tr(QS)=\tr(RS)=\tr(RT)=\tr(QT)=0$, hence the first term of the right side of the above equation vanishes and we conclude
\begin{equation*}
  \tr(W'[AB+AB'+A'B-A'B']) = \frac{2\sqrt2d}{d+2}>2
\end{equation*}
for $d>5$.

We conclude that, although Werner's model can simulate all the correlations which arise when only a single measurement is performed on each of the two particles, the model cannot account for the correlations which arise when two consecutive measurements are performed in each particle.

\begin{figure}[h!]
  \centering
  % Requires \usepackage{graphicx}
  \includegraphics[width=18cm]{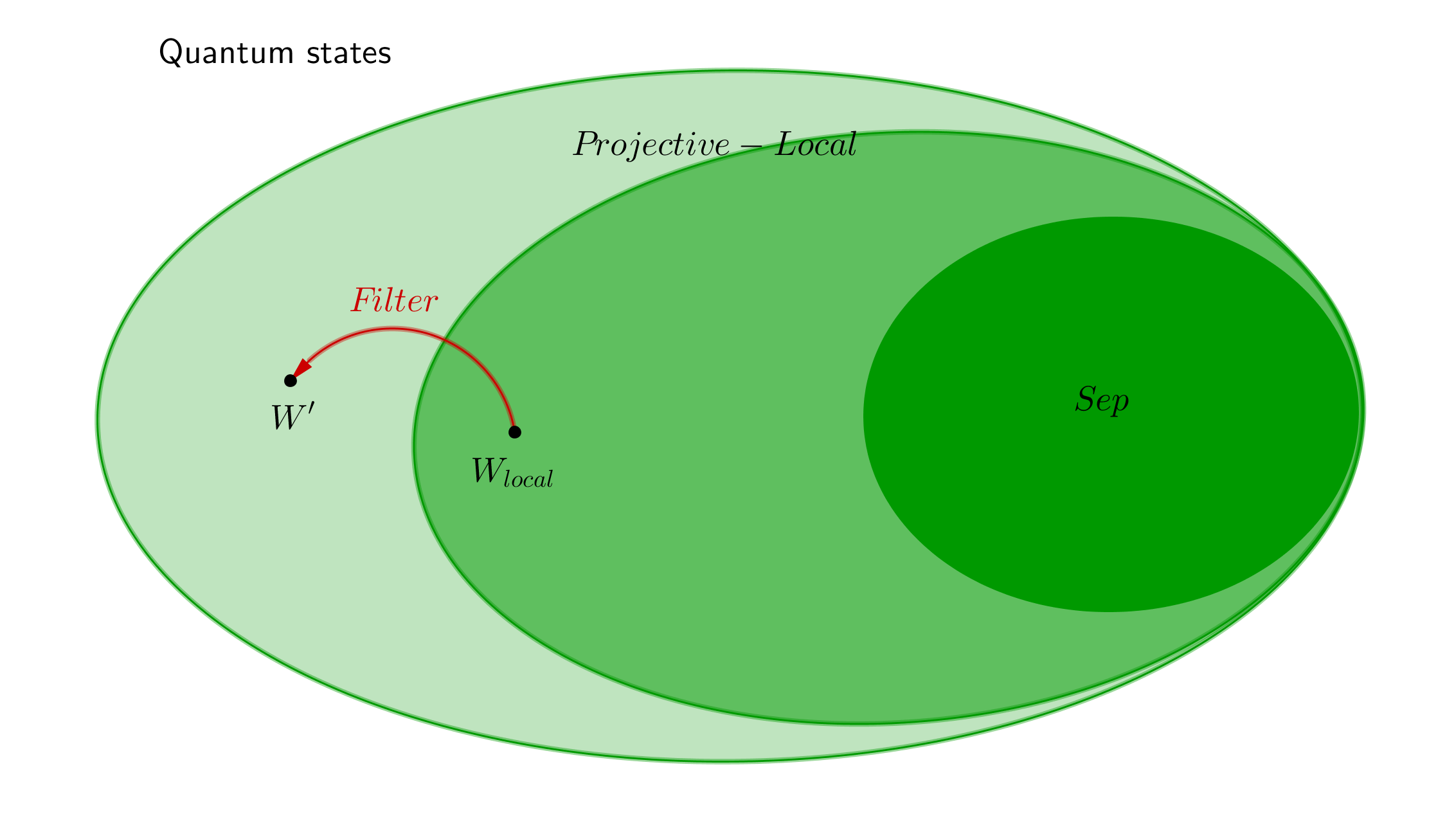}\\
  \caption{\small{The successful application of a local filtering (represented by the arrow in red) on $W_{local}$ results in a state $W'$ which violates CHSH, and thus does not belong to the set of local states.}}
\end{figure}

The main idea here is the introduction of previous local measurements, which are able to take the Werner state $W_{local}$ to a CHSH-violating state with non-zero probability. The nonlocality revelation depends on the success of these measurements. Therefore, they are known as \emph{local filters}: they filter the protocol of revealing nonlocality, in the sense that the process is discarded if they fail, not providing the desired state needed for a Bell inequality violation. In Chapter 7 we will meet such objects again.

\chapter{Gisin-Degorre's local model}\label{degorre}

In 1999, Nicolas Gisin and Bernard Gisin presented a local model for the singlet state based on the detection loophole, that is, the non-maximal efficiency inherent to detectors, in experiments \cite{GG}. The idea was to introduce a ``probability of failure" in the detector of one system (say Alice's), generated by the local hidden variables.

Years later, in 2005, Julien Degorre and co-workers studied a way to quantify nonlocality by measuring the amount of additional resources that should be considered besides the hidden variables in order to simulate the measurements' correlations, namely classical communications, post-selection and nonlocal boxes \cite{DLR}. They considered the Gisin-Gisin model in \cite{GG} and reinterpreted the probability of failure as a probability of rejection by Alice. Then, with only one bit of communication, they were able to reproduce locally the statistics of the outcomes obtained in a EPR experiment (see Section \ref{epr}). However, without any further resources, their resulting protocol happens to simulate the Werner state of $\mathbb{C}^2\otimes \mathbb{C}^2$
\begin{equation*}
  W=\alpha \ket{\psi_-}\bra{\psi_-}+(1-\alpha)\frac{I_{4\times 4}}{4},
\end{equation*}
with parameter $\alpha=1/2$.

In order to agree in notation with the original paper \cite{DLR}, from this point on we are going to use the scalar product symbol for the functional of $\mathbb{R}^3 \times \mathbb{R}^3$, which for unit vectors denotes the co-sine of the angle between them, i.e.,
\begin{equation*}
  v\cdot u:=\cos\theta_{vu},
\end{equation*}
for unit vectors $\ket v, \ket u\in\mathbb{R}^3$.

\section{Shared randomness}

The context of \cite{DLR} is Information Theory, and this influences their approach to the local models problem. The first important observation is that the local hidden variables can be seen as shared randomness, that is, as random variables provided by some source to which both Alice and Bob have access. As discussed in Chapter \ref{nonlocality}, the role of the variables $\lambda$, which are intrinsically random, is to completely explain the correlation between the outcomes of each part. But we can go further and assume that the whole randomness of the system is due to such variables, like if the probabilistic nature of the outcomes is incorporated in the distribution $\pi(\lambda)$ \cite{BCPSW}. In this sense, once we know the hidden variable $\lambda$, the result of every possible measurement is determined and there is no correlation between the systems.

In this view, the local model works by taking a $\lambda$ variable from a random source that works according to $\pi(\lambda)$ and using it to compute the response functions $\p_A, \p_B$. The integral in Eq. (\ref{local}) means that considering all the $\lambda$ that provide outcome $a$ for a measurement $\{M_i\}$, we reach $\p(a|\{M_i\})$. So we can imagine that exists a random source to which all parts have access providing the variables that Alice and Bob input in their response functions. As we will see, it is part of the strategy of the Gisin-Degorre model that Alice sieves such variables before using them, provoking a bias on the initially uniform distribution of them.

Another remarking difference is that the authors are generally interested in the mean quantity of information exchanged in a process, so that it is natural to consider the expected value of an outcome instead of its probability. This means that now we are going to simulate the expectation $\E(A)$, where $A$ is the random outcome of the observable $x\cdot\sigma$, instead of $\p(a|x)$ for some fixed $a$. An immediate consequence is that now our local model will be label-dependent, unlike Werner's.

With the deterministic view of the measurements' results provided by the shared random variables, the local model is now defined by response functions of the form $A(x,\lambda)$ that take values on $\{-1,1\}$, accordingly to the random variable $\lambda\in\Lambda$, which has distribution $\pi(\lambda)$. In this way, the joint expectation of measurements $x\cdot\sigma$ done by Alice and $y\cdot\sigma$ done by Bob becomes
\begin{equation*}
\E(AB)=\int_{\Lambda}{\pi(\lambda)A(x,\lambda)B(y,\lambda)d\lambda}.
\end{equation*}

Now we must also guarantee that the marginals are achieved properly by the response functions, so we must check that
\begin{equation*}
\E(A)=\int_{\Lambda}{A(x,\lambda)\pi(\lambda)d\lambda}
\end{equation*}
and
\begin{equation*}
\E(B)=\int_{\Lambda}{B(y,\lambda)\pi(\lambda)d\lambda}.
\end{equation*}

\section{A local model for the EPR experiment correlations}

In Section \ref{epr}, we presented the EPR experiment, which proves that if Quantum Theory were local, it would be an incomplete theory. There we used the observables $\sigma_x, \sigma_z$, but any other pair of non-commuting observables could replace them. We now present a version slightly modified of the experiment, where Alice and Bob are allowed to measure the spin along any direction they choose, based on Bohm's simplified version.

\begin{definicao} Alice and Bob share a qubit pair in the singlet state $\ket{\Psi_-}=(\ket{01}-\ket{10})/\sqrt2$. Each of them then receive the description of a projective measurement they have to perform in their respective qubit, which can be represented by unit vectors $\ket{x}$ and $\ket{y}$ in $\mathbb{R}^3$. They then obtain measurement outcomes $A,B\in \{1,-1\}$, where 1 corresponds to a spin parallel to the measurement's direction and -1 to spin anti-parallel to this direction.
\end{definicao}

According to Quantum Theory, the outcome of Alice's and Bob's measurements, $A$ and $B$, have the following joint probabilities:
\begin{equation*}
\p(AB)=\frac{1-ABx\cdot y}{4},
\end{equation*}
or, equivalently, their joint and marginal expectation values are given by
\begin{eqnarray*}
% \nonumber to remove numbering (before each equation)
  \E(AB) &=& -x\cdot y, \\
  \E(A) &=& 0, \\
  \E(B) &=& 0.
\end{eqnarray*}

So, in order to simulate locally the EPR experiment, we must achieve these three expected values. Because of the Bell inequality violation (Section \ref{belldesi}), we know that this cannot be done using only hidden variables, that is, there is no way to define $\lambda, \pi(\lambda), A(x,\lambda)$ and $B(y,\lambda)$ such that
\begin{equation*}
    \E(AB)=\int{\pi(\lambda)A(x,\lambda)B(y,\lambda)d\lambda},
\end{equation*}
with $\pi(\lambda)$ independent of $x$ and $y$.

However, Degorre and co-workers showed that if we allow the distribution of the hidden variables \emph{to depend of Alice's measurement}, then statistics can be reproduced locally. In order to prove that, we start with two technical lemmas, which will be useful to the next results as well.

\begin{lema}\label{lema1} For any fixed unit vector $x\in\mathbb{R}^3$ we have 
\begin{equation*}
\int_{\mathbb{S}^2}{(x\cdot \lambda)d\lambda}=0.
\end{equation*}
\end{lema}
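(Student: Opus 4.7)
The plan is to exploit the antipodal symmetry of the two-sphere. The map $\iota:\mathbb{S}^2\to\mathbb{S}^2$ defined by $\iota(\lambda)=-\lambda$ is a diffeomorphism that preserves the standard surface measure $d\lambda$, since it is the restriction of an orthogonal transformation of $\mathbb{R}^3$. Under this change of variables, the integrand transforms as $x\cdot(-\lambda)=-(x\cdot\lambda)$. Hence
\begin{equation*}
\int_{\mathbb{S}^2}(x\cdot\lambda)\,d\lambda=\int_{\mathbb{S}^2}(x\cdot\iota(\lambda))\,d\lambda=-\int_{\mathbb{S}^2}(x\cdot\lambda)\,d\lambda,
\end{equation*}
so the integral equals its own negative and must vanish.

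If a more concrete argument is preferred, one can alternatively choose spherical coordinates adapted to $x$: by rotational invariance of the sphere, we may assume without loss of generality that $x=(0,0,1)$, in which case $\lambda=(\sin\theta\cos\varphi,\sin\theta\sin\varphi,\cos\theta)$ with $\theta\in[0,\pi]$, $\varphi\in[0,2\pi)$, and $d\lambda=\sin\theta\,d\theta\,d\varphi$. Then $x\cdot\lambda=\cos\theta$, and
\begin{equation*}
\int_{\mathbb{S}^2}(x\cdot\lambda)\,d\lambda=\int_0^{2\pi}\!\!\int_0^{\pi}\cos\theta\sin\theta\,d\theta\,d\varphi=2\pi\cdot\left[\tfrac{\sin^2\theta}{2}\right]_0^{\pi}=0.
\end{equation*}

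There is no real obstacle here; the statement is a direct consequence of the symmetry of the sphere under the antipodal map (or equivalently, of the fact that the linear functional $\lambda\mapsto x\cdot\lambda$ is an odd function whose integral over any centrally symmetric set vanishes). The only care needed is to note that the lemma holds regardless of whether $d\lambda$ denotes the unnormalized surface measure or the uniform probability measure on $\mathbb{S}^2$, since both are antipodally invariant.
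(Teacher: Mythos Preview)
Your proposal is correct. The second argument you give---choosing spherical coordinates with $x$ along the polar axis and computing $2\pi\int_0^\pi\cos\theta\sin\theta\,d\theta=2\pi\left[\tfrac{1}{2}\sin^2\theta\right]_0^\pi=0$---is exactly the paper's proof, down to the choice of antiderivative.

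Your first argument via the antipodal map $\iota(\lambda)=-\lambda$ is a genuinely different and cleaner route: it replaces an explicit coordinate computation by a one-line parity observation, and it generalizes immediately to $\mathbb{S}^{n}$ for any $n$ and to any odd integrand, whereas the paper's computation is specific to $\mathbb{S}^2$ and to this particular linear integrand. The paper's approach, on the other hand, has the minor advantage of setting up the exact coordinate frame and antiderivative that are reused verbatim in the proof of the companion Lemma~\ref{lema2} (where the integrand $|x\cdot\lambda|$ is even and the symmetry trick no longer helps).
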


\begin{proof} Using spherical coordinates and adopting the reference frame where $\ket{x}=(1,0,0)$ and $\ket{\lambda}=(1,\theta,\phi)$, we have
\begin{eqnarray*}
% \nonumber to remove numbering (before each equation)
  \int_{\mathbb{S}^2}{(x\cdot \lambda)d\lambda} &=& \int_0^{\pi}{\int_0^{2\pi}{(x\cdot\lambda)\sin\theta d\phi}d\theta}.
\end{eqnarray*}
The scalar product of $\ket x$ and $\ket \lambda$ is the co-sine of the angle between them. With the reference frame we set, by definition this angle is $\theta$, since we put $\ket a$ over the $z$-axis. So we obtain
\begin{eqnarray*}
  \int_0^{\pi}{\int_0^{2\pi}{(x\cdot\lambda)\sin\theta d\phi}d\theta} &=& 2\pi \int_0^{\pi}{\cos\theta\sin\theta d\theta} \\
   &=& 2\pi \left[\frac{1}{2}\sin^2\theta\right]_0^{\pi} \\
   &=& 0.
\end{eqnarray*}

\end{proof}

\begin{lema}\label{lema2} For any fixed unit vector $x\in\mathbb{R}^3$ we have\footnote{This result can be generalized to $\int_{\mathbb{S}^n}{|x\cdot \lambda|d\lambda}=\frac{2}{n}S_{n-1}$, where $S_{n-1}$ denotes the surface area of $\mathbb{S}^{n-1}$ \cite{degorre2006}.} 
\begin{equation*}
\int_{\mathbb{S}^2}{|x\cdot \lambda|d\lambda}=2\pi.
\end{equation*}
\end{lema}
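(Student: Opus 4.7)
The plan is to mimic the coordinate setup already used in the proof of Lemma \ref{lema1}. I would fix the reference frame so that $\ket{x}=(0,0,1)$ points along the $z$-axis, and parametrize $\ket{\lambda}\in\mathbb{S}^2$ by the usual spherical coordinates $(\theta,\phi)$ with $\theta\in[0,\pi]$ and $\phi\in[0,2\pi)$. Under this choice $x\cdot\lambda=\cos\theta$, and the surface element of $\mathbb{S}^2$ is $\sin\theta\,d\theta\,d\phi$.

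With this substitution the integral becomes
\begin{equation*}
\int_{\mathbb{S}^2}|x\cdot\lambda|\,d\lambda=\int_0^{2\pi}\!\!\int_0^{\pi}|\cos\theta|\sin\theta\,d\theta\,d\phi=2\pi\int_0^{\pi}|\cos\theta|\sin\theta\,d\theta.
\end{equation*}
The remaining step is to evaluate $\int_0^{\pi}|\cos\theta|\sin\theta\,d\theta$. Since $\cos\theta\geq0$ on $[0,\pi/2]$ and $\cos\theta\leq0$ on $[\pi/2,\pi]$, and since $\sin\theta\cos\theta$ is antisymmetric about $\theta=\pi/2$ while $\sin\theta$ is symmetric, the integrand is symmetric about $\pi/2$, so
\begin{equation*}
\int_0^{\pi}|\cos\theta|\sin\theta\,d\theta=2\int_0^{\pi/2}\cos\theta\sin\theta\,d\theta=\left[\sin^2\theta\right]_0^{\pi/2}=1.
\end{equation*}
Multiplying by $2\pi$ yields the desired value $2\pi$.

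There is really no obstacle here: unlike Lemma \ref{lema1}, where the antisymmetry of $\cos\theta$ on $[0,\pi]$ forced the integral to vanish, the absolute value restores positivity and the two halves of $[0,\pi]$ contribute equally. The only thing to be careful about is the coordinate-independence of the final answer, which is automatic because $|x\cdot\lambda|$ is invariant under the rotation that sends an arbitrary unit vector $\ket x$ to the north pole, and the uniform measure on $\mathbb{S}^2$ is rotation-invariant.
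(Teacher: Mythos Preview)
Your proof is correct and essentially identical to the paper's: both fix $\ket{x}$ along the polar axis, reduce to $2\pi\int_0^{\pi}|\cos\theta|\sin\theta\,d\theta$, and split at $\theta=\pi/2$ to evaluate. The only cosmetic difference is that the paper computes the two half-interval integrals separately rather than invoking symmetry, and your added remark on rotation invariance is a welcome clarification.
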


\begin{proof}Under the same assumptions in the proof of Lemma \ref{lema1}, we have
\begin{eqnarray*}
% \nonumber to remove numbering (before each equation)
  \int_{\mathbb{S}^2}{|x\cdot \lambda|d\lambda} &=& \int_0^{\pi}{\int_0^{2\pi}{|\cos\theta|\sin\theta d\phi}d\theta} \\
   &=& 2\pi\int_0^{\pi}{|\cos\theta|\sin\theta d\theta} \\
   &=& 2\pi\left(\int_0^{\frac{\pi}{2}}{\cos\theta\sin\theta d\theta}- \int_{\frac{\pi}{2}}^{\pi}{\cos\theta\sin\theta d\theta}\right)\\
   &=& 2\pi\left( \left[\frac{1}{2}\sin^2\theta\right]_0^{\frac{\pi}{2}} - \left[\frac{1}{2}\sin^2\theta\right]_{\frac{\pi}{2}}^{\pi} \right)\\
   &=& 2\pi.
\end{eqnarray*}

\end{proof}

\begin{teo}\label{teosimuEPR} Let $\ket{x}$ and $\ket{y}$ be respectively Alice's and Bob's inputs. If Alice and Bob share a random variable $\ket{\lambda} \in \mathbb{S}^2$ distributed according to a biased distribution with probability density
\begin{equation}\label{bias}
  \pi(\lambda_s|x)=\frac{|x \cdot \lambda_s|}{2\pi}
\end{equation}
then they are able to simulate the correlations of the EPR experiment without any further resource.
\end{teo}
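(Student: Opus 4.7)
The plan is to exhibit deterministic response functions for both parties and then verify all three required expectation values by direct integration, using Lemmas \ref{lema1} and \ref{lema2} plus one symmetry argument.

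First I would set
\begin{equation*}
A(x,\lambda)=-\sign(x\cdot\lambda),\qquad B(y,\lambda)=\sign(y\cdot\lambda),
\end{equation*}
each taking values in $\{-1,+1\}$ as required. This is the natural choice: Bob applies a \emph{sign} rule against his direction, while Alice does the same (up to a sign, reflecting the anti-correlation typical of the singlet) against hers. Note that Alice can compute $\sign(x\cdot\lambda)$ locally since she knows $x$, and similarly Bob can compute $\sign(y\cdot\lambda)$ from his own input. The role of the biased density $\pi(\lambda_s\mid x)=|x\cdot\lambda_s|/(2\pi)$ is to compensate, through the factor $|x\cdot\lambda|$, for the loss of information hidden in Alice's sign function; this is precisely what allows the joint expectation to reproduce the co-sine correlation of the singlet.

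Next I would check Alice's marginal $\E(A)$. Since $\sign(x\cdot\lambda)\,|x\cdot\lambda|=x\cdot\lambda$, the integral collapses via
\begin{equation*}
\E(A)=-\int_{\mathbb{S}^2}\sign(x\cdot\lambda)\,\frac{|x\cdot\lambda|}{2\pi}\,d\lambda=-\frac{1}{2\pi}\int_{\mathbb{S}^2}(x\cdot\lambda)\,d\lambda=0,
\end{equation*}
the last equality being exactly Lemma \ref{lema1}. For Bob's marginal $\E(B)$, the integrand is not odd in $x$, so I would invoke the symmetry $\lambda\mapsto -\lambda$ on $\mathbb{S}^2$: this leaves the (surface) measure invariant and leaves $|x\cdot\lambda|$ invariant, while flipping $\sign(y\cdot\lambda)$. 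Hence $\E(B)=-\E(B)=0$.

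The main work is $\E(AB)=-x\cdot y$, and this is the step I expect to be least automatic. I would decompose $x=(x\cdot y)\,y+x_\perp$ with $x_\perp\perp y$, so that
\begin{equation*}
\E(AB)=-\int_{\mathbb{S}^2}\sign(y\cdot\lambda)\,\frac{x\cdot\lambda}{2\pi}\,d\lambda =-\frac{x\cdot y}{2\pi}\int_{\mathbb{S}^2}\sign(y\cdot\lambda)(y\cdot\lambda)\,d\lambda -\frac{1}{2\pi}\int_{\mathbb{S}^2}\sign(y\cdot\lambda)\,(x_\perp\cdot\lambda)\,d\lambda.
\end{equation*}
The second integral vanishes by the azimuthal symmetry around the $y$-axis: in spherical coordinates with $y$ along the polar axis, $\sign(y\cdot\lambda)=\sign(\cos\theta)$ is $\phi$-independent while $x_\perp\cdot\lambda$ is proportional to $\sin\theta\cos(\phi-\phi_0)$ and integrates to zero over $\phi\in[0,2\pi]$. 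The first integral equals $\int_{\mathbb{S}^2}|y\cdot\lambda|\,d\lambda=2\pi$ by Lemma \ref{lema2}. Combining these gives $\E(AB)=-x\cdot y$, completing the simulation. The only mild subtlety is keeping track of the sign conventions so the anti-correlation comes out right; this is why I placed the minus sign on Alice's response function rather than Bob's.
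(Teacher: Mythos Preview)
Your proof is correct and uses the same response functions, the same appeal to Lemma~\ref{lema1} for $\E(A)$, and essentially the same antipodal symmetry $\lambda\mapsto-\lambda$ for $\E(B)$ that the paper uses (the paper phrases it as splitting $\mathbb{S}^2$ into the two half-spheres determined by $x$ and substituting $\lambda_-=-\lambda_+$).

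The one genuine difference is how you evaluate $\E(AB)$. The paper fixes a concrete reference frame with $\ket y=(0,0,1)$, $\ket x=(\sin\alpha,0,\cos\alpha)$ and $\ket\lambda=(\sqrt{1-\eta^2}\cos\phi,\sqrt{1-\eta^2}\sin\phi,\eta)$, then computes the double integral over $(\eta,\phi)$ by brute force until $-\cos\alpha=-x\cdot y$ falls out. You instead decompose $x=(x\cdot y)\,y+x_\perp$ orthogonally, kill the $x_\perp$ term by azimuthal symmetry around $y$, and recognise the surviving integral as $\int_{\mathbb{S}^2}|y\cdot\lambda|\,d\lambda=2\pi$ via Lemma~\ref{lema2}. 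Your route is shorter and more conceptual---it makes transparent why the answer depends only on the angle between $x$ and $y$, and it reuses Lemma~\ref{lema2} rather than redoing an equivalent calculation inside the proof. The paper's explicit-coordinates computation, on the other hand, requires no geometric insight beyond choosing axes and is completely mechanical once the parametrisation is written down.
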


\begin{proof}
First, notice that $\pi$ is indeed a probability density function, since $\pi(\lambda) \geq 0$ for all $\lambda \in \mathbb{S}^2$ and Lemma \ref{lema2} shows that
\begin{equation*}
  \frac{1}{2\pi}\int_{\mathbb{S}^2}{|x \cdot \lambda_s|d\lambda_s}=1.
\end{equation*}

If Alice and Bob set their respective response functions as
\begin{equation}\label{degorre response}
  A(x,\lambda_s)=-\sign(x \cdot \lambda_s),\ B(y,\lambda_s)=\sign( y \cdot \lambda_s),
\end{equation}
where the sign function $\sign(z)$ is defined as $1$ if $z\geq0$ and $-1$ if $z<0$, then the joint expectation value is given by\footnote{Notice that in this case we can also write the sign function $\sign(\ket x \cdot \ket{\lambda_s})$ to be 1 if $\bra{x}\lambda_s\rangle> \bra{x'}\lambda_s\rangle$ and -1 otherwise, much similar to Bob's response function in the Werner model (Chapter \ref{Werner}).}

\begin{eqnarray*}
  \E(AB)&=&\int_{\mathbb{S}^2}{\pi(\lambda_s|x)A(x,\lambda_s)B(y,\lambda_s)d\lambda_s}\\
  &=&\frac{-1}{2\pi}\int_{\mathbb{S}^2}{|x \cdot \lambda_s|\sign( x \cdot \lambda_s)\sign( y \cdot \lambda_s)d\lambda_s}\\
  &=& \frac{-1}{2\pi}\int_{\mathbb{S}^2}{(x \cdot \lambda_s) \sign( y \cdot \lambda_s)d\lambda_s}\\
\end{eqnarray*}

To solve this last integral, we choose a reference frame where $\ket{y}=(0,0,1),\ \ket{x}=(\sin \alpha, 0, \cos \alpha)$ and $\ket{\lambda_s}=(\sqrt{1-\eta^2}\cos \phi, \sqrt{1-\eta^2}\sin \phi, \eta)$, which yields
\begin{eqnarray*}
% \nonumber to remove numbering (before each equation)
  \frac{-1}{2\pi}\int_{\mathbb{S}^2}{(x \cdot \lambda_s) \sign( y \cdot \lambda_s)d\lambda_s} &=&
  \frac{-1}{2\pi}\int_{-1}^{1}{\int_0^{2\pi}{[\sin (\alpha) \cos (\phi) \sqrt{1-\eta^2} +\cos (\alpha) \eta]\sign(\eta) d\phi d\eta}} \\
   &=& \frac{-1}{2\pi}\int_{-1}^{1}{\left[\sin\alpha\sqrt{1-\eta^2}\sign(\eta) \int_0^{2\pi}{\cos\phi d\phi}\right.} \\
   && + \left. \int_0^{2\pi}{\cos (\alpha) \eta \sign(\eta) d\phi}\right]d\eta \\
   &=& \frac{-1}{2\pi}\int_{-1}^1{2\pi \cos (\alpha) \eta \sign(\eta) d\eta}\\
   &=& -\left[\int_{-1}^0{\cos (\alpha) \eta d\eta}-\int_0^{-1}{\cos (\alpha)\eta d\eta}\right]\\
\end{eqnarray*}
\begin{eqnarray*}
   &=& -\cos \alpha \left(\left[\frac{\eta^2}{2}\right]_{-1}^0-\left[\frac{\eta^2}{2}\right]_0^{1}\right)\\
   &=& -\cos \alpha\\
   &=& -x \cdot y.
\end{eqnarray*}

Since $\E(A)=0$ is almost exactly the content of Lemma \ref{lema1}, to finish the proof we only have to establish $\E(B)=0$. Let's consider the half-spheres $\mathbb{S}_+, \mathbb{S}_-$ with respect to $x$ and $\lambda_-\in\mathbb{S}_-, \lambda_+\in\mathbb{S}_+$. We have
\begin{eqnarray*}
% \nonumber to remove numbering (before each equation)
  \E(B) &=& \int_{\Lambda}{B(y,\lambda_s)\pi(\lambda_s|x)d\lambda_s} \\
   &=& \frac{1}{2\pi}\int_{\Lambda}{|x\cdot\lambda_s|\sign(y\cdot\lambda_s)d\lambda_s} \\
   &=& \frac{1}{2\pi}\left[\int_{\mathbb{S}_+}{(x\cdot\lambda_+)\sign(y\cdot\lambda_+)d\lambda_+} -\int_{\mathbb{S}_-}{(x\cdot\lambda_-)\sign(y\cdot\lambda_-)d\lambda_-}\right].
\end{eqnarray*}
Substituting $\lambda_-=-\lambda_+$, we obtain
\begin{equation*}
\E(B)=\frac{1}{2\pi}\left[\int_{\mathbb{S}_+}{(x\cdot\lambda_+)\sign(y\cdot\lambda_+)d\lambda_+} -\int_{\mathbb{S}_+}{(x\cdot\lambda_+)\sign(y\cdot\lambda_+)d\lambda_+}\right]=0.
\end{equation*}

\end{proof}

The above theorem shows that with a biased distribution of the hidden variables, Alice and Bob are able to do simulate the correlations of the EPR experiment locally. Nevertheless, we can consider the situation where we start with a uniform distribution of the $\lambda$, just like in Werner's model, and then generate the bias by a sampling of the hidden variables performed by Alice.

The process to carry on this task can be divided in two steps. First Alice samples from the uniformly distributed random variables $\lambda$ the biased variable $\lambda_s$ using her knowledge of $a$, with the bias given by Eq. (\ref{bias}). In other words, Alice tests the hidden variable provided by the source (initially unbiased) and accepts it or discards it, in such a way to produce the probability distribution (\ref{bias}). The second step is the communication complexity problem of communicate Bob about which of the hidden variables was accepted. It is to accomplish this second step that we will need additional resources to those allowed by a local model.

\section{The sampling}

To performing the sampling, Alice uses the ``choice method"\footnote{In \cite{DLR}, they initially present the ``rejection method", where Alice rejects $\lambda_k$ if $| a \cdot \lambda_k|<u_k$, where $u_k\sim \text{Unif}[0,1]$. The big drawback is that Alice could reject an arbitrarily large number of samples before accepting one, while in the choice method only one round is enough to perform the sampling. The equivalence of the bias induced by both methods can be seen by noticing that $|a\cdot\lambda|\sim \text{Unif}[0,1]$}, performed in the following way. Once Alice receives $\lambda_0$ and $\lambda_1$ from the random source, she calculates $| a \cdot \lambda_0|$ and $| a \cdot \lambda_1|$. The variable which provides the higher value is accepted as $\lambda_s$.

\begin{proto}(Choice method)

1. Alice picks $\lambda_0 \sim \text{Unif}({\mathbb{S}^2})$

2. Alice picks $\lambda_1 \sim \text{Unif}({\mathbb{S}^2})$

3. If $|x\cdot \lambda_0|>|x\cdot \lambda_1|$, then she accepts $\lambda_0$ and sets $\lambda_s=\lambda_0$; otherwise she accepts $\lambda_1$ and sets $\lambda_s=\lambda_1$.
\end{proto}

\begin{teo}\label{choice}If Alice performs the choice method protocol, then $\lambda_s \sim |x\cdot \lambda_s|/2\pi$ and $\p(\lambda_s=\lambda_1)=\p(\lambda_s=\lambda_2)=1/2$.
\end{teo}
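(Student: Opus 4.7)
My plan is to prove the two claims separately, starting with the easier symmetry statement, and then establishing the density of $\lambda_s$ by a direct computation that factors through a classical fact about uniform sampling on $\mathbb{S}^2$.

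\textbf{Step 1 (symmetry).} Since $\lambda_0$ and $\lambda_1$ are i.i.d.\ uniform on $\mathbb{S}^2$, the random variable $(\lambda_0,\lambda_1)$ is equal in distribution to $(\lambda_1,\lambda_0)$. The event $\{\lambda_s=\lambda_0\}=\{|x\cdot\lambda_0|>|x\cdot\lambda_1|\}$ is mapped to $\{\lambda_s=\lambda_1\}$ under this swap, and the boundary event $\{|x\cdot\lambda_0|=|x\cdot\lambda_1|\}$ has measure zero. Hence $\p(\lambda_s=\lambda_0)=\p(\lambda_s=\lambda_1)=1/2$ (I read the ``$\lambda_2$'' in the statement as a typo for $\lambda_1$).

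\textbf{Step 2 (auxiliary lemma).} I would establish that if $\lambda$ is uniform on $\mathbb{S}^2$ and $x\in\mathbb{S}^2$ is fixed, then $|x\cdot\lambda|$ is uniformly distributed on $[0,1]$. Choosing a reference frame with $x=(0,0,1)$ and parametrising $\lambda=(\sin\theta\cos\phi,\sin\theta\sin\phi,\cos\theta)$, the surface element is $\sin\theta\,d\theta\,d\phi$ and $x\cdot\lambda=\cos\theta$. A direct computation of the CDF gives
\begin{equation*}
\p(|x\cdot\lambda|\leq t)=\frac{1}{4\pi}\int_0^{2\pi}\!\!\int_{\{\theta:|\cos\theta|\leq t\}}\!\sin\theta\,d\theta\,d\phi=t,\qquad t\in[0,1],
\end{equation*}
so $|x\cdot\lambda|\sim\mathrm{Unif}[0,1]$, as claimed.

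\textbf{Step 3 (density of $\lambda_s$).} For any measurable $E\subseteq\mathbb{S}^2$, decompose
\begin{equation*}
\p(\lambda_s\in E)=\p(\lambda_s=\lambda_0,\lambda_0\in E)+\p(\lambda_s=\lambda_1,\lambda_1\in E),
\end{equation*}
and note that by the symmetry used in Step~1 both summands are equal. Conditioning on $\lambda_0$ and using Step~2 with $t=|x\cdot\lambda_0|$,
\begin{equation*}
\p(|x\cdot\lambda_1|<|x\cdot\lambda_0|\mid\lambda_0)=|x\cdot\lambda_0|,
\end{equation*}
so integrating against the uniform density $1/(4\pi)$ on $\mathbb{S}^2$ gives
\begin{equation*}
\p(\lambda_s=\lambda_0,\lambda_0\in E)=\int_E|x\cdot\lambda_0|\,\frac{d\lambda_0}{4\pi}.
\end{equation*}
Doubling yields $\p(\lambda_s\in E)=\int_E\frac{|x\cdot\lambda|}{2\pi}\,d\lambda$, which is the announced density (it integrates to $1$ precisely by Lemma~\ref{lema2}).

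\textbf{Expected difficulty.} Nothing here is truly hard; the only non-trivial input is the classical ``Archimedes'' fact in Step~2, and after that Step~3 is just Fubini together with the symmetry argument. The one place to be careful is the bookkeeping: one must make sure that the factor of $2$ coming from the symmetric decomposition combines correctly with the factor $1/(4\pi)$ from the uniform measure to produce $1/(2\pi)$, matching the normalisation guaranteed by Lemma~\ref{lema2}.
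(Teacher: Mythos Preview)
Your proof is correct and follows essentially the same approach as the paper: the key input in both is the ``Archimedes'' fact that $|x\cdot\lambda|\sim\mathrm{Unif}[0,1]$ when $\lambda$ is uniform on $\mathbb{S}^2$, from which the acceptance probability $|x\cdot\lambda_i|$ and hence the biased density follow. Your version is in fact more careful than the paper's terse argument---in particular, your Step~3 makes the factor-of-$2$ bookkeeping explicit (uniform density $1/(4\pi)$ times the symmetry doubling gives $1/(2\pi)$), whereas the paper simply asserts the end result.
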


\begin{proof}A uniform distribution of $\lambda$ in $\mathbb{S}^2$ implies a uniform distribution of $|x\cdot \lambda|$ in the interval $[0,1]$, therefore each $\lambda_i$ is accepted with probability $|x\cdot \lambda_i|$. Since the probability density function of the uniform distribution on $\mathbb{S}^2$ is $\frac{1}{2\pi}$, after the protocol we have $\lambda_s \sim |x\cdot \lambda_s|/2\pi$.
\end{proof}

\section{The communication}

Once Alice uses the choice method, Theorem \ref{choice} guarantees that the correct probability distribution (\ref{bias}) occurs. In order to apply Theorem \ref{teosimuEPR} to conclude that the EPR experiment correlations are reproduced, all we need now is to make sure that Bob also knows which is the sampled hidden variable $\lambda_s$, so that he can evaluate the appropriate response function $B(y, \lambda_s)=\sign(y\cdot \lambda_s)$.

Considering only what a standard local hidden variables model allows - that is, shared randomness - the Bell inequality violation guarantees that there is no way to tell Bob about the result of the sampling, since Alice does everything locally. However, in the case where we allow classical communication between both parts, the problem resumes to be trivial. Actually, more: the trivial answer is even optimal. Once the sampling is done, all that Alice has to do is to send one (classical) bit $s\in\{0,1\}$ to Bob saying whether $\lambda_s=\lambda_0$ or $\lambda_s=\lambda_1$.

The problem of ally classical communication to shared randomness with the objective of reproduce quantum correlations has been studied since 1992, when Tim Maudlin showed that, in average, a finite amount of bits would suffice \cite{Mau}. In 2003, Ben Toner and Dave Bacon got the most effective optimization of it, presenting he first local model that, together with one bit, simulate successfully the correlations of the singlet \cite{TB}.

\section{A local model for the Werner state}

With this framework in hands, only a simple observation is enough for us to show that we have a local model (without any further resources) for the $\mathbb{C}^2\otimes \mathbb{C}^2$ Werner state
\begin{equation}\label{werner2x2}
  W_{2\times2}=\alpha \ket{\Psi_-}\bra{\Psi_-}+(1-\alpha)\frac{I_{4\times4}}{4}
\end{equation}
with $\alpha=1/2$.

By linearity, we can see that this state generates the joint expectation value $\E(AB)=-(x\cdot y)/2$.

Giving up the possibility to use classical communication, we come back to the weaker assumption that Alice and Bob have only shared randomness and, therefore, to the situation where Bob is not aware of Alice's sampling, so he cannot distinguish $\lambda_s$ from $\lambda_0$ (or $\lambda_1$). The point is that if Bob always evaluate his respective response function assuming $\lambda_s=\lambda_0$, since $\p(\lambda_s=\lambda_0)= \p(\lambda_s=\lambda_1)=1/2$ on average, he will be right half of the time and thus the correlations will match those of the singlet. In the other half, the independent response functions will generate the probabilities achieved by maximal random noise, that is, by the state $I_{4\times4}/4$.

Thus we are proposing to use the following protocol.

\begin{proto}\label{protodegorre} (Local model for the state $W_{2\times2}$)

1. Alice and Bob share a pair of variables $\lambda_0,\lambda_1\sim \text{Unif}(\mathbb{S}^2)$.

2. Alice performs the choice method applied to $\lambda_0, \lambda_1$ and outputs $A=-\sign(x\cdot \lambda_s)$.

3. Bob outputs $B=\sign(y\cdot \lambda_0)$.

\end{proto}

We now proceed to show that the local model indeed reproduces the correlations of state $W_{2\times2}$.

\begin{lema}\label{schwarz} Alice's response function (\ref{degorre response}) can be rewritten as
\begin{equation*}
A(x,\lambda_s)=-\sign(x\cdot (\lambda_0+\lambda_1)).
\end{equation*}
\end{lema}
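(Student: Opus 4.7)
The strategy is to unfold the definition of $\lambda_s$ coming from the choice method and reduce the claim to a one-line fact about the sign of a sum of two real numbers. Recall from the choice method that $\lambda_s = \lambda_0$ if $|x\cdot\lambda_0| > |x\cdot\lambda_1|$ and $\lambda_s = \lambda_1$ otherwise. Setting $u := x\cdot\lambda_0$ and $v := x\cdot\lambda_1$, I want to show that $\sign(x\cdot\lambda_s) = \sign(u+v)$, since then multiplying by $-1$ yields the claimed identity.

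The key elementary observation is the following: for any real numbers $u, v$ with $|u| \geq |v|$, one has $\sign(u+v) = \sign(u)$. Indeed, if $u$ and $v$ have the same sign (or one is zero) this is immediate; and if they have opposite signs, say $u > 0 > v$, then $|u| \geq |v|$ means $u \geq -v$, hence $u+v \geq 0$, and symmetrically in the other case. Thus the sign of a sum of two reals is dictated by whichever summand has the larger absolute value.

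Applying this to $u = x\cdot\lambda_0$ and $v = x\cdot\lambda_1$: in the case $|x\cdot\lambda_0| > |x\cdot\lambda_1|$ the protocol sets $\lambda_s = \lambda_0$, and the observation gives $\sign(x\cdot(\lambda_0+\lambda_1)) = \sign(x\cdot\lambda_0) = \sign(x\cdot\lambda_s)$; in the opposite case, one swaps the roles of $\lambda_0$ and $\lambda_1$ and reaches the same conclusion. The boundary set $\{|x\cdot\lambda_0| = |x\cdot\lambda_1|\}$ and the set where $x\cdot(\lambda_0+\lambda_1) = 0$ both have measure zero in $\mathbb{S}^2 \times \mathbb{S}^2$ under the uniform distribution and therefore do not affect the response function as used inside the expectation integrals. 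I do not anticipate any real obstacle here: the argument is a purely combinatorial check of two cases, and the only thing to be careful about is the measure-zero set of ties, which is harmless for the subsequent integration arguments.
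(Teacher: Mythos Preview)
Your argument is correct and follows essentially the same route as the paper: both reduce the claim to the observation that the sign of $x\cdot(\lambda_0+\lambda_1)$ is governed by whichever of $x\cdot\lambda_0$, $x\cdot\lambda_1$ has the larger absolute value, which is precisely how the choice method selects $\lambda_s$. The paper phrases this via the characteristic function of the half-sphere $S_x$ and simply asserts $\chi_{S_x}(\lambda_0+\lambda_1)=\chi_{S_x}(\lambda_s)$, whereas you spell out the elementary real-number fact explicitly and handle the measure-zero tie set; the content is the same.
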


\begin{proof} Notice that the sign function allows us to write
\begin{equation*}
-\sign(x\cdot \lambda)=(-1)^{\chi_{S_x}(\lambda)},
\end{equation*}
where $\chi_S$ is the characteristic function of the set $S$ (i.e., $\chi_S(s)$ equals 1 if $s\in S$ and 0 otherwise) and $S_x$ denotes the semi-sphere $\{\lambda \in \mathbb{S}^2; x\cdot \lambda\geq0\}$.

From the definition of $\lambda_s$ in the choice method protocol, we conclude that
\begin{equation*}
\chi_{S_x}(\lambda_0+\lambda_1)=\chi_{S_x}(\lambda_s),
\end{equation*}
which means that $-\sign(x\cdot \lambda_s)=-\sign(x\cdot (\lambda_0+\lambda_1))$.
\end{proof}

\begin{teo} There exists a local model for the Werner state
\begin{equation*}
W_{2\times2}=\frac12 \ket{\Psi_-}\bra{\Psi_-}+\frac12\frac{I_{4\times4}}{4}.
\end{equation*}
\end{teo}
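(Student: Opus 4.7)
My plan is to verify that Protocol \ref{protodegorre} produces exactly the statistics of $W_{2\times 2}$ by computing the three expectations $\E(A)$, $\E(B)$, $\E(AB)$ it generates and matching them against those predicted by $W_{2\times 2}$. By linearity of the trace (applied to the mixture defining $W_{2\times 2}$) together with the known singlet correlations, the target expectations for measurements along unit vectors $\ket x, \ket y \in \mathbb{R}^3$ are $\E(AB) = -\tfrac{1}{2}(x\cdot y)$ and $\E(A) = \E(B) = 0$. Since any two-qubit state is fixed by these three numbers as $\ket x, \ket y$ range over all directions, matching them suffices.

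The marginal $\E(A) = 0$ is immediate from Theorem \ref{choice}: the density $\pi(\lambda_s|x) = |x\cdot \lambda_s|/2\pi$ is invariant under $\lambda_s \mapsto -\lambda_s$, while $A(x,\lambda_s) = -\sign(x\cdot\lambda_s)$ flips sign, so the integral cancels. For $\E(B)$, Bob outputs $\sign(y\cdot\lambda_0)$ with $\lambda_0$ uniform on $\mathbb{S}^2$, and the same $\lambda_0 \mapsto -\lambda_0$ antisymmetry (or a direct argument like Lemma \ref{lema1}) gives zero.

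The heart of the proof is the joint expectation. I would split $\E(AB) = -\E[\sign(x\cdot\lambda_s)\sign(y\cdot\lambda_0)]$ according to which of $\lambda_0, \lambda_1$ is accepted:
\begin{equation*}
\E(AB) = -\E\!\left[\sign(x\cdot\lambda_0)\sign(y\cdot\lambda_0)\mathbf{1}_{|x\cdot\lambda_0|>|x\cdot\lambda_1|}\right] - \E\!\left[\sign(x\cdot\lambda_1)\sign(y\cdot\lambda_0)\mathbf{1}_{|x\cdot\lambda_1|>|x\cdot\lambda_0|}\right].
\end{equation*}
The cross term (Alice uses $\lambda_1$, Bob uses $\lambda_0$) vanishes by the substitution $\lambda_1 \mapsto -\lambda_1$: the indicator is preserved (it depends only on $|x\cdot\lambda_1|$) while $\sign(x\cdot\lambda_1)$ flips, forcing the integral to equal its own negative. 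For the remaining term, integrating over $\lambda_1$ first with $\lambda_0$ fixed yields the factor $|x\cdot\lambda_0|$, because $|x\cdot\lambda_1|$ is uniformly distributed on $[0,1]$ when $\lambda_1 \sim \mathrm{Unif}(\mathbb{S}^2)$. What remains is
\begin{equation*}
-\int_{\mathbb{S}^2}(x\cdot\lambda_0)\sign(y\cdot\lambda_0)\,\frac{d\lambda_0}{4\pi},
\end{equation*}
which is exactly half of the integral computed inside the proof of Theorem \ref{teosimuEPR} and hence equals $-\tfrac{1}{2}(x\cdot y)$.

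The main obstacle is recognizing that the cross term contributes nothing: without this observation one is stuck with a two-fold integral that mixes $\lambda_0$ and $\lambda_1$ nontrivially. Once the symmetry $\lambda_1 \mapsto -\lambda_1$ is exploited, everything else reduces to a calculation already performed, and the factor $\tfrac{1}{2}$ emerges naturally because Bob's variable coincides with Alice's sampled variable with probability $\tfrac{1}{2}$ (Theorem \ref{choice}), which is precisely the weight $\alpha$ in the definition of $W_{2\times 2}$.
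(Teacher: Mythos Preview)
Your proof is correct and takes a genuinely different route from the paper's. The paper invokes Lemma~\ref{schwarz} to rewrite Alice's output in the symmetric form $A=-\sign\bigl(x\cdot(\lambda_0+\lambda_1)\bigr)$ and then evaluates the resulting double integral directly, using Lemma~\ref{lema1} to discard the $\lambda_1$--contribution. You instead condition on which of $\lambda_0,\lambda_1$ is accepted, kill the cross term by the parity $\lambda_1\mapsto-\lambda_1$, and reduce the surviving term to the single-variable integral already computed inside Theorem~\ref{teosimuEPR}. Your decomposition makes the origin of the factor $\tfrac{1}{2}$ completely transparent: it is exactly the probability that Bob's fixed guess $\lambda_0$ coincides with Alice's sampled $\lambda_s$, matching the intuitive picture sketched just before Protocol~\ref{protodegorre}. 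The paper's rewriting via Lemma~\ref{schwarz}, on the other hand, yields a closed expression for $A$ that makes no reference to the sampling event, which is conceptually neat but hides that interpretation inside the integral. Both routes ultimately land on the same remaining integral $-\tfrac{1}{4\pi}\int_{\mathbb{S}^2}(x\cdot\lambda_0)\sign(y\cdot\lambda_0)\,d\lambda_0$.
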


\begin{proof} As already stated, the local model consists of Alice performing the choice method and Bob always assuming that $\lambda_s=\lambda_0$. Using Lemma \ref{schwarz}, Lemma \ref{lema1} and adjusting the differential, we obtain
\begin{eqnarray*}
% \nonumber to remove numbering (before each equation)
  \E(AB) &=& -\frac{1}{8\pi^2}\int_{\mathbb{S}^2}{d\lambda_0\int_{\mathbb{S}^2} {d\lambda_1(x\cdot(\lambda_0+\lambda_1))\sign(y\cdot\lambda_0)}} \\
   &=& -\frac{1}{8\pi^2}2\pi (x\cdot y) \int_{\mathbb{S}^2}{d\lambda_1} \\
   &=& -\frac{x \cdot y}{2}.
\end{eqnarray*}

As we also have $\E(A)=\E(B)=0$, the model reproduces the correlations exhibited by $W_{2\times2}$.
\end{proof}

\chapter{Genuine hidden nonlocality}\label{genuine}

So far, this is the road we have been following: Werner constructed a local model for projective measurements for the state $W_{local}$; Barrett constructed a local model for POVMs for the state $\hat{W}$, similar to $W_{local}$ but not the same; and Popescu showed that states $W_{local}$ of dimension $d\geq5$ have hidden nonlocality revealed by a proper sequence of measurements. Since it is not known if $W_{local}$ is local for POVMs or not, these facts are not enough to assure that a sequence of measurements is really necessary to activate its nonlocality. If we adopt the natural hierarchy where a single POVM has a lower ``cost" than a couple of projective measurements, then is reasonable to say that Popescu's activation protocol still left room for optimization. Are two measurements indeed necessary?

On the other hand, it is not known if Barrett's state $\hat{W}$ does not also present hidden nonlocality; so far no-one was able to find a sequence of measurements to display it, but this does not mean that it is impossible to exist one. The natural question raised here is: is there an entangled state, the nonlocality of which can be observed only if sequential measurements are used?

In the paper ``\emph{Genuine hidden quantum nonlocality}" \cite{HQBB}, which is the subject of this chapter, the term ``Genuine" holds in the above sense. Brunner and co-workers presented a state which counts with a local model for POVMs, which nevertheless can be shown to violate the CHSH inequality after a sequence of judiciously chosen local measurements are performed. It was the first and, until present date, the only example of a genuine necessity of more than one measurement.

\section{A local model for dichotomic projective measurements}

We start by the construction of a local model for dichotomic projective measurements for a state of $\mathbb{C}^2\otimes\mathbb{C}^2$. More specifically, we consider the class of states
\begin{eqnarray}\label{ikki}
  \rho_G(q)&=&q\ket{\Psi_-}\bra{\Psi_-}+(1-q)\ket0\bra0\otimes \frac{I_{2\times2}}{2}\\
  \nonumber &=& q\ket{\Psi_-}\bra{\Psi_-}+\frac{1-q}{2}(\ket{00}\bra{00}+\ket{01}\bra{01}),
\end{eqnarray}
recalling that $\ket{\Psi_-}=(\ket{01}-\ket{10})/\sqrt2$ is the singlet state. We will see that state (\ref{ikki}) admits a local model mentioned above when $q\leq1/2$. Nevertheless, we can use the flip operator as witness, as we did with the Werner states in Theorem \ref{ob-ladi ob-lada}, to prove that the state is entangled for\footnote{Actually, the state is entangled for all $q>0$, as can be seen using the Peres-Horodecki Criterion \cite{NC}.} $q>1/3$:
\begin{eqnarray*}
% \nonumber to remove numbering (before each equation)
  \tr(V\rho_G) &=& q\tr(V\ket{\Psi_-}\bra{\Psi_-})+\frac{1-q}{2}(\tr(V\ket{00}\bra{00})+\tr(V\ket{01}\bra{01})) \\
   &=& \frac q2 (\tr(V\ketbraa{01})+\tr(V\ketbraa{10})-\tr(V\ketbra{10}{01})-\tr(V\ketbra{01}{10})+ \frac{1-q}{2} \\
   &=& -q+\frac{1-q}{2},
\end{eqnarray*}
which is negative if and only if $q>1/3$.

As in Chapter \ref{degorre}, the statistics we wish to simulate for $\rho_G$ are $\E(A), \E(B)$ and $\E(AB)$. Notice that, by linearity,
\begin{equation*}
\E_{\rho_G}(A)=q\E_{\ket{\Psi_-}\bra{\Psi_-}}(A)+(1-q)E_{\ket0\bra0\otimes\frac{I}{2}}(A)
\end{equation*}
and analogously for $\E(B)$ and $\E(AB)$, where the indices say which state should be considered for each expectation. In Chapter \ref{degorre}, we saw that $\E_{\ket{\Psi_-}\bra{\Psi_-}}(A)=\E_{\ket{\Psi_-}\bra{\Psi_-}}(B)=0$, thus $\E(A)=(1-q)\E_{\ket0\bra0}(A)$ and $\E(B)=(1-q)\E_{\frac{I}{2}}(B)$. Therefore, we have
\begin{eqnarray*}
% \nonumber to remove numbering (before each equation)
  \E(A) &=& (1-q)\left[\frac{1+\bra0x\cdot\sigma\ket0}{2}-\frac{1-\bra0x\cdot\sigma\ket0}{2}\right] \\
   &=& (1-q)\bra0x\cdot\sigma\ket0 \\
   &=& (1-q)x_z
\end{eqnarray*}
(where $\ket x=(x_x, x_y, x_z)$) and
\begin{eqnarray*}
% \nonumber to remove numbering (before each equation)
  \E(B) &=& (1-q)\left[\tr\left(\frac{I}{2}\frac{I+y\cdot\sigma}{2}\right)- \tr\left(\frac{I}{2}\frac{I-y\cdot\sigma}{2}\right)\right] \\
   &=& \frac{(1-q)}{2}\tr(y\cdot\sigma) \\
   &=& 0.
\end{eqnarray*}

For $\E(AB)$, since $\ket0\bra0\otimes\frac{I}{2}$ is a product state (thus the outcomes of any local measurements are uncorrelated), we have
\begin{eqnarray*}
\E(AB)&=&q\E_{\ket{\Psi_-}\bra{\Psi_-}}(AB)+(1-q)\E_{\ket0\bra0\otimes\frac{I}{2}}(AB)\\
&=& q\E_{\ket{\Psi_-}\bra{\Psi_-}}(AB)+(1-q)\E_{\ket0\bra0}(A)\E_{\frac{I}{2}}(B)\\
&=& q\E_{\ketbraa{\Psi_-}}(AB)\\
&=& -q(x\cdot y),
\end{eqnarray*}
where the last equality was shown in Chapter \ref{degorre}.

The performance of a dichotomic projective measurement is precisely what is done in the EPR experiment (see Section \ref{epr}), the only difference is the state being shared: now we are using the modified singlet state $\rho_G(q)$, instead of the singlet $\rho_G(1)$ itself . Alice (and Bob) will receive as input a vector $\ket{x}$ (and $\ket{y}$), and should simulate the statistics of measuring the qubit observables $x\cdot \sigma$ and $y\cdot \sigma$, with possible outcomes $A,B\in\{-1,1\}$. The following protocol is strongly inspired in Protocol \ref{protodegorre}.

\begin{proto}[Simulation of $\rho_G(1/2)$]\label{proto geneva}
\begin{enumerate}
  \item Alice and Bob share a tridimensional unit vector $\ket{\lambda}$, uniformly distributed on the sphere.
  \item Upon receiving $x$, Alice tests $\lambda$. With probability $|x\cdot\lambda|$, she accepts $\lambda$ and outputs $A=-\sign(x\cdot\lambda)$; otherwise, she outputs $A=\pm1$ with probability $(1\pm\bra0x\cdot\sigma\ket0)/2$.
  \item Bob outputs $B=\sign(y\cdot\lambda)$.
\end{enumerate}
\end{proto}

\begin{teo} There exists a local model that simulates the correlations exhibited by state $\rho_G(q)$ upon the measurement of a dichotomic projective measurement, for any $q\in[0,1/2]$.
\end{teo}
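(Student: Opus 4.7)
The plan is to first verify that Protocol \ref{proto geneva} exactly reproduces the three expectation values of $\rho_G(1/2)$ computed in the paragraphs preceding the theorem, and then to extend the construction to every $q\in[0,1/2]$ via a convex-combination argument.

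For the case $q=1/2$, the targets are $E(A)=x_z/2$, $E(B)=0$ and $E(AB)=-(x\cdot y)/2$. I would split Alice's response into its two branches---the ``accept'' branch (probability $|x\cdot\lambda|$, output $-\sign(x\cdot\lambda)$) and the ``reject'' branch (probability $1-|x\cdot\lambda|$, output $\pm 1$ with probabilities $(1\pm x_z)/2$)---and integrate against the uniform density $d\lambda/(4\pi)$ on $\mathbb{S}^2$. The accept branch of $E(A)$ contributes $-\int(x\cdot\lambda)\,d\lambda/(4\pi)=0$ by Lemma \ref{lema1}, while the reject branch yields $x_z\bigl(1-\int|x\cdot\lambda|\,d\lambda/(4\pi)\bigr)=x_z/2$ by Lemma \ref{lema2}; $E(B)=0$ is immediate from antipodal symmetry. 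For $E(AB)$, the accept branch is the EPR-type integral of Theorem \ref{teosimuEPR} rescaled by the uniform-versus-biased normalization and evaluates to $-(x\cdot y)/2$, while the reject branch reduces to $x_z\int(1-|x\cdot\lambda|)\sign(y\cdot\lambda)\,d\lambda/(4\pi)$, which vanishes because both $1$ and $|x\cdot\lambda|$ are even under $\lambda\mapsto-\lambda$ while $\sign(y\cdot\lambda)$ is odd.

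To push from $q=1/2$ to arbitrary $q\in[0,1/2]$, I would invoke the decomposition
\begin{equation*}
\rho_G(q)=2q\,\rho_G(1/2)+(1-2q)\,\ket{0}\bra{0}\otimes\frac{I_{2\times 2}}{2},
\end{equation*}
which is a valid convex combination precisely when $q\leq 1/2$. The second summand is a product state, hence trivially local (Alice outputs $\pm 1$ with probabilities $(1\pm x_z)/2$ and Bob outputs a fair independent coin), and convex combinations of local models are local: a single extra bit of shared randomness selects which sub-protocol Alice and Bob execute. Equivalently, one can absorb the whole mixture directly into Protocol \ref{proto geneva} by replacing Alice's acceptance probability $|x\cdot\lambda|$ with $2q|x\cdot\lambda|$; the constraint $q\leq 1/2$ is precisely what keeps this quantity a valid probability, and the same computations above go through with an overall factor of $2q$ in front of the singlet-like contribution.

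The main technical delicacy is the reject-branch bookkeeping for $E(AB)$: the piece $x_z\int(1-|x\cdot\lambda|)\sign(y\cdot\lambda)\,d\lambda$ must cancel exactly, and this is guaranteed only by antipodal invariance of the uniform measure on $\mathbb{S}^2$. The same symmetry also transparently explains the failure of the construction beyond $q=1/2$: the acceptance probability would have to exceed one, signaling that the present strategy cannot reach stronger singlet mixtures and that richer resources (or the sequence of measurements exploited later in the chapter) would be required in that regime.
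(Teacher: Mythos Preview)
Your proposal is correct and follows essentially the same route as the paper: verify that Protocol~\ref{proto geneva} reproduces the statistics of $\rho_G(1/2)$, then reach all $q\in[0,1/2]$ by the convex decomposition $\rho_G(q)=2q\,\rho_G(1/2)+(1-2q)\,\ketbraa{0}\otimes I/2$ (the paper writes this as Protocol~\ref{hyoga} with $p=2q$). The only cosmetic difference is that for $q=1/2$ you compute the three expectations by direct branch-by-branch integration against the uniform measure, whereas the paper argues by conditioning on acceptance (invoking Theorems~\ref{choice} and~\ref{teosimuEPR}) and on rejection; your variant of absorbing the mixing parameter into the acceptance probability $2q|x\cdot\lambda|$ is a nice equivalent phrasing that the paper does not mention.
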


\begin{proof}Let $x\cdot\sigma, y\cdot\sigma$ be the observables measured by Alice and Bob, respectively. After Protocol \ref{proto geneva} is performed, as shown in Theorem \ref{choice}, if Alice accepts $\lambda$, which occurs on average with probability $1/2$ (independently of $x$), $\lambda$ is distributed according to the density $\pi(\lambda)=|x\cdot\lambda|/2\pi$. In this case, since Bob outputs $B=\sign(y\cdot\lambda)$, Theorem \ref{teosimuEPR} says that the correlation between Alice's and Bob's outcomes is
\begin{eqnarray*}
% \nonumber to remove numbering (before each equation)
  \E(AB) &=& \frac{-1}{2\pi}\int_{\mathbb{S}_2}{|x \cdot \lambda|\sign(x \cdot \lambda)\sign( y \cdot \lambda)}\\
  &=& -x \cdot y.
\end{eqnarray*}
As the marginal expectations are uniform, i.e., $\E(A)=\E(B)=0$, we recover the singlet correlations.

If Alice rejects $\lambda$, she simulates the statistics of state $\ket0$, while Bob's outcome is uncorrelated and uniformly distributed. Hence the model reproduces exactly the statistics of state (\ref{ikki}) for $q=1/2$.

For $0<q<1/2$, it is sufficient to observe that for $q=0$ the state is clearly local and that the set of all local states is convex. But since we haven't discussed such topics in this text (we recommend for the interested reader \cite{BCPSW}, or \cite{Mur} for a gentler introduction), we will explicitly show such convexity for this class of states, presenting a simple protocol of simulation.

Fix $p\in[0,1]$. The following protocol simulates the state $\breve{\rho}(p)=p\rho_{G}(1/2)+ (1-p)\ketbra{0}{0}\otimes\frac{I}{2}$. (Notice that $\ketbra{0}{0}\otimes\frac{I}{2}=\rho_G(0)$.)

\begin{proto}[Simulation of $\breve{\rho}(p)$]\label{hyoga}

\begin{enumerate}
  \item Alice and Bob share a real number $r$ uniformly distributed on the interval $[0,1]$ and a tridimensional unit vector $\ket{\lambda}$, uniformly distributed on the sphere.
  \item With probability $x$, Alice accepts $r$ and executes Protocol \ref{proto geneva}; otherwise she outputs $A=\pm1$ with probability $(1\pm\bra0x\cdot\sigma\ket0)/2$.
  \item Bob outputs $B=\sign(y\cdot\lambda)$.
\end{enumerate}
\end{proto}

Now, one just have to check that
\begin{eqnarray*}
% \nonumber to remove numbering (before each equation)
  \breve{\rho}(p) &=& p\rho_{G}(1/2)+ (1-p)\ketbra{0}{0}\frac{I}{2} \\
  &=& \frac{p}{2}\ketbraa{\psi_-}+\left(\frac{p}{2}+(1-p)\right)\ketbra{0}{0}\frac{I}{2} \\
  &=& \frac{p}{2}\ketbraa{\psi_-}+(1-\frac{p}{2})\ketbra{0}{0}\frac{I}{2} \\
  &=& \rho_G(p/2).
\end{eqnarray*}
Therefore, choosing the proper $p$ on $[0,1]$, with Protocol \ref{hyoga} we can simulate $\rho_G(q)$ for any $q\in[0,1/2]$.
\end{proof}

\section{Revealing nonlocality}

Similar to what we have done in Chapter \ref{hidden}, we are now going to show that after local filtering, the state $\rho_G$ violates the CHSH inequality (see Section \ref{belldesi}).

More specifically, we are going to apply filters of the form
\begin{equation}\label{ayoros}
  F_A=\epsilon\ket0\bra0+\ket1\bra1, \ F_B=\delta\ket0\bra0+\ket1\bra1
\end{equation}
with $\delta=\epsilon/\sqrt q$ to $\rho_G$. In another words, we will perform the measurements $\{F_A, I-F_A\}$ in Alice's side and $\{F_B, I-F_B\}$ in Bob's, and discard the protocol in case that the outcome is referent to operator $I-F_A$ or $I-F_B$. This means that, after filtering, the resulting (unnormalized) state is
\begin{eqnarray*}
\tilde{\rho}_G&=&F_A\otimes F_B \rho_G F_A^{\dagger}\otimes F_B^{\dagger}\\
&=& qF_A\otimes F_B \ket{\Psi_-}\bra{\Psi_-} F_A\otimes F_B + (1-q)(F_A\ket0\bra0F_A)\otimes (F_B\frac{I}{2}F_B)\\
&=& \frac{q}{2}\left[\epsilon^2\ket{01}\bra{01} + \frac{\epsilon^2}{q}\ket{10}\bra{10}-\frac{\epsilon^2}{\sqrt q}(\ket{01}\bra{10}+\ket{10}\bra{01})\right]\\
& & + \frac{1-q}{2}\left[\frac{\epsilon^4}{q}\ket{00}\bra{00} + \epsilon^2\ket{01}\bra{01}\right]\\
&=& \epsilon^2\left[\frac{1}{2}\ket{10}\bra{10}-\frac{\sqrt q}{2}(\ket{01}\bra{10}+\ket{10}\bra{01}) + \frac{1}{2}\ket{01}\bra{01} \right] + O(\epsilon^4)\\
&\simeq& \frac{1}{2}\ket{10}\bra{10}+\frac{1}{2}\ket{01}\bra{01}- \frac{\sqrt q}{2}(\ket{01}\bra{10}+\ket{10}\bra{01})+O(\epsilon^2)
\end{eqnarray*}
By adding and subtracting to both sides $\sqrt{q}(\ket{01}\bra{01}+\ket{10}\bra{10})/2$, we achieve
\begin{equation*}
  \tilde{\rho}_G\simeq \sqrt{q}\ket{\Psi_-}\bra{\Psi_-}+(1-\sqrt q)\frac{\ket{01}\bra{01}+ \ket{10}\bra{10}}{2}+O(\epsilon^2).
\end{equation*}

According to the Horodecki criterion (see Section \ref{horocrit}), we can calculate that $\tilde{\rho}_G$ violates CHSH up to $2\sqrt{1+q}$ (for $\epsilon\rightarrow0$). Thus, state $\rho_G(q)$ is local for all $0\leq q \leq 1/2$ and exhibits hidden nonlocality for projective measurements for all $q>0$. In the unique other know example of hidden nonlocality (Chapter \ref{hidden}), the local dimension was $d\geq5$, making this case (with local dimension $d=2$) the simplest example of the phenomenon.

\section{A local model for POVMs}

Nevertheless, our main goal was not achieved yet. We cannot guarantee that such nonlocality is genuine, in the sense expressed in the beginning of the chapter: the local model constructed accounts only for projective measurements. In principle, a Bell violation can be obtained using POVMs.

However, we now proceed to present a protocol for the construction of a state which actually admits a local model for POVMs, based on another one which is local for projective measurements. To be more precise, once the local model for projective measurements is done for the initial state, we will show that a repeated usage of it provides the simulation of the POVMs for the second one. In this sense, it represents an optimization of utilization of the hidden variables and the response functions of the initial model.

Admitting that a given state $\rho_0$ of local dimension $d$ is local for projective dichotomic measurements, consider the state
\begin{equation}\label{seyia}
  \rho'=\frac{1}{d^2}[\rho_0+(d-1)(\rho_A\otimes\sigma_B + \sigma_A\otimes\rho_B)+ (d-1)^2\sigma_A\otimes\sigma_B],
\end{equation}
where $\sigma_A, \sigma_B$ are arbitrary $d$-dimensional states and $\rho_A=\tr_B(\rho_0), \rho_B=\tr_A(\rho_0)$. ($\rho'$ is indeed a state since is convex combination of states.)

Suppose that Alice and Bob receives as input the POVMs $M=\{M_a\}$ and $N=\{N_b\}$, respectively. Then the expected value $\p(ab)$ which we are interested to reproduce locally is given by
\begin{eqnarray}\label{marin}
  \tr(M_a\otimes N_b \rho') &=& \frac{1}{d^2}[\tr(M_a\otimes N_b \rho_0) + (d-1)\tr(M_a\rho_A)\tr(N_b\sigma_B)\\ &&+ (d-1)\tr(M_a\sigma_A)\tr(N_b\rho_B) + (d-1)^2\tr(M_a\sigma_A)\tr(M_b\sigma_B)].
\end{eqnarray}

Following Proposition \ref{shun}, we can assume that the elements of both POVMs are proportional to one-dimensional projectors, i.e., $M_a=\alpha_aP_a$ and $N_b=\beta_bQ_b$, with $\alpha_a, \beta_b \geq 0$. Note that by normalization of the POVM,
\begin{equation*}
  I=\sum_a{M_a}=\sum_a{\alpha_aP_a},
\end{equation*}
which implies
\begin{equation*}
  d=\tr(I)=\sum_a{\alpha_a\tr(P_a)=\sum_a{\alpha_a}}.
\end{equation*}
Similarly, we find $\sum_b{\beta_b}=d$.

We will show that $\rho'$ is local for POVMs through the following protocol. The protocol is written for Alice, but Bob follows the same procedure.

\begin{proto}\label{proto geneva 2}

\begin{enumerate}
  \item Alice chooses projector $P_a$ with probability $\alpha_a/d$ (notice that $\sum_a{\alpha_a/d}=1$).
  \item She simulates the dichotomic projective measurement $\{P_a, I-P_a\}$ on state $\rho_0$.
  \item If the output of the simulation corresponds to $P_a$, she outputs $a$.
  \item Otherwise, she outputs (any) $a$ with probability $\tr(M_a\sigma_A)$.
\end{enumerate}
\end{proto}

\begin{teo}\label{sheena} If there is a local model for dichotomic measurements over $\rho_0$, then there is a local model that simulates the correlations exhibited by state $\rho'$ given in Eq. (\ref{seyia}) upon the measurement of local POVMs.
\end{teo}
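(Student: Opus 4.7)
The plan is to verify directly that running Protocol \ref{proto geneva 2} in parallel on Alice's and Bob's sides reproduces the joint probabilities $\tr(M_a\otimes N_b\,\rho')$ displayed in Eq. (\ref{marin}). First, by Proposition \ref{shun} I may assume the POVMs consist of rank-one elements $M_a=\alpha_a P_a$ and $N_b=\beta_b Q_b$, with the normalization identities $\sum_a \alpha_a=\sum_b \beta_b = d$. Let $a'$ and $b'$ denote the projectors privately selected by Alice and Bob in Step 1 of the protocol. Since the local model for dichotomic projective measurements on $\rho_0$ is available, Alice and Bob can \emph{jointly} simulate the dichotomic projective measurement $\{P_{a'},I-P_{a'}\}\otimes\{Q_{b'},I-Q_{b'}\}$, which is their Step 2.

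Conditioning on the choices $(a',b')$, I will split the analysis into four branches depending on whether the simulation on each side succeeds (outputs the chosen projector) or fails. When Alice succeeds she outputs $a'$; when she fails she outputs $a$ with probability $\tr(M_a\sigma_A)$ (and similarly for Bob). The conditional joint probability of declaring $(a,b)$ therefore breaks as a sum of four pieces weighted by $\tr(P_{a'}\otimes Q_{b'}\rho_0)$, $\tr(P_{a'}\otimes(I-Q_{b'})\rho_0)$, $\tr((I-P_{a'})\otimes Q_{b'}\rho_0)$ and $\tr((I-P_{a'})\otimes(I-Q_{b'})\rho_0)$ respectively. Multiplying by the sampling probability $\alpha_{a'}\beta_{b'}/d^2$ and summing over $(a',b')$ gives the unconditional joint probability.

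The main work is to simplify these four sums. The ``both succeed'' term immediately yields $\tr(M_a\otimes N_b\,\rho_0)/d^2$. For the ``Alice succeeds, Bob fails'' term, the key simplification is
\begin{equation*}
\sum_{b'}\beta_{b'}\tr\bigl(P_a\otimes(I-Q_{b'})\rho_0\bigr) = d\,\tr(P_a\rho_A) - \tr(P_a\rho_A) = (d-1)\tr(P_a\rho_A),
\end{equation*}
using $\sum_{b'}\beta_{b'}Q_{b'}=I$ and $\tr_B(\rho_0)=\rho_B$, so that this branch contributes $(d-1)\tr(M_a\rho_A)\tr(N_b\sigma_B)/d^2$; an identical calculation handles the mirror case. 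For the ``both fail'' term, expanding $(I-P_{a'})\otimes(I-Q_{b'})$ and collecting yields $d^2-d-d+1=(d-1)^2$, contributing $(d-1)^2\tr(M_a\sigma_A)\tr(N_b\sigma_B)/d^2$. Summing the four pieces reproduces Eq. (\ref{marin}) exactly.

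The main obstacle is conceptual rather than computational: one has to recognize that the local model hypothesis must be applied to the \emph{joint} dichotomic measurement $\{P_{a'},I-P_{a'}\}\otimes\{Q_{b'},I-Q_{b'}\}$ (so that the four outcome probabilities on $\rho_0$ are simultaneously reproduced by the shared hidden-variable source), and that the ``failure'' randomizations on each side are sampled independently of everything else using only local shared randomness. Once this is granted, the rest is the bookkeeping above, and the identification with Eq. (\ref{marin}) is automatic.
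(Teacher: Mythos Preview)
Your proposal is correct and follows the same approach as the paper: both run Protocol \ref{proto geneva 2} on each side, split into the four success/failure branches, and verify that the resulting terms rebuild Eq. (\ref{marin}). Your version is in fact more careful --- by summing explicitly over the selected labels $(a',b')$ you make transparent why the cross terms collapse to $(d-1)\tr(P_a\rho_A)$ and $(d-1)^2$, whereas the paper jumps directly to the marginal probabilities $1/d$ and $(d-1)/d$; note only the small slip ``$\tr_B(\rho_0)=\rho_B$'' in your key simplification, which should read $\tr_B(\rho_0)=\rho_A$.
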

\begin{proof} Suppose Alice and Bob are able to simulate locally the correlations of projective dichotomic measurements for the state $\rho_0$. Let the Protocol \ref{proto geneva 2} be performed and fix $a,b$ of the set of possible outcomes. Our goal is to show that $\p(a,b)=\tr(M_a\otimes N_b\rho')$, where the left side probability is calculated according to the protocol.

First, notice that the probability that Alice (and the same holds for Bob) outputs in step 3 (any outcome) is
\begin{equation*}
  \sum_a{\frac{\alpha_a}{d}\tr(P_a\rho_A)}=\frac{1}{d}\sum_a{\tr(M_a\rho_A)}=\frac{1}{d}.
\end{equation*}

Since each part can output in step (3) or in step (4), four possibilities to obtain outputs $a$ and $b$ arise.

\begin{itemize}
  \item Both Alice and Bob output in step 3, which occurs with probability
  \begin{equation*}
    \frac{\alpha_a}{d}\frac{\beta_b}{d}\tr(P_a\otimes Q_b \rho_0)=\frac{1}{d^2}\tr(M_a\otimes M_b \rho_0);
  \end{equation*}
  \item Alice outputs in step 3 and Bob in step 4. Since Bob outputs in step 4 with probability $(d-1)/d$, this occurs with probability
  \begin{equation*}
    \frac{\alpha_a}{d}\tr(P_a\rho_A)\frac{d-1}{d}\tr(N_b\sigma_B)= \frac{d-1}{d^2}\tr(M_a\rho_A)\tr(N_b\sigma_B);
  \end{equation*}
  \item Bob outputs in step 3 and Alice in step 4, which occurs with probability
  \begin{equation*}
    \frac{d-1}{d^2}\tr(N_b\rho_B)\tr(M_a\sigma_A);
  \end{equation*}
  \item Both output in step 4, which occurs with probability
  \begin{equation*}
    \frac{(d-1)^2}{d^2}tr(M_a\sigma_A)\tr(N_b\sigma_B).
  \end{equation*}
\end{itemize}

Altogether, we find that $\p(a,b)$ matches accurately Eq. (\ref{marin}).

\end{proof}

\section{Revealing genuine nonlocality}

Theorem \ref{sheena} says that the same local model that reproduces the correlations of dichotomic projective measurements for $\rho_0$ can reproduce the correlations of a POVM applied to $\rho'$, if the right protocol is executed. We are now going to use this result to compile everything we saw in the last sections and construct a local state for POVM which violates the CHSH inequality after filtering, proving that a sequence of measurements is indeed necessary in certain cases.

In Eq. (\ref{seyia}), if we set the local dimension to be $d=2$ and $\rho_0=\rho_G(q)$ (given in Eq. (\ref{ikki})), which is local for projective measurements for $q\leq1/2$, we will obtain the reduced states
\begin{eqnarray*}
  \rho_{A}(q)&=&q\frac{I_{2\times2}}{2}+(1-q)\ketbraa0,\\
  \rho_{B}(q)&=&q\frac{I_{2\times2}}{2}+(1-q)\frac{I_{2\times2}}{2}= \frac{I_{2\times2}}{2}.
\end{eqnarray*}
Setting also $\sigma_A=\sigma_B=\ketbraa0$, we define the state
\begin{eqnarray*}
 \rho_G'(q)&=&\frac{1}{4}\left[q\ketbraa{\Psi_-} + (1-q)\ketbraa0\otimes\frac{I_{2\times2}}{2} + q\frac{I_{2\times2}}{2}\otimes\ketbraa0+(1-q)\ketbraa{00} \right.\\&&\left.+ \ketbraa0\otimes\frac{I_{2\times2}}{2}  + \ketbraa{00} \right] \\
 &=& \frac{1}{4}\left[q\ketbraa{\Psi_-}+(2-q)\ketbraa0\otimes\frac{I_{2\times2}}{2} +q\frac{I_{2\times2}}{2}\otimes\ketbraa0 +(2-q)\ketbraa{00}\right].
\end{eqnarray*}

We know that, by construction, $\rho_G'(q)$ with $q\leq1/2$ is local, concerning to general measurements. We are now going to show that, despite of that, it violates the CHSH inequality after filtering, therefore exhibiting genuine hidden nonlocality.

Applying filters of the form (\ref{ayoros}) with $\delta=\epsilon/\sqrt q$ to state $\rho_G'$, we obtain the unnormalized state

\begin{eqnarray*}
% \nonumber to remove numbering (before each equation)
  \tilde{\rho}_G' &=& F_A\otimes F_B \rho_G' F_A\otimes F_B\\
   &=& \frac{1}{4}\left[ qF_A\otimes F_B\ketbraa{\Psi_-}F_A\otimes F_B + (2-q)(F_A\ketbraa0 F_A)\otimes(F_B\frac{I}{2}F_B)\right. \\
   && \left. +q(F_A\frac{I}{2}F_A)\otimes(F_B\ketbraa0 F_B)+ (2-q)(F_A\ketbraa0F_A)\otimes(F_B\ketbraa0F_B)  \right] \\
\end{eqnarray*}
\begin{eqnarray*}
   &=& \frac{1}{4}\left[ \frac{q}{2}[\epsilon^2\ketbraa{01}+\frac{\epsilon^2}{q}\ketbraa{10}- \frac{\epsilon^2}{\sqrt q}(\ket{01}\bra{10}+\ket{10}\bra{01})] \right. \\
   && \left. +\frac{2-q}{2}\left(\frac{\epsilon^4}{q}\ketbraa{00}+\epsilon^2\ketbraa{01}\right)\right. \\
   && \left. +\frac{q}{2}\left(\frac{\epsilon^4}{q}\ketbraa{00} +\frac{\epsilon^2}{q}\ketbraa{10} +\frac{2-q}{2}\frac{\epsilon^4}{q}\ketbraa{00}\right) \right]\\
   &=& \frac{\epsilon^2}{4}\left[ \frac{q}{2}\ketbraa{01}+\frac{1}{2}\ketbraa{10} -\frac{\sqrt q}{2}(\ket{01}\bra{10}+\ket{10}\bra{01}) + \ketbraa{01} \right. \\
   && \left. -\frac{q}{2}\ketbraa{01} + \frac{1}{2}\ketbraa{10} \right]+O(\epsilon^4)\\
   &\simeq& \frac{1}{2}\left[ \ketbraa{01} + \ketbraa{10} -\frac{\sqrt q}{2} (\ket{01}\bra{10}+\ket{10}\bra{01}) \right]+O(\epsilon^2)\\
   &=&\frac{\ketbraa{01}+\ketbraa{10}}{2}-\frac{\sqrt q}{2}\frac{\ket{01}\bra{10}+\ket{10}\bra{01}}{2}+O(\epsilon^2)\\
   &=& \frac{\sqrt q}{2}\ketbraa{\Psi_-} + \left( 1-\frac{\sqrt q}{2} \right)\frac{\ketbraa{01}+\ketbraa{10}}{2}+O(\epsilon^2).
\end{eqnarray*}

The resulting state $\tilde{\rho}_G'$ violates the CHSH inequality up to $2\sqrt{1+q/4}$ (for $\epsilon\rightarrow0$) according to the Horodecki criterion. Hence, sequential measurements are necessary to reveal the nonlocality of $\rho_G'$, which therefore exhibits genuine hidden nonlocality.

\begin{figure}[h!]
  \centering
  % Requires \usepackage{graphicx}
  \includegraphics[width=18cm]{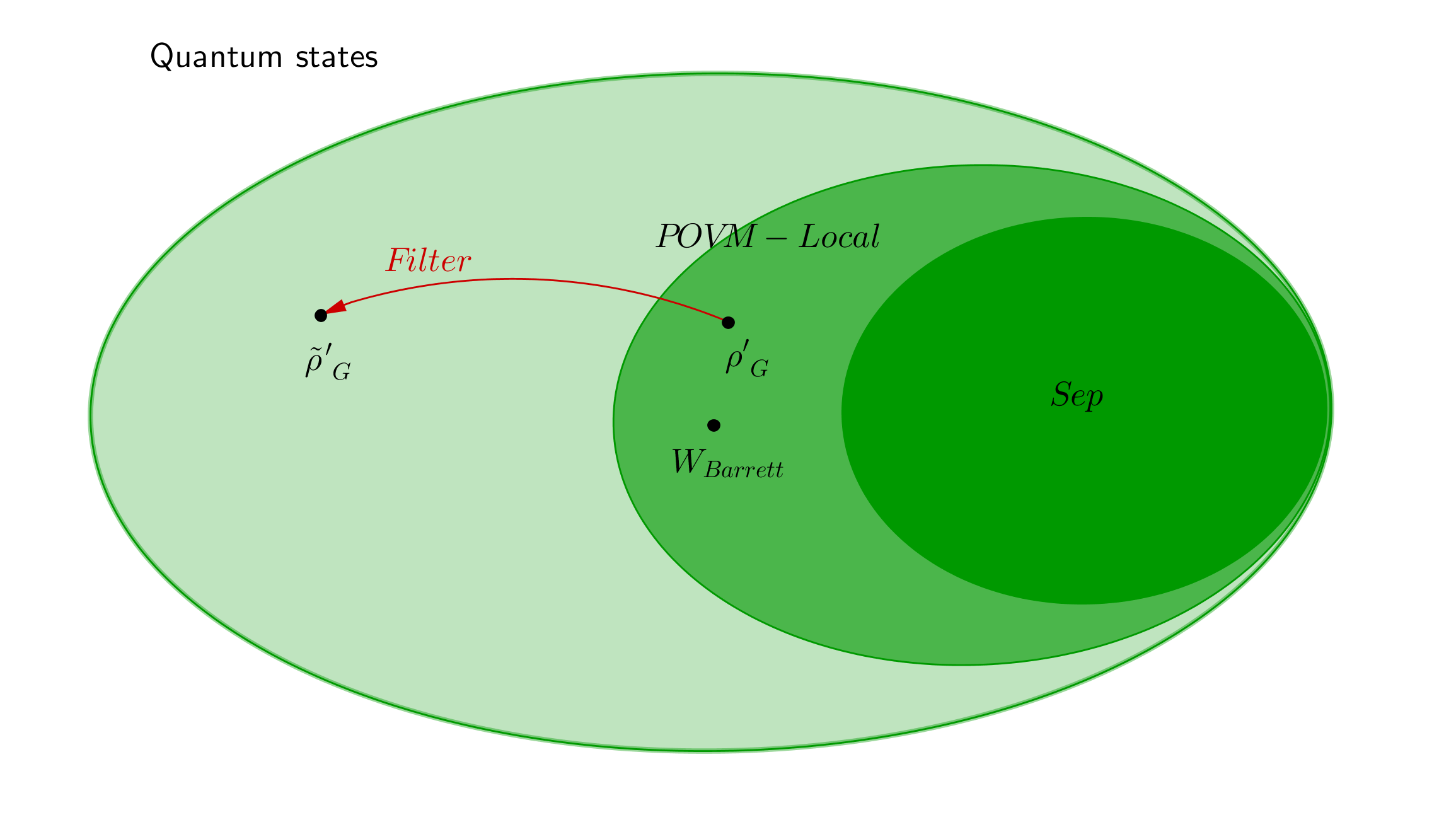}\\
  \caption{\small{The successful application of a local filtering (represented by the arrow in red) on the POVM-local state $\rho_G'$ results in a state $\tilde{\rho}_G'$ which violates CHSH, and thus does not belong to the set of local states.}}
\end{figure}

\section{Genuine and maximal}

Until now, throughout this chapter we have dealt with quantum states of $\mathbb{C}^2\otimes \mathbb{C}^2$. We conclude the chapter by saying that in \cite{HQBB} the authors also present an even more extremal case of this phenomenon, occurring with a state of $\mathbb{C}^3\otimes\mathbb{C}^2$, a qutrit-qubit state. Using the presented machinery, they showed that a state with genuine hidden nonlocality can violate maximally the CHSH inequality.

The state
\begin{equation*}
  \rho_E=q\ketbraa{\Psi_-}+(1-q)\ketbraa2\otimes\frac{I_2}{2}
\end{equation*}
can be shown to be local for dichotomic projective measurements (a protocol similar to Protocol \ref{proto geneva} should be performed). Then, applying Protocol \ref{proto geneva 2} to $\rho_E$ and taking $\sigma_A=\sigma_B=\ketbraa2$ we can simulate POVMs on the state
\begin{equation}\label{ayoria}
  \tilde{\rho_E}=\frac{1}{9}\left[ q\ketbraa{\Psi_-}+(3-q)\ketbraa2\otimes\frac{I_2}{2} +2q\frac{I_2}{2}\ketbraa2 + (6-2q)\ketbraa{22} \right],
\end{equation}
constructed via Eq. (\ref{seyia}).

To reveal the nonlocality of the above state, we apply filters of the form $F_A=F_B=\ketbraa0+\ketbraa1$. After successful filtering, the pure singlet state $\ketbraa{\Psi_-}$ is obtained, which we know to violate maximally CHSH. Hence, state (\ref{ayoria}) has genuine and maximal hidden nonlocality.

\addcontentsline{toc}{chapter}{Conclusion}

\chapter*{Conclusion}

In this master's thesis, our focus was to present some examples of local models and hidden nonlocality, providing the calculations underneath them and emphasizing the motivations as much as possible. In particular, the most seminal work on the subject, Werner's local model, was originally presented in a very knotty way. Perhaps a more detailed study of it makes easier to introduce new researchers to the topic; this work maybe is a first step towards that direction.

Naturally, several works about the subject do not appear here, and others are just shortly cited. Specifically in the case of local models, this is a bit frustrating, since the universe of local models is not very large. Barrett's model \cite{Bar}, for example, deserves more space, given its importance. Another important work is Ref. \cite{Mafalda}, in which it is studied the robustness of nonlocality to noise and it is presented a local model for isotropic states, which inspired Hirsch \emph{et al.} \cite{HQBB}. For the tripartite case, Ref. \cite{tripartite} presents a very interesting local model for projective measurements. A very complete and up-to-date review on the whole local models topic can be found in Ref. \cite{review}. About activation of nonlocality, we left a whole branch untouched, where are considered multiple copies of a state and quantum networks, in order to culminate in a Bell inequality violation \cite{BCPSW,Mur}.

What becomes clear is that this is a research area with a lot of potential, where representative examples are welcome and, with rare exceptions, general results still are only conjectured. For instance, is it possible to create a local model with hidden variables other than unit vectors? Is there any projective-local state which cannot simulate a POVM, meaning that POVMs do offer advantage in order to detect the nonlocality of quantum states? Is it possible to construct a local model for POVMs, in the multipartite case? Is there a quantum entangled state completely local, that is, that does not violate any Bell inequality, even in these more general scenarios of sequences of measurements and multiple copies? Or are entanglement and nonlocality, after all, one and the same thing, in this broader sense?\footnote{Related to this last question, there is the Peres Conjecture, which claims that states with bound entanglement are completely local. Recently, the conjecture was disproved \cite{peres}.} There is work to be done.

%\addcontentsline{toc}{chapter}{Ap\^endices}
%\appendix
%\include{app1}
%\include{app2}

%\addcontentsline{toc}{chapter}{Refer\^encias}
%\bibliographystyle{abnt}
%\bibliography{bibliografia}

\end{document}